\newcommand{\aE}{\mathrm{AE}}
\renewcommand\footnotemark{}
\begin{document}

\def\spacingset#1{\renewcommand{\baselinestretch}%
{#1}\small\normalsize} \spacingset{1}


\title{Optimal Subspace Estimation Using Overidentifying Vectors via Generalized Method of Moments}


\author{Jianqing Fan$^*$
\thanks{Address: Department of ORFE, Sherrerd Hall, Princeton University, Princeton, NJ 08544, USA, e-mail: \textit{jqfan@princeton.edu}, \textit{yiqiaoz@princeton.edu}. The research was partially supported by NSF grants
DMS-1712591 and DMS-1662139 and NIH grant R01-GM072611.} and Yiqiao Zhong$^*$
\medskip\\{\normalsize $^*$Department of Operations Research and Financial Engineering,  Princeton University}
}
\date{}
\maketitle
\vspace*{-0.3 in}


\begin{abstract}
Many statistical models seek relationship between variables via subspaces of reduced dimensions. For instance, in factor models, variables are roughly distributed around a low dimensional subspace determined by the loading matrix; in mixed linear regression models, the coefficient vectors for different mixtures form a subspace that captures all regression functions; in multiple index models, the effect of covariates is summarized by the effective dimension reduction space.

Such subspaces are typically unknown, and good estimates are crucial for data visualization, dimension reduction, diagnostics and estimation of unknown parameters. Usually, we can estimate these subspaces by computing moments from data. Often, there are many ways to estimate a subspace, by using moments of different orders, transformed moments, etc. A natural question is: how can we combine all these moment conditions and achieve optimality for subspace estimation?

In this paper, we formulate our problem as estimation of an unknown subspace $\cS$ of dimension $r$, given a set of overidentifying vectors $\{ \bv_\ell \}_{\ell=1}^m$ (namely $m \ge r$) that satisfy $\E \bv_{\ell} \in \cS$ and have the form
$$
\bv_\ell = \frac{1}{n} \sum_{i=1}^n \bff_\ell(\xx_i, y_i),
$$
where data are i.i.d.\ and each function $\bff_\ell$ is known. By exploiting certain covariance information related to $\bv_\ell$, our estimator of $\cS$ uses an optimal weighting matrix and achieves the smallest asymptotic error, in terms of canonical angles. The analysis is based on the generalized method of moments that is tailored to our problem. Our method is applied to aforementioned models and distributed estimation of heterogeneous datasets, and may be potentially extended to analyze matrix completion, neural nets, among others.\end{abstract}

\noindent%
{\it Keywords:}  GMM, ensemble method, aggregation, eigenvectors, factor model, mixture model, index model, distributed estimation.
\vfill

\newpage
\sloppy

\section{Introduction}\label{sec:intro}

\subsection{Motivation of subspace estimation}\label{sec:motiv}

In statistics, many models are used to infer from data as simple relationship between variables as possible. Arguably, it is usually easier to conduct statistical analysis and interpret results if we find a simple relationship. For example, a family of well-studied statistical models is factor models:
\begin{equation}\label{model:factor}
\xx_i = \bB \bz_i + \bepsilon_i, \quad i \in [n] = \{1,2,\ldots,n\}.
\end{equation}
where $\xx_i \in \R^p$, $\bz_i \in \R^r$ and $r$ is usually (much) smaller than $p$. This model is useful since $\xx_i$ is characterized by a smaller number of variables $\bz_i$ (which are called factors) via a linear transformation $\bB$ (which is called the loading matrix), plus unexplained variable or noise $\bepsilon_i$. In particular, we have
\begin{equation*}
\xx_i - \bepsilon_i = \bB \bz_i \in \spann(\bB),
\end{equation*}
where $ \spann(\bB)$ is the linear span of column vectors of $\bB$. With this model, $\xx_i$ is roughly distributed around a $r$-dimensional subspace (if $\bB$ has full column rank), and the coordinates of $\xx_i$ on that subspace are determined by the factors $\bz_i$. Thus, $\spann(\bB)$ can be viewed as the intrinsic geometric characteristics of this model. Once this subspace is determined, the degree of freedom is reduced from $O(pr)$ to $O(r^2)$, and then statistical inference on $\bB$ and $\bz_i$ becomes easier. For an overview on dimensionality reduction, see \cite{Li18}.

Another family of models that receives much attention recently is mixture models. Despite having a long history, until recently theoretical analysis about initialization and convergence has been elusive \citep{AGHKT14, BalWaiYu17}. Consider a simple mixed linear model:
\begin{equation*}
y_i = \sum_{k=1}^K \bone\{ z_i = k \} ( \xx_i^T \bbeta_k + \epsilon_i),
\end{equation*}
where $\xx_i, y_i$ are observed and $z_i$, taking values in $\{1,2,\ldots, K\}$, is not observed. A special case is $K=1$, in which $z_i$ is a constant, and the model reduces to the linear regression model. The difficulty for general $K$ stems from the unobserved latent variable $z_i$. Although it is not easy to estimate and analyze each $\bbeta_k$, the subspace $\spann\{\bbeta_1,\ldots,\bbeta_K\}$ can be estimated fairly easily using the first and second moments, which is important for subsequent estimation of each $\bbeta_k$ \citep{YiCarSan16,  SedJanAna16}. More generally, multiple index models assume a semiparametric model
\begin{equation}\label{model:mim}
y_i = G(\xx_i^T \bbeta_1,\ldots, \xx_i^T \bbeta_K, \epsilon_i),
\end{equation}
where the form of $G$ is unknown. In all these examples, the subspace
\begin{equation*}
\cS = \spann(\bB) \quad \text{or} \quad \cS = \spann\{\bbeta_1,\ldots,\bbeta_K\}
\end{equation*}
plays a pivotal role, since it captures and summarizes the information of one part of the variables (often covariates) in relation to the other. This motivates us to consider the problem of estimating $\cS$ alone in a general setting, which we call \textit{subspace estimation} in this paper.

The advantage of studying this problem is three-fold. (1) The subspace $\cS$ is intrinsic geometrically, which is invariant to rotation, and therefore there is no identifiability issue for subspace estimation. (2) After obtaining a good estimate of $\cS$, it is easier for statisticians to visualize data, to reduce data dimensions, to estimate unknown parameters (with a largely reduced degree of freedom), and to study model diagnostics. In this aspect, subspace estimation can be viewed as an intermediate estimation problem. (3) It is less difficult to estimate a subspace than unknown parameters, because instead of finding the maximum likelihood estimator (MLE) or running an EM algorithm, we need only moment conditions from the data, which are much easier to compute.

Relevant to the third point, often we find ourselves in situations where we have many ways to estimate $\cS$. For example, in factor models, the covariance matrix is usually used to determined $\spann(\bB)$; however, if $\xx_i$ has nonzero means, the non-centered form is also informative since $\E \xx_i \in \spann(\bB)$. This observation has led to a recent work by \cite{LetPel17}. And in multiple index models, it is known that transforming the moments can be helpful \citep{Li92}.
These considerations lead naturally to our problem formulation.

\subsection{Problem formulation}\label{sec:formulate}

Suppose we have data $\{(\xx_i, y_i)\}_{i=1}^n$, where $y_i$ is the response variable, $\xx_i$ is the vector of covariates (or predictors), and $n$ is the sample size. A special case is that all $y_i$ are set to a constant, or equivalently we omit $y_i$ altogether (a.k.a.\ unsupervised learning). Assume that $(\xx_i, y_i)$ is i.i.d., with an unknown distribution $P$. Our goal is to estimate an unknown linear subspace $\cS = \cS(P)$ of $\R^p$ with dimension $\dim(\cS) = r \ge 1$, where $r$ is fixed and unknown, from a set of overidentifying conditions:
\begin{align}
& \bv_\ell = \frac{1}{n} \sum_{i=1}^n \bff_\ell(\xx_i, y_i) \in \R^p, \qquad \ell \in [m], \label{def:v1} \\
&\text{with} ~~ \E \bv_\ell \in \cS, ~ \forall \, \ell \in [m], \label{def:v2}
\end{align}
where $m \ge r$, and $\bff_1(\xx_i, y_i), \ldots, \bff_m(\xx_i, y_i)$ are known functions with finite second moments. Often, the vector $\bv_\ell$ is the empirical moments of $\xx_i$ and $y_i$, but our definition here is very general, as $\bff_\ell$ is a generic function. In general, we have more than enough conditions to determine $\cS$, since $m \ge r$ and the linear span of $\E \bv_1, \ldots, \E\bv_m$ is exactly $\cS$ except for degenerate cases. Thus, it is reasonable to expect a good estimator $\hat \cS$ from the statistics $\bv_1, \ldots, \bv_m$. The question, then, is how to produce an estimator in an \textit{optimal} way.

Note that we do not make assumptions on $\xx_i$ or $y_i$ directly, other than the i.i.d.\ assumption. Assumptions will be made, and results will be stated, in terms of $\bff_\ell(\xx_i, y_i)$. Also note that $\bv_\ell$ is not required to have unit $\ell_2$ norm, since it is implicitly rescaled in our procedure (see Section~\ref{sec:procedure}).

\subsection{Combining overidentifying vectors optimally}\label{sec:combine}

\begin{figure*}[b!]
    \centering
    \includegraphics[scale=0.45]{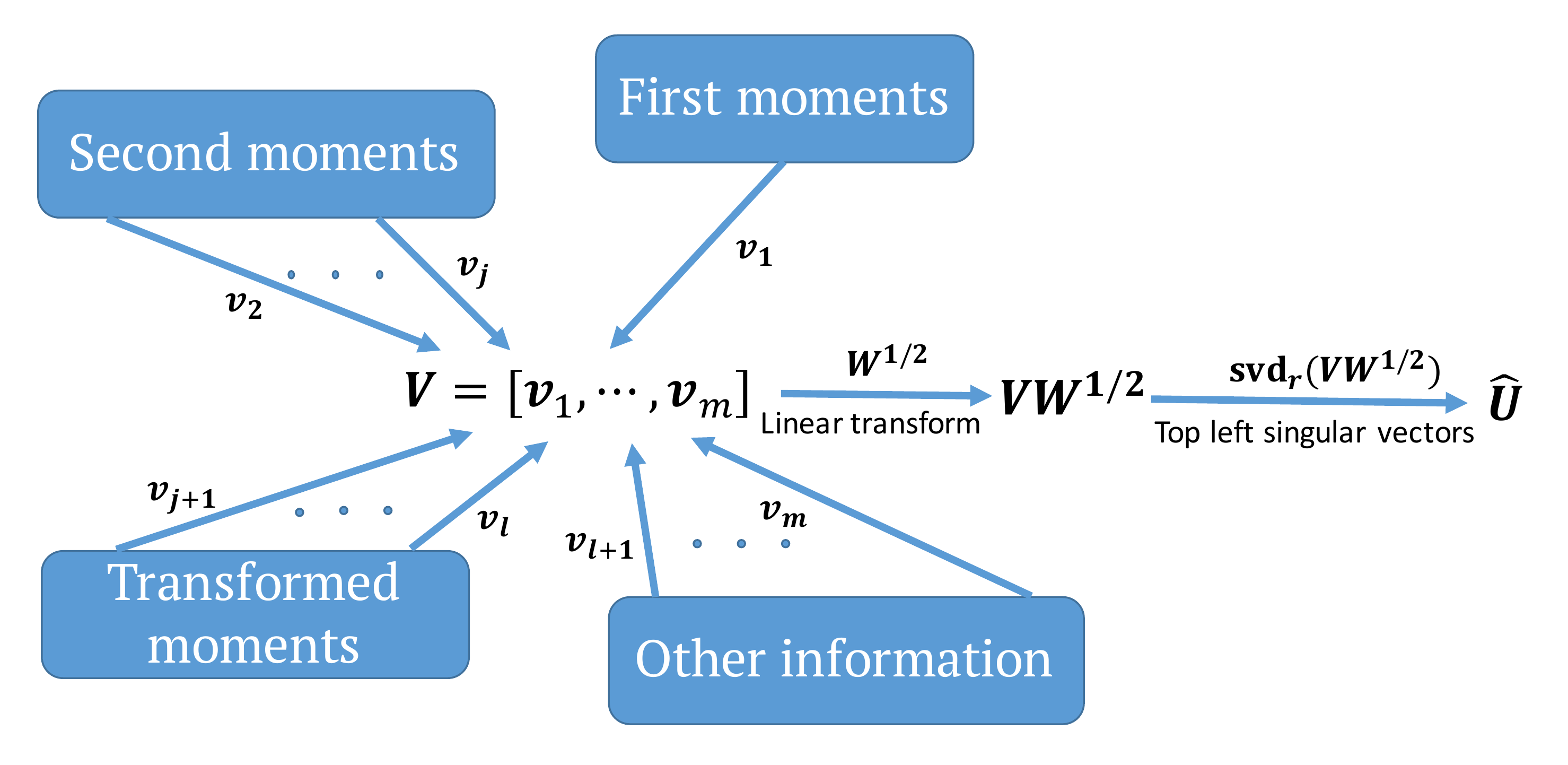}
    \caption{A diagram showing the procedure}\label{fig:diagram2}
\end{figure*}

Let us approach this problem by first assuming $r$ is known; otherwise, it can be consistently estimated (see Section~\ref{sec:dimest}). In general, to determine a subspace of dimension $r$, we need $r$ linearly independent vectors. Now given a possibly large pool of $\bv_\ell$, we have to look for the common and dominant space. A natural way is to consider singular vectors. Let $\svd_r(\cdot)$, and respectively $\eigen_r(\cdot)$, denote top $r$ left singular vectors and eigenvectors of a matrix. Suppose we compute
\begin{equation*}
\svd_r \left([\bv_1,\ldots,\bv_m] \right) \quad \text{or equivalently} \quad \eigen_r \left( \sum_{\ell=1}^m \bv_\ell \bv_\ell^T \right).
\end{equation*}
Then the resulting $r$ singular (or eigen-)vectors make use of all given moment conditions. Had these $\bv_\ell$ been i.i.d., this method would be the same as computing the principal components. However, we do not have the luxury to make such assumptions. More often than not, they have different variances, and they may be arbitrarily dependent. To take it into account, we modify our method with a symmetric weighting matrix $\bW = (w_{j\ell}) \in \R^{m \times m}$, and compute
\begin{equation*}
\svd_r \left(  [\bv_1,\ldots,\bv_m] \bW^{1/2} \right) \quad \text{ equivalently} \quad \eigen_r \left( \sum_{j, \ell=1}^m w_{j\ell} \bv_j \bv_\ell^T \right) = \eigen_r \big (\bV \bW \bV^T \big ),
\end{equation*}
where $\bW^{1/2} \in \R^{m \times m}$ is such that its square equals $\bW$. This method is the same as first applying a linear transformation $\bW^{1/2}$ to $[\bv_1,\ldots,\bv_m]$ and then implementing the unweighted method. For any fixed $\bW$, the transformation results in $m$ new vectors that satisfy \eqref{def:v2}. Our task, therefore, is to find an optimal $\bW$.

As we will show, in the sense of large sample asymptotic theory, the optimal choice of $\bW$ is the inverse of certain pseudo-covariance matrix of these vectors (or equivalently $\bff_\ell$)---see \eqref{def:sigma0}. Intuitively, we scale down vectors with large variances, and scale them up otherwise. Our optimality results for such $\bW$ include, for example, the smallest asymptotic expectation of
\begin{equation*}
d(\hat \cS, \cS)^2 = \| \cP_{\hat \cS} - \cP_{\cS} \|_F^2,
\end{equation*}
where $ \cP_{\hat \cS}, \cP_{\cS}$ are projections to a subspace. This criterion is similar to the mean squared error, and can be also expressed in terms of canonical angles---see \eqref{eq:mse}.

Our analysis is based on, but not directly derived from, the generalized method of moments (GMM). Although PCA for subspace estimation has a long history and a well-established theory \citep{Pearson01, Hotelling33, Anderson63}, overidentifying vectors are much less studied. General optimality results of GMM are developed in the seminal paper by \cite{Hansen82}, which, however, does not directly apply to our problem. While GMM is widely studied both theoretically and empirically, previous works mostly tackle problems where the MLE of unknown parameters is not available or very difficult to compute. Our paper shows that GMM with a suitable form of weighting matrices produces a simple closed-form estimator, which is useful as an intermediate estimator.

We shall call our proposed method that uses the optimal $\bW$ the \textit{GMM subspace estimator}.

\subsection{Related works and paper organization}\label{sec:related}

The paper is organized as follows. In Section~\ref{sec:subspGMM}, we propose our estimation procedure whose applications are elucidated in Section~\ref{sec:ex}. In Section~\ref{sec:theory}, we present our theoretical analysis and show optimality of our procedure under nearly minimal assumptions.  Numerical simulations and a dataset example are presented in Section~\ref{sec:sim} and~\ref{sec:real} to support our theory. Finally, in Section~\ref{sec:disc}, we discuss possible extensions.

\section{Subspace GMM estimator}\label{sec:subspGMM}


We derive our subspace estimator in a way similar to the classical GMM theory. However, there are several departures: (1) matrix representation of subspaces involves identification of matrices up to rotation, since inherently the parameter space is a Grassmann manifold; (2) a specific blockwise form of the weighting matrix allows simple and fast computation, while enjoying optimality results---see Sections~\ref{sec:fast} and~\ref{sec:optproc}; (3) for this particular estimation problem, we develop clear methods and analysis for subtle issues, such as singularity of weighting matrices---see Section~\ref{sec:singular}.

\subsection{Estimation via GMM framework}

Before formally deriving the GMM estimator, we first address the representation of the subspace $\cS$. Let $O(p,r)$ be the set of $p\times r$ matrices consisting of orthonormal column vectors. Every $\cS$ is associated with some $\bU^* \in O(p,r)$ whose columns lie in $\cS$, that is,
\begin{equation}\label{def:Ustar}
\bU^* = [ \bu_1^*, \bu_2^*, \ldots, \bu_r^*] \in \R^{p \times r}, \quad \text{with} ~ (\bU^*)^T \bU^* = \bI_r, ~ \text{and} ~ \bu_j^* \in \cS, ~ \forall\, j \in [r].
\end{equation}
Such matrix representation $\bU^*$ is unique up to a rotation: a $p \times r$ matrix satisfies the conditions in \eqref{def:Ustar} if and only if it has the form $\bU^* \bR$, where $\bR \in O(r)$ is an orthogonal matrix of size $r$. This is because any two sets of orthonormal bases can be mapped to each other by an orthogonal matrix. It is clear, then, that the projection matrix $\bP_{\bU^*}^\bot := \bI_p - \bU^* (\bU^*)^T \in \R^{p \times p}$ is unique.

Therefore, we can represent the set of all possible $\cS$ as $O(p,r)/O(r)$, that is, the space $O(p,r)$ up to rotation. In differential geometry, the space $O(p,r)$ is called the Stiefel manifold, and $O(p,r)/O(r)$ is called the Grassmann manifold \citep{EdeAriSmi98}.

Using the representation of $\cS$, we can rewrite \eqref{def:v2} into the following estimating equations, which are usually called the \textit{population moment condition} in the GMM literature:
\begin{equation*}
\E (\bI_p - \bU^* (\bU^*)^T) \bv_\ell = \bzero, \qquad \forall\, \ell \in [m].
\end{equation*}
Now it is natural to introduce the GMM estimator as follows. For any matrix $\bU^* \in O(p,r)$ with orthonormal columns, we concatenate all vectors $(\bI_p - \bU \bU^T) \bv_\ell$ into a single vector:
\begin{equation}\label{def:gu}
\bg(\bU):= \left(  \begin{array}{c}
(\bI_p - \bU \bU^T) \bv_1 \\ \vdots \\ (\bI_p - \bU \bU^T) \bv_m
\end{array} \right)  \in \R^{\bar m}, \qquad \text{where} ~ \bar m := mp.
\end{equation}
We also denote $\bar p = rp$. Thus, the function $\bg$ is a nonlinear map from $\R^{\bar p}$ to $\R^{\bar m}$. For any positive definite matrix $\bW = (w_{k \ell} )_{k,\ell \in [m]} \succ \bzero$ in $\R^{m \times m}$, we define the weighting matrix as
\begin{equation}\label{def:Wbar}
\bar \bW = \bW \otimes \bI_p = \left( \begin{array}{ccc}
w_{11} \bI_p & \cdots & w_{1p} \bI_p \\
\vdots & \ddots & \vdots \\
w_{1m} \bI_p & \cdots & w_{mm} \bI_p
\end{array} \right) \in \R^{\bar m \times \bar m},
\end{equation}
where $\otimes$ is the Kronecker product. By construction, $\bar \bW$ is also a definite positive matrix (see Lemma~\ref{lem:Kron}). Here, both $\bW$ and $\bar \bW$ can be random. Following the classical GMM approach \citep{Hansen82}, we define the GMM estimator $\hat \bU$ as a minimizer of $Q(\bU)$, which is quadratic in $\bg(\bU)$ and thus quartic (i.e., involving fourth moments) in $\bU$.
\begin{equation}\label{def:Q}
\hat \bU \in \argmin_{ \bU \in O(p,r)} Q(\bU) \qquad \text{where} ~ Q(\bU) := [ \bg(\bU)]^T \bar \bW \bg(\bU) .
\end{equation}
A few remarks are in order. First, as a minimizer of \eqref{def:Q}, $\hat \bU$ is not uniquely defined. It is clear that $\hat \bU$ is a minimizer of \eqref{def:Q} if and only if $\hat \bU \bR$ is also a minimizer for any $\bR \in O(r)$. An alternative way that would circumvent this issue is to define $Q$ as a function of $\bU \bU^T$, since a minimizer of this function does not depend on the choice of $\bR$. However, for the ease of expositions and analysis, we focus on the current form defined in \eqref{def:Q}. Second, in the minimization problem, the parameter space $\{\bU: \bU \in O(p,r)\}$ is a Stiefel manifold in $\R^{pr}$, which has dimension $pr - r(r+1)/2$; whereas, the standard GMM framework usually assumes the parameter space contains some neighborhood (or ball) around $\bU^*$ \citep{Hall05}. Third, relevant to the first two remarks, one may consider defining $\hat \bU$ as a minimizer of $Q$ over the Grassmann manifold, which also resolved the non-uniqueness issue in the first remark. However, we avoid such treatment here due to heavy machinery from differential manifolds.

\subsection{Computing $\hat \bU$ via eigendecompostion}\label{sec:fast}

With the block matrix form of $\bar \bW$, we are able to simplify the optimization problem and compute $\hat \bU$ via the standard eigen-decomposition computation.

We observe that although the objective function in \eqref{def:Q} has a quartic form, it is equivalent to a quadratic function, and consequently, the optimization problem \eqref{def:Q} can be solved very efficiently. This is due to the fact that $\bI_p - \bU \bU^T$ is a projection matrix, so $(\bI_p - \bU \bU^T)^2 = \bI_p - \bU \bU^T$, and
\begin{align*}
[ \bg(\bU)]^T \bar \bW \bg(\bU) &= \sum_{k,\ell=1}^m w_{k \ell} \bv_k^T (\bI_p - \bU \bU^T)\bv_\ell =  \sum_{k,\ell=1}^m w_{k \ell} \bv_k^T\bv_\ell  - \sum_{k,\ell=1}^m w_{k \ell} \Tr(\bU^T \bv_\ell \bv_k^T \bU) \\
&= \sum_{k,\ell=1}^m w_{k \ell} \bv_k^T\bv_\ell  - \Tr \Big( \bU^T\sum_{k,\ell=1}^m w_{k \ell} \bv_\ell \bv_k^T \, \bU \Big),
\end{align*}
where $\Tr(\cdot)$ is the trace of a matrix. Hence, we obtain the following result.
\begin{prop}\label{prop:eigen}
Solving the optimization problem \eqref{def:Q} is equivalent to solving
\begin{equation}\label{opt:eigen}
\max_{\bU \in O(p,r)} \Tr(\bU^T \bV \bW \bV^T \bU), \qquad \text{where}~ \bV := [\bv_1,\ldots, \bv_m] \in \R^{p \times m}.
\end{equation}
The columns of its solution $\hat \bU$ are given by the top $r$ eigenvectors of $\bV \bW \bV^T$.
\end{prop}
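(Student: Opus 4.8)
The plan is to exploit the algebraic simplification already displayed just before the proposition statement, which reduces the quartic objective $Q(\bU)$ to an expression that is affine in the quadratic form $\Tr(\bU^T \bV \bW \bV^T \bU)$. First I would record that, using $(\bI_p - \bU\bU^T)^2 = \bI_p - \bU\bU^T$ together with the block form of $\bar\bW$, the computation in the excerpt gives
\begin{equation*}
Q(\bU) = \sum_{k,\ell=1}^m w_{k\ell}\, \bv_k^T \bv_\ell - \Tr\big(\bU^T \bV \bW \bV^T \bU\big).
\end{equation*}
The key observation is that the first summand does not depend on $\bU$ at all; it is a fixed constant determined by the data $\bV$ and the weight $\bW$. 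Therefore minimizing $Q(\bU)$ over $\bU \in O(p,r)$ is exactly equivalent to maximizing the second term $\Tr(\bU^T \bV \bW \bV^T \bU)$, which is \eqref{opt:eigen}. This establishes the first assertion of the proposition.

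For the second assertion, I would appeal to the standard variational (Ky Fan) characterization of the top eigenvectors of a symmetric matrix. The matrix $\bM := \bV \bW \bV^T$ is symmetric, since $\bW$ is symmetric; one checks $\bM^T = \bV \bW^T \bV^T = \bV \bW \bV^T = \bM$. The Ky Fan maximum principle states that for a symmetric matrix $\bM$ with eigenvalues $\lambda_1 \ge \cdots \ge \lambda_p$, the maximum of $\Tr(\bU^T \bM \bU)$ over all $\bU \in O(p,r)$ equals $\sum_{j=1}^r \lambda_j$, and the maximum is attained when the columns of $\bU$ span the top-$r$ eigenspace of $\bM$. Consequently the maximizer $\hat\bU$ is given by the top $r$ eigenvectors of $\bM = \bV \bW \bV^T$, which is the claimed conclusion.

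I would be careful to phrase the conclusion so that it respects the rotational ambiguity already flagged in the text: the maximizer is unique only up to right-multiplication by an orthogonal matrix $\bR \in O(r)$, and the top-$r$ eigenvector choice gives one valid representative. If there is an eigenvalue tie at the boundary (i.e.\ $\lambda_r = \lambda_{r+1}$), the top-$r$ eigenspace is not uniquely determined, but this is the usual degeneracy under which any valid basis of an optimal $r$-dimensional invariant subspace achieves the maximum; a brief remark suffices. No positive-definiteness of $\bW$ is needed for this purely variational argument, though it is assumed elsewhere for the optimality theory.

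The main point requiring care, and essentially the only substantive step, is the elimination of the quartic structure: one must verify that $\bU\bU^T$ being a projection is exactly what collapses the product $(\bI_p - \bU\bU^T)\bar\bW(\bI_p - \bU\bU^T)$-type expression into the affine form above, and that the cross terms reorganize correctly via the trace identity $\Tr(\bU^T \bv_\ell \bv_k^T \bU) = \bv_k^T \bU\bU^T \bv_\ell$. Since this algebra is already carried out in the excerpt preceding the statement, the proof itself is short: it is the reduction to \eqref{opt:eigen} followed by a one-line invocation of the Ky Fan principle. I do not anticipate any genuine obstacle here; the interest of the result lies in the simplification it enables, not in the difficulty of its proof.
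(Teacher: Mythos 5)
Your proposal is correct and follows essentially the same route as the paper: the algebraic reduction of $Q(\bU)$ to a constant minus $\Tr(\bU^T \bV \bW \bV^T \bU)$ (using idempotence of $\bI_p - \bU\bU^T$ and the block form of $\bar\bW$), followed by the standard variational (Ky Fan) characterization of top eigenvectors. Your added remarks on rotational ambiguity and eigenvalue ties are consistent with the paper's own discussion and do not change the argument.
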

This proposition provides a way of computing $\hat \bU$ given $\bW$. Here `top eigenvectors' refer to those eigenvectors with largest eigenvalues. Under a nondegeneracy assumption (Assumptions~\ref{ass:span}), $\bV \bW \bV^T$ has a non-vanishing gap between its $r$th and $(r+1)$-th largest eigenvalues for large $n$, and thus $\hat \bU$ is unique up to rotation. (A large eigen-gap also ensures numerical stability.) See details in Section~\ref{sec:dimest}.

We remark that the above simplification hinges on the block matrix form of $\bar \bW$. For a general $\bar m \times \bar m$ weighting matrix $\bar \bW$, there is no simple way to solve \eqref{def:Q}, which is a genuine quartic function in $\bU$.

\subsection{Two-step estimation procedure}\label{sec:procedure}

With an appropriate choice of the weighting matrix, the GMM produces, in general, asymptotically efficient estimators. This can be usually achieved through a two-step estimation procedure: the first step is obtaining a consistent estimator, which is used to compute an optimal weighting matrix $\bW$; and the second step is to solve an optimization problem with the $\bW$ computed from the first step. The first step is often easy, and in particular it is so for our problem, since eigen-decomposition of $\bV \bW \bV^T$ with any $\bW \succ \bzero$ leads to a consistent estimator. In the second step, we choose $\bW$ in a way such that it converges in probability (as $n \to \infty$) to $(\bSigma^*)^{-1}$, where $\bSigma^* = (\Sigma_{j\ell}^*) \in \R^{m \times m}$ is defined as
\begin{equation}\label{def:sigma0}
\Sigma_{j\ell}^* = \E \left[   \bff_j(\xx_i, y_i)^T \left(\bI_p - \bU^* (\bU^*)^T \right)  \bff_{\ell}(\xx_i, y_i)     \right], \quad \forall\, j, \ell \in [m].
\end{equation}
Note that this definition does not depend on $i$. In the matrix form, it is the same as
\begin{equation*}
\bSigma^* = \E \left[ \bF_i^T  \left(\bI_p - \bU^* (\bU^*)^T \right) \bF_i \right], \quad \text{where} ~ \bF_i = [ \bff_1(\xx_i, y_i), \ldots,  \bff_m(\xx_i, y_i)] \in \R^{p \times m}.
\end{equation*}
The particular structure of $\bSigma^*$ is closely related to the covariance matrix of $\bff_1,\ldots,\bff_m$, and it comes with optimality guarantees---see Section~\ref{sec:normal} and~\ref{sec:optproc}. Given an initial consistent $\hat \bU^0$, we can find a consistent estimator of $\bSigma^*$, denoted by $\hat \bSigma$, using the natural plug-in estimator, that is, the sample mean of  $ \bff_j(\xx_i,y_i)^T \big(\bI_p - \hat \bU^0 (\hat \bU^0)^T \big)  \bff_{\ell}(\xx_i,y_i)$ for all $j,\ell$, namely,
\begin{equation}\label{fan1}
    \hat \bSigma = n^{-1} \sum_{i=1}^n \bF_i^T  \left(\bI_p - \hat \bU^0 (\hat \bU^0)^T  \right) \bF_i.
\end{equation}
 Then, setting $\bW$ to be $(\hat \bSigma)^{-1}$ and solving the eigen-decomposition \eqref{prop:eigen} again, we obtain the final GMM estimator, denoted by $\hat \bU^{\GMM}$. Our estimation procedure is formally described as follows:

\begin{itemize}

\item[1.] Obtain an initial consistent estimator $\hat \bU^0$. For example, one can choose an initial weighting matrix $\bW = \bI_p$ in \eqref{opt:eigen} and compute top $r$ eigenvectors of $\bV \bV^T$.

\item[2.] For each $j, \ell \in [m]$ with $j \le \ell$, calculate
\begin{equation}\label{def:sigmahat}
\hat \Sigma_{j \ell} = \frac{1}{n} \sum_{i=1}^n \left[ \bff_j(\xx_i, y_i)^T \left(\bI_p - \hat \bU^0 (\hat \bU^0)^T \right)  \bff_{\ell}(\xx_i, y_i) \right]
\end{equation}
and set $\hat \Sigma_{\ell j} = \hat \Sigma_{j \ell}$. Form a matrix $\hat \bSigma = (\hat \Sigma_{j\ell}) \in \R^{m \times m}$, which is equivalent to computing \eqref{fan1} in matrix form.

\item[3.] If $r = \dim(\cS)$ is not known in advance, estimate $r$ as suggested in Section~\ref{sec:dimest}.

\item[4.] Set $\bW$ to be $(\hat \bSigma)^{-1}$ or \eqref{def:generalinv} below, and then compute the top $r$ eigenvectors of $\bV \bW \bV^T$. Obtain the estimator $\hat \bU^{\GMM} \in \R^{p \times r}$ by combining these eigenvectors into a matrix.

\end{itemize}

We make a fews remarks about the above estimation procedure. First, in principle there are many ways to produce a consistent $\hat \bU^0$. One may choose a subset of vectors from $\bv_1,\ldots, \bv_m$, if it is known that such subset spans the target subspace $\cS$. Often, standard (or vanilla) estimators serve as good initial estimates. An alternative approach, as often used in the GMM literature, is to conduct iterative GMM, which is to estimate a sequence of $\hat \bU^t$, where $\hat \bU^t$ is calculated based on $\hat \bU^{t-1}$.

Second, in cases where $r$ is not known, one may follow Section~\ref{sec:dimest} to estimate $r$, and the method only involves the eigenvalues of $\bV \bW \bV^T$. Thus, Step 3 does not require additional computational cost.

Third, if $\hat \bSigma$ is singular (namely, not invertible) or nearly singular, the following variant is preferred over $(\hat \bSigma)^{-1}$. Given a parameter $\delta_n \ge 0$, compute the eigen-decomposition of $\hat \bSigma = \bar \bU\, \diag\{ \bar \lambda_1,\ldots, \bar \lambda_m\} \bar \bU^T$, and set
\begin{equation}\label{def:generalinv}
\bW = \bar \bU \, \diag\{ \psi(\bar \lambda_1),\ldots ,\psi(\bar \lambda_m) \} \bar \bU^T , \quad \text{where } \psi(x):= x^{-1} \bone\{ x > \delta_n \}.
\end{equation}

It generalizes the usual matrix inverse: if $\delta_n = 0$, then \eqref{def:generalinv} gives the Moore-Penrose pseudoinverse $(\hat \bSigma)^{+}$; and if $\delta_n > 0$, then \eqref{def:generalinv} applies a de-noising step with a threshold $\delta_n$ before taking the Moore-Penrose pseudoinverse. In this regard, \eqref{def:generalinv} can be viewed as computing the pseudoinverse with a hard-thresholding operation. We suggest choosing $\delta_n$ such that $\delta_n = o(1)$ and $\sqrt{n}\, \delta_n \to \infty$. See the analysis in Section~\ref{sec:singular},

As shown in Section~\ref{sec:singular}, the presence of redundant vectors is often the cause of singularity of $\hat \bSigma$, in which case using the generalized inverse \eqref{def:generalinv} still gives optimality guarantee with a suitable $\delta_n$.

\subsection{Extensions}\label{sec:extension}

So far, we have considered combining information that has an average form as required by \eqref{def:v2}. Typically, this is useful when such $\bv_\ell$ is derived from the method of moments. However, our framework can be used to accommodate the information of $\cS$ which does not have the average form. For example, in the factor model, any single $\xx_i$ satisfies $\E \xx_i = \bB \E \bz_i \in \cS$, but it does not admit the average form. This kind of individual information can also be incorporated in our GMM framework \eqref{opt:eigen} under certain structure.

In order to apply our asymptotic theory, we assume that we have individual moment information $\bm_i \in \cS$ based on the $i$th data.  For the factor model, we naturally take $\bm_i = \xx_i$.  These individual moments are naturally aggregated as the matrix $\bM = n^{-1} \sum_{i=1}^n \bm_i \bm_i^T$. Under additional structure that is similar to the factors, we may assume that the eigenspace of $\E \bM - \bSigma_0$ falls in $\cS$ for some given symmetric $\bSigma_0$.  We can now take the moment conditions
$$
  \bv_\ell = (\bM-\bSigma_0) \be_\ell = n^{-1} \sum_{i=1}^n (\bm_i \bm_i^T \be_\ell - \bSigma_0 \be_\ell),
$$
which has the form of averages.  Therefore, it falls in our framework.  The unweighted aggregation of these moments in Proposition~\ref{prop:eigen} leads to the principal component analysis of the matrix
$$
    \sum_{\ell=1}^n \bv_\ell \bv_\ell^T = (\bM-\bSigma_0)^2.
$$
We remark that the contribution of the matrix $\bSigma_0$ is negligible under the pervasive conditions \citep{bai2003inferential, fan2013large}.  Hence, one may simply use the eigenspace spanned the matrix $\bM$ to obtain reasonable estimates, while our two-step method allows for more efficient construction.

More generally,  suppose that we are given a symmetric matrix $\bM \in \R^{p \times p}$ that is computed from the data such that $\spann(\eigen_r(\bM))$ is a consistent estimator of $\cS$. A natural way to combine $\bM$ with our GMM estimator is to compute
\begin{equation*}
\eigen_r(\kappa \bM + \bV \bW \bV^T)
\end{equation*}
where $\kappa$ is a suitable parameter. The hope is that, a good choice of $\kappa$ may lead to a more efficient estimator.

While it seems difficult to determine an optimal parameter $\kappa$ in general, we identity cases in Section~\ref{sec:augment}, in which introducing $\kappa \bM$ does not bring extra asymptotic efficiency; or in other words, the optimal $\kappa = 0$. In practice, one may consider a few choices of values for $\kappa$, and use the bootstrap to determine the best $\kappa$.

\section{Examples}\label{sec:ex}
We discuss four typical applications to exemplify the general procedure proposed in the previous section.

\subsection{Factor models}\label{sec:factormodel}
First, let us consider the simple factor model \eqref{model:factor},
where, $\xx_i, \bepsilon_i \in \R^p$ and $\bz_i \in \R^r$ are random vectors, and $\bB \in \R^{p \times r}$ is a fixed and unknown matrix. Only $\xx_i$ is observed, and $\bz_i$ and $\bepsilon_i$ represent, respectively, latent factors and noise. Suppose $\{ \bz_i, \bepsilon_i \}_{i=1}^n$ are i.i.d., and all vectors are jointly independent. Let $\bmu_{\bz} = \E(\bz_i)$ and $\bSigma_{\bz} = \cov(\bz_i)$. We also assume that $\E \bepsilon_i = \bzero$, and $\cov(\bepsilon_i) = \sigma^2 \bI_p$, where $\sigma$ is known.

This model and its variants are well studied, and have wide applications in econometrics, psychology, etc. Usually, $r$ is much smaller than $p$. Under this model, the $p$ dimensional vector $\xx_i$ is predominately determined by a linear combination of a small number of `factors' $\bz_i$. The covariance of $\bepsilon_i$ is assumed to be simple in this paper, while more sophisticated and general structures have been considered \citep{ConKor93,  forni2000generalized, bai2003inferential, fan2013large, bai2013principal, fan2016projected, forni2017dynamic}. Nevertheless, our simple factor model retains essential features, as dimension reduction via principal component analysis is routinely employed in the literature.

Let $\cS = \spann(\bB)$ be the target subspace we want to estimate. It is the left singular subspace of $\bB$, and is also the subspace spanned by principal component directions. Once we have a good estimate of $\cS$, it is much easier to estimate $\bB$, since the degree of freedom reduces from $pr$ to $r^2$. A nice property of $\cS$ is that there is no identifiability issue as is common in factor models, since $\spann(\bB \bQ) = \spann(\bB)$ holds for any invertible matrix $\bQ$.

Routinely, eigen-decomposition of the empirical covariance matrix of $\xx_i$, namely $n^{-1}\sum_{i=1}^n(\xx_i - \bar \xx)(\xx_i - \bar \xx)^T$ where $\bar \xx = n^{-1} \sum_{i=1}^n \xx_i$, forms the cornerstone of estimation of $\cS$. If $\bmu_{\bz} \neq \bzero$, however, the first moments supply complementary information to the covariance matrix (second moments). To see this, notice
\begin{equation*}
\E \xx_i = \bB \bmu_{\bz} \in \cS.
\end{equation*}
Intuitively, if $\bmu_{\bz}$ is very large, then $\bar \xx$ is useful to estimate $\cS$. This motivates combining both the first and second moments. To this end, we set
\begin{align}\label{eq:vfactor}
\bv_1 = \frac{1}{n} \sum_{i=1}^n \xx_i, \quad \bv_{1+\ell} = \frac{1}{n} \sum_{i=1}^n \xx_i \xx_i ^T \be_\ell - \sigma^2 \be_\ell, \quad \forall\, \ell \in [p],
\end{align}
where $\{ \be_\ell\}$ is the standard basis. Here $\bv_{1+\ell}$ is simply a projection of the matrix $n^{-1}\sum_i\xx_i\xx_i^T - \sigma^2\bI_p$ to each basis vector $\be_\ell$. We can then follow the procedure outlined in Section~\ref{sec:procedure}.

Note that, we can replace $\{ \be_\ell\}$ by any set of $p$ linearly independent vectors, and the same optimality guarantee holds. This is because a change of basis only results in a linear transformation of moments, and our theory ensures optimality among all linear transformations (Section~\ref{sec:optproc}). Moreover, replacing $\{ \be_\ell\}$ by a set of overcomplete vectors does not give a better estimate, since projections onto overcomplete vectors only provide redundant information (Section~\ref{sec:singular}).

We also note that in \eqref{eq:vfactor}, we assume $\sigma$ is known, so we can subtract $\sigma^2 \be_\ell$ to remove the effect from the noise term. This ensures $\E \bv_{1+\ell} \in \cS$ and thus our theory is applicable. In the case of an unknown $\sigma$, one may consider estimating $\sigma$ first, or updating estimates iteratively.

Finally, we remark that recently, \cite{LetPel17} also utilize first moments, along with second moments, to achieve improved estimation in factor models. Both theoretical and empirical evidence are shown to justify the use of first moments. In Section~\ref{sec:augment}, we will discuss the connection to this work.


\subsection{Mixture models}\label{sec:mixturemodel}

Now let us consider mixtures of generalized linear models (GLM). Suppose we have i.i.d.\ data $\xx_i$ and $y_i$, with $\xx_i \in \R^p$, $y_i \in \R$ and $i\in [n]$. An unobserved variable $z_i \in \{1,2,\ldots,K\}$, independent of $\xx_i$ and $y_i$, indicates which model $\xx_i$ and $y_i$ are generated from. To be precise, each $z_i$ is a multinomial variable with $\P(z_i = k) = \pi_k$, where $\pi_k$ is a parameter; and conditioning on $z_i$ and $\xx_i$, $y_i$ has a density function (or probability mass function):
\begin{equation}\label{model:mixmodel}
f(y|\xx_i, z_i=k; \bTheta) = h_k(y; \theta_k(\xx_i)), \quad \text{where} ~\theta_k(\xx_i) = \xx_i^T \bbeta_k + \beta_{k0}.
\end{equation}
Here, $\bbeta_k \in \R^p$ and $\beta_{k0} \in \R$ are the unknown parameters of the $k$th GLM model. The parameter space $\bTheta$ is the set of all these parameters. For each $i \in [n]$, depending on the value $z_i$ takes, the response variable $y_i$ is generated from one of the $K$ GLMs. Special cases include:
\begin{itemize}
\item {mixed linear regression}
\begin{equation}\label{model:linreg}
y_i = \sum_{k=1}^K \bone\{ z_i = k \} ( \xx_i^T \bbeta_k + \beta_{k0} + \sigma_k \epsilon_i), \quad \text{where } \sigma_k \ge 0;
\end{equation}
\item {mixed logistic regression}
\begin{equation}\label{model:logreg}
\P(y_i=1 | \xx_i, z_i = k) = \varphi( \xx_i^T \bbeta_k +  \beta_{k0}), \quad \text{where } \phi(t) = \frac{1}{1 + e^{-t}}.
\end{equation}
\end{itemize}

Mixtures models and a broader family of latent variable models can be tackled by EM algorithms \citep{DemLaiRub77}. While local convergence is established under various conditions \citep{Wu83,XuJor96,BalWaiYu17}, EM algorithms are usually susceptible to local minima \citep{Jin16}. In recent years, tensor-based methods, which utilize moments of $\xx_i$ (up to the third order moments), are proved to produce consistent estimators under some conditions on the covariates $\xx_i$ \citep{AGHKT14,YiCarSan16}. Usually, the covariates are required to be i.i.d.\ normal, but moderate extensions are possible. A crucial step of these works is to seek a whitening matrix, which is then used to construct tensors that have a special orthogonal structure. The columns of such whitening matrix have the same linear span as $\spann\{ \bbeta_1,\ldots, \bbeta_K \} $.

Let $\cS = \spann\{ \bbeta_1,\ldots, \bbeta_K \} $ be our target subspace. Usually $K$ is much smaller than $p$, so a good estimate of $\cS$ is helpful for the estimation of $\bbeta_k$. In the aforementioned papers, typically one can estimate $\cS$ through second moments of $\xx_i$, assuming $\xx_i \sim N(\bzero,\bI_p)$. For example, in the case of mixed logistic regression, one can use Stein's identity and derive
\begin{equation*}
\E \left[ y_i (\xx_i \xx_i^T - \bI_p)	\right] = \sum_{k=1}^K \pi_k \bbeta_k \bbeta_k^T \E [ \varphi''( \xx_i^T \bbeta_k +  \beta_{k0})].
\end{equation*}
Note that each column of $\E \left[ y_i (\xx_i \xx_i^T - \bI_p)	\right]$ lies in $\cS$, so it leads to an obvious construction of moments that fit in our framework \eqref{def:v1}--\eqref{def:v2}:
\begin{equation}\label{eq:vell}
\bv_{\ell} = \frac{1}{n} \sum_{i=1}^n y_i (\xx_i \xx_i^T \be_\ell - \be_\ell), \quad \forall\, \ell \in [p].
\end{equation}
This does not work for mixed linear regression, since $\E \left[ y_i (\xx_i \xx_i^T - \bI_p) \right]$ would vanish. A better choice for mixed linear regression is $\E \left[ y_i^2 (\xx_i \xx_i^T - \bI_p)	\right]$, due to
\begin{equation*}
\E \left[ y_i^2 (\xx_i \xx_i^T - \bI_p)	\right] = 2\sum_{k=1}^K \pi_k \bbeta_k \bbeta_k^T.
\end{equation*}
This leads to a similar construction as \eqref{eq:vell}, except that $y_i$ is replaced by $y_i^2$.

Apart from these second moments proposed in the literature, one may consider first moments, or moments with $y_i$ transformed by a nonlinear function. For instance, one may construct
\begin{align}\label{eq:vell2}
\bv_\ell = \frac{1}{n} \sum_{i=1}^n h_{\ell}(y_i) \xx_i, \quad \text{or} \quad \bv_{\ell,j} = \frac{1}{n} \sum_{i=1}^n h_{\ell}(y_i) ( \xx_i \xx_i^T \be_j - \be_j),
\end{align}
where, for both expressions, one may choose a function $h_\ell(y_i) = \cos(y_i/t_\ell + \gamma_\ell)$ for any $t_\ell>0$ and $\gamma_\ell \in \R$. One can also simply choose $h_\ell(y_i) = y_i$, leading to the (untransformed) first moments of $\xx_i$. By Stein's identity, they all satisfy the condition that $\E \bv_\ell$ (or $\E \bv_{\ell,j}$) lies in $\cS$. There are other useful construction in the literature, e.g., \cite{SunIoaMon14}.

After constructing these vectors $\bv_\ell$, we can use our estimation procedure to combine these vectors for optimal asymptotic efficiency, as promised by our theory. In general, the optimality result holds as long as mild regularity conditions (Assumption~\ref{ass:span} and~\ref{ass:redundancy}) are satisfied.

\subsection{Multiple index models}\label{sec:indexmodel}

The mixture model \eqref{model:mixmodel} discussed above can be subsumed in multiple index models, which are semiparametric models of the form \eqref{model:mim}, where $\bbeta_1, \ldots, \bbeta_K \in \R^p$ are unknown parameters, $\xx_i \in \R^p, y_i \in \R$ are i.i.d.\ data, $\epsilon_i$ is unobserved and i.i.d., and the function $G$ is not known. It is also called the  \textit{multi-index model} or \textit{dimension-reduction model}, since usually $K$ is much smaller than $p$ and we wish to treat $\xx_i^T \bbeta_1,\ldots,\xx_i^T \bbeta_K$ as new coordinates. A primary interest of this model is $\spann\{ \bbeta_1,\ldots,\bbeta_K\}$, denoted again by $\cS$, since it provides a pathway to dimension reduction, data visualization, estimation of $g$, and so on.

Let the regression mean function be
\begin{equation*}
\E(y_i | \xx_i) = g(\xx_i^T \bbeta_1,\ldots,\xx_i^T \bbeta_K).
\end{equation*}
Assume the function $g$ is twice differentiable, and $\xx_i \sim N(0,\bI_p)$. Well-known estimation methods include sliced  inverse regression \citep{Li91}, principal Hessian directions \citep{Li92}, etc. For example, under our assumptions, the principal Hessian directions (pHd) method seeks to estimate $\E [ y_i (\xx_i \xx_i^T - \bI_p) ]$ from the data. (Equivalently, it estimates $\E [ r_i (\xx_i \xx_i^T - \bI_p) ]$ where $r_i = y_i -  \beta^{LS}_0 - \xx_i^T \bbeta^{LS}$ is the residual after taking out a least square fit). Thus a natural construction is the same as \eqref{eq:vell}. Similar to the previous subsection, we may also consider utilizing first moments, or applying a nonlinear function $h$, which leads to the same form as in \eqref{eq:vell2}.  These constructions guarantee $\E \bv_\ell \in \cS$, due to Stein's identity.

As pointed out by \cite{Li92, Cook98}, a drawback of the pHd is the possibility of vanishing $\E [ y_i (\xx_i \xx_i^T - \bI_p) ]$ (or its rank is smaller than $K$). This may occur, for instance, if $\E [\nabla^2 g(\xx_i^T \bbeta_1,\ldots,\xx_i^T \bbeta_K)] = \bzero$, which unfortunately includes linear regression. Moreover, any linear trend in $g$ is missed by pHd, which is an unpleasant feature of pHd. For instance, the direction of $\bbeta_2$ is not captured by pHd in the following example:
\begin{equation*}
y_i = g_0(\xx_i^T \bbeta_1) + \xx_i^T \bbeta_2 + \epsilon_i.
\end{equation*}

A remedy for pHd is making transformation of $y_i$ before applying pHd, and some success is reported by \citep{Li92, SunIoaMon14}. In this regard, the vectors constructed through transformation, as suggested in \eqref{eq:vell2}, agrees with the aforementioned papers. Besides, our approach can combine transformed moments as before.

\subsection{Distributed estimation for heterogeneous datasets}\label{sec:distest}

As a last example, we consider an estimation problem in a modern setting. Suppose we have $m$ datasets stored on separate clusters or held by different laboratories/hospitals. Due to communication cost or privacy concerns, we wish to compute statistics locally for each datasets and aggregate these statistics at a central server without accessing the details of these distributed datasets.

Consider the problem of estimating a subspace $\cS$ as before.
Let $n$ be the total sample size; and for ease of exposition, we introduce i.i.d.\ multinomial variable $z_i \in \{1,2,\ldots,m\}$ ($i \in [n]$) that indicates which dataset the data unit indexed by $i$ belongs to. Let $\rho_\ell := \P(z_i = \ell) \in (0,1)$ be fixed. Then, the sample size of each dataset is roughly $\rho_\ell n$.

Suppose for each $\ell \in [m]$, the $\ell$th dataset consists of measurements of the form $\bff^{(\ell,k)}(\xx_i)$ for all $i$ such that $z_i = \ell$. Here, $\xx_i$ is a random quantity (may or may not observed) associated with, say, a subject with index $i$ in the $\ell$th laboratory with $k$th measurement. The total number of measurements for each subject in the $\ell$th laboratory is $K_\ell$. Each laboratory computes an average of these measurements: for any $\ell \in [m]$ and $k\in[K_\ell]$,
\begin{equation*}
\bv^{(\ell, k)} := \frac{1}{ | \cI_\ell | }\sum_{i \in \cI_\ell} \bff^{(\ell,k)}(\xx_i), \quad \text{where } \cI_\ell = \{ i: z_i=\ell \}
\end{equation*}
and contributes these vectors to the central server.

These functions $\bff^{(\ell,k)}$ may be different, which reflects different methods of measurements across laboratories. If we assume $\E \bv^{(\ell,k)} \in \cS$ and that $(\xx_i; z_i)$ are i.i.d., then we can use our framework to aggregate $\bv^{(\ell,k)}$ across $\ell$ and $k$. To do so, we define $\bff_{\ell,k} (\xx_i, z_i) = \bff^{(\ell,k)}(\xx_i) \bone\{ z_i = \ell \}$ and rewrite $\bv^{(\ell,k)}$ as
\begin{equation*}
\bv^{(\ell,k)} = \frac{n}{| \cI_\ell |} \cdot \frac{1}{n} \sum_{i=1}^n \bff_{\ell,k}(\xx_i, z_i).
\end{equation*}
Set $\bv_{\ell,k} := n^{-1} | \cI_\ell | \bv^{(\ell,k)}$. It is clear that $\E \bv_{\ell,k} \in \cS$, and our general framework \eqref{def:v1}--\eqref{def:v2} applies here.

The total number of vectors we have is $M:=K_1+\ldots+K_m$, and correspondingly $\bSigma^*$ and $\bW^*$ are of size $M \times M$. Note that due to the specific form of $\bff_{\ell,k} (\xx_i, z_i)$, $\bSigma^*$ must be a block diagonal matrix, so we only need to estimate each block $\bSigma_{\ell \ell}^* \in \R^{K_{\ell} \times K_{\ell}}$ according to \eqref{def:sigmahat}, which can be computed locally on each dataset once an initial $\hat \bU_0$ is given. To yield the final estimator, we set $\bV_\ell = [\bv_{\ell,1},\ldots,\bv_{\ell,K_{\ell}}] \in \R^{p \times K_\ell}$ and calculate
\begin{equation}\label{exp:distest}
\hat \bU = \mathrm{eigen}_r \left( \sum_{\ell=1}^m \bV_{\ell}  ( \hat \bSigma_{\ell \ell})^{-1} \bV_{\ell}^T \right)
\end{equation}
where $\mathrm{eigen}_r(\cdot)$ computes the top $r$ eigenvectors. Here $(\hat \bSigma_{\ell \ell})^{-1}$ can be viewed as a local weighting matrix that takes into account the covariance-like information of the $\ell$th dataset. A nice property of \eqref{exp:distest} is that its computation does not require data collection across datasets, and thus may be useful in privacy-sensitive situations.

A special case is $K_1=\ldots=K_m=1$, i.e., all laboratories conduct a single measurement for each of its subjects. In this simple scenario, the weight in \eqref{exp:distest} simply becomes a scalar.

We remark that in a recent work on distributed estimation for spiked covariance model \citep{Fan17}, a similar method as \eqref{exp:distest} is proposed, except there is no weight $\hat \Sigma_{\ell \ell}^{-1}$ before $\bv_\ell \bv_\ell^T$. While the regimes are different, aggregating
$\bv_\ell \bv_\ell^T$ seems to be the gist of both methods. Our method, moreover, suggests weighting by $\hat \Sigma_{\ell \ell}^{-1}$, which utilizes the variance-like information and thus may be preferable for aggregating heterogeneous datasets.

\section{Large sample properties}\label{sec:theory}
We will establish results about our subspace GMM estimator with an analysis under the classical  `fixed $p$ and $m$, large $n$' regime. In this section, to avoid confusion, we explicitly display the dependence on $n$, i.e., $\hat \bU_n = \hat \bU$, $\hat \bSigma_n = \hat \bSigma$, $\bW_n = \bW$, etc. Proofs can be found in the supplementary materials.

\subsection{Consistency and asymptotic normality}\label{sec:normal}

As the first part of our analysis, we will establish consistency and asymptotic normality, as is often done in the GMM literature. The optimality of asymptotic variance requires a block matrix assumption that could be restrictive in practice. However, if our optimality is gauged not in the original parameter space $O(p,r) / O(r)$, but in terms of the canonical angles between two subspaces, then our procedure is superior under fairly mild conditions (Section~\ref{sec:optproc}).

We embark on our analysis by making a few assumptions. First, we assume that the vectors $\bv_1, \ldots, \bv_m$ (or equivalently $\bff_1(\xx_i,y_i), \ldots, \bff_m(\xx_i,y_i)$) contain sufficient information of $\cS$, in the sense that these vectors, in expectation, span the subspace $\cS$. Note that each individual $\E \bv_\ell$ lies in $\cS$, so an equivalent assumption is stated below in terms of the dimension. In general, this is a mild assumption, since $m \ge \dim(\cS) = r$.

\begin{ass}[nondegeneracy]\label{ass:span}
Suppose $\dim(\spann\{ \E \bv_1, \ldots, \E \bv_m \}) = r$.
\end{ass}

This assumption is a prerequisite for any reasonable estimator of $\cS$. Indeed, if $\dim(\cS) > \dim(\spann\{ \E \bv_1, \ldots, \E \bv_m \})$, even with infinite sample size, it is impossible to uniquely determine $\cS$. In this regard, Assumption~\ref{ass:span} is an identifiability assumption for our problem formulated in \eqref{def:v1}--\eqref{def:v2}.

To state a general consistency result, we consider the following mild condition on the weighting matrix. In particular, in our subspace GMM estimation procedure, as long as $\bSigma^*$ is invertible, this assumption is satisfied for both the initial estimator (where $\bW_n$ is a constant matrix) and the final GMM estimator (where $\bW_n \xrightarrow{p} (\bSigma^*)^{-1}$).

\begin{ass}[limiting weight matrix]\label{ass:W}
Suppose $\bW_n$ converges in probability to a non-random positive definite matrix $\bW^* \succ \bzero$ in $\R^{m \times m}$.
\end{ass}

Similar to $\bar \bW$ in \eqref{def:Wbar}, we define a limiting block matrix $\bar \bW^* := \bW^* \otimes \bI_p \in \R^{\bar m \times \bar m}$. It follows from Lemma~\ref{lem:Kron} that $\bar \bW^*$ is positive definite under Assumption~\ref{ass:W}. Since the vectors $\E \bv_1, \ldots, \E \bv_m$ span the full subspace $\cS$ under Assumption~\ref{ass:span}, we can show that $\bU^*$ is the only matrix in $O(p,r)$, up to rotation, such that $Q^*(\bU) := [ \E \bg(\bU)]^T \bar \bW^* [\E\bg(\bU)]$ is minimized (the minimum is $0$). Thus, it is natural to expect $\hat \bU_n$, as a minimizer of $Q(\bU)$, to be close to $\bU^*$ up to rotation when $n$ is large.

The first theorem is a reassuring consistency result. We consider a generic weighting matrix $\bW^*$ without specifying particular choices.

\begin{thm}\label{thm:consist1}
Under Assumptions~\ref{ass:span} and \ref{ass:W}, there exists a sequence of orthogonal matrices  $\bR_1, \bR_2, \bR_3, \ldots \in O(r)$ such that
\begin{equation*}
\hat \bU_n \bR_n \xrightarrow{p} \bU^*, \qquad \text{as} ~ n \to \infty,
\end{equation*}
where $\xrightarrow{p}$ means convergence in probability.
\end{thm}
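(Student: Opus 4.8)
The plan is to bypass the variational definition of $\hat\bU_n$ and instead work directly with the eigenvector characterization of Proposition~\ref{prop:eigen}, reducing consistency to a matrix perturbation argument. Write $\bV_n = [\bv_1,\ldots,\bv_m]$ and $\bA_n := \bV_n \bW_n \bV_n^T$, so that the columns of $\hat\bU_n$ are the top $r$ eigenvectors of $\bA_n$ and $\hat\bU_n \hat\bU_n^T$ is the orthogonal projection onto its top-$r$ eigenspace. The first step is a law of large numbers: since each $\bv_\ell$ is an average of i.i.d.\ terms $\bff_\ell(\xx_i,y_i)$ with finite second (hence finite first) moments, $\bv_\ell \xrightarrow{p} \bar\bv_\ell := \E\bv_\ell$, so $\bV_n \xrightarrow{p} \bar\bV := [\bar\bv_1,\ldots,\bar\bv_m]$. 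Combining this with Assumption~\ref{ass:W} and the continuous mapping theorem yields $\bA_n \xrightarrow{p} \bA^* := \bar\bV \, \bW^* \, \bar\bV^T$.

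Next I would pin down the limiting eigenstructure. Because $\bW^* \succ \bzero$, I factor $\bW^* = (\bW^*)^{1/2}(\bW^*)^{1/2}$ and write $\bA^* = (\bar\bV (\bW^*)^{1/2})(\bar\bV (\bW^*)^{1/2})^T \succeq \bzero$; its rank and column space coincide with those of $\bar\bV$, since $(\bW^*)^{1/2}$ is invertible. Assumption~\ref{ass:span} then gives $\rank(\bar\bV) = \dim \spann\{\bar\bv_1,\ldots,\bar\bv_m\} = r$ and $\spann(\bar\bV) = \cS$. Hence $\bA^*$ is positive semidefinite of rank exactly $r$, its eigenvalues satisfy $\lambda_1(\bA^*) \ge \cdots \ge \lambda_r(\bA^*) > 0 = \lambda_{r+1}(\bA^*) = \cdots$, its top-$r$ eigenspace equals $\cS$, and the eigen-gap $\delta := \lambda_r(\bA^*) > 0$ is a fixed positive constant.

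With a nonvanishing population gap in hand, I would invoke the Davis--Kahan $\sin\Theta$ theorem. Since $\|\bA_n - \bA^*\| \xrightarrow{p} 0$, on the event $\{\|\bA_n - \bA^*\| < \delta/2\}$, whose probability tends to one, the relevant eigenspaces are well separated and Davis--Kahan gives the deterministic bound $\|\hat\bU_n\hat\bU_n^T - \bU^*(\bU^*)^T\|_F \le C\,\|\bA_n - \bA^*\|_F/\delta$ for a universal constant $C$, whence $\hat\bU_n \hat\bU_n^T \xrightarrow{p} \bU^*(\bU^*)^T = \cP_{\cS}$. To recover the stated rotation form, I would take $\bR_n \in O(r)$ to be the orthogonal factor solving the Procrustes problem $\min_{\bR \in O(r)}\|\hat\bU_n \bR - \bU^*\|_F$ (obtained from the singular value decomposition of $(\bU^*)^T\hat\bU_n$, which can be chosen measurably). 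A standard principal-angle identity bounds the Procrustes distance by the projection distance, $\|\hat\bU_n \bR_n - \bU^*\|_F \le \|\hat\bU_n\hat\bU_n^T - \bU^*(\bU^*)^T\|_F$, and combining with the previous display gives $\hat\bU_n \bR_n \xrightarrow{p} \bU^*$.

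The main obstacle is concentrated entirely in the identification step: everything downstream is routine perturbation theory, but the whole argument rests on Assumption~\ref{ass:span} together with $\bW^* \succ \bzero$ forcing $\bA^*$ to have rank exactly $r$ with a strictly positive $r$th eigenvalue and hence a fixed spectral gap below it. Were this to fail --- for instance if $\{\E\bv_\ell\}$ spanned a space of dimension strictly less than $r$ --- the top-$r$ eigenspace of $\bA^*$ would not be uniquely determined and no consistent recovery of $\cS$ could hold, which is precisely why Assumption~\ref{ass:span} functions as the identifiability condition. A secondary point deserving care is ensuring measurability of the aligning rotations $\bR_n$ and upgrading the deterministic Davis--Kahan bound on the high-probability event to genuine convergence in probability, but these are standard and do not affect the structure of the proof.
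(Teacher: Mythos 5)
Your proof is correct, but it takes a genuinely different route from the paper's. The paper proves consistency via the classical M-estimation/GMM template applied directly to the variational definition \eqref{def:Q}: it fixes a rotation-invariant neighborhood $N_\delta(\bU^*)$ of $\bU^*$ in $O(p,r)$, shows by compactness of the complement and continuity of the population criterion $Q^*(\bU) = [\E\bg(\bU)]^T \bar\bW^* [\E\bg(\bU)]$ that $\veps := \inf_{\bU \notin N_\delta(\bU^*)} Q^*(\bU) > 0$ (the identification step, using $Q^*(\bU) = \Tr\big((\bG^*)^T\bW^*\bG^*\,(\bU^*)^T(\bI_p - \bU\bU^T)\bU^*\big)$, which vanishes iff $\spann(\bU)=\spann(\bU^*)$), then establishes uniform convergence $\sup_{\bU \in O(p,r)}|Q(\bU)-Q^*(\bU)| \xrightarrow{p} 0$ and concludes with the standard sandwich $Q^*(\hat\bU_n) \le Q(\hat\bU_n)+\veps/3 \le Q(\bU^*)+\veps/3 < \veps$. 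You instead route everything through the eigenvector characterization of Proposition~\ref{prop:eigen}: law of large numbers for $\bV_n\bW_n\bV_n^T$, identification of the limit $\bA^*$ as a rank-$r$ positive semidefinite matrix with column space $\cS$ and spectral gap $\lambda_r(\bA^*)>0$, Davis--Kahan to pass from matrix convergence to convergence of projections, and the Procrustes/principal-angle inequality $\min_{\bR\in O(r)}\|\hat\bU_n\bR - \bU^*\|_F \le \|\hat\bU_n\hat\bU_n^T - \bU^*(\bU^*)^T\|_F$ (valid since $1-\sigma_k \le 1-\sigma_k^2$ for singular values $\sigma_k \in [0,1]$ of $\hat\bU_n^T\bU^*$) to recover the rotation-aligned statement. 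Both arguments are sound; yours has the advantage of being quantitative --- it yields an explicit perturbation bound, hence a rate, where the paper's compactness argument is purely qualitative --- while the paper's argument needs no spectral machinery and, in its more general form (Theorem~\ref{thm:consist2}), covers semidefinite limits $\bW^* \succeq \bzero$ with only $(\bG^*)^T\bW^*\bG^* \succ \bzero$. Your factorization step through $(\bW^*)^{1/2}$ uses invertibility of $\bW^*$ and so as written covers only the stated theorem, though it would extend to the semidefinite case by writing $\bar\bV = \bU^*(\bG^*)^T$, so that $\bA^* = \bU^*\big[(\bG^*)^T\bW^*\bG^*\big](\bU^*)^T$ has rank $r$ exactly when $(\bG^*)^T\bW^*\bG^* \succ \bzero$.
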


We remark that if we consider the regime where $m$ also grows, typically we have consistency if $m = o(n)$ with additional regularity assumptions on $\bW^*$ \citep{KoeMac99, Donald03}.

Next, in accordance with our procedure where block weighting matrix is used, we consider the following assumption of block matrix forms for the covariance matrix $\bar \bS^*$.

\begin{ass}[block-wise covariance]\label{ass:uniform}
Suppose the covariance matrix of the concatenated vector
\begin{equation*}
\bar \bff(\xx_i, y_i) := [\bff_1(\xx_i, y_i); \ldots ; \bff_m(\xx_i, y_i)] \in \R^{\bar m},
\end{equation*}
denoted by $\bar \bS^*$, has the following block matrix form: $\bar \bS^* = \bS^* \otimes \bI_p$, where $\bS^* \in \R^{m \times m}$ is positive definite.
\end{ass}

Under this assumption, there is a simple connection between $\bS^*$ and $\bSigma^*$.
\begin{lem}\label{lem:sigmaS}
Under Assumption~\ref{ass:uniform}, we have $\bSigma^* = (p-r) \bS^*$.
\end{lem}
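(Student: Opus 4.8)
The plan is to verify the identity entrywise: I will compute the $(j,\ell)$ entry $\Sigma_{j\ell}^*$ from its definition \eqref{def:sigma0} and show it equals $(p-r)\,S_{j\ell}^*$. Throughout, write $\bP^\bot := \bI_p - \bU^*(\bU^*)^T$ for the projection onto $\cS^\bot$, and record the two elementary facts I will lean on: $\Tr(\bP^\bot) = p - \Tr((\bU^*)^T\bU^*) = p-r$ since $(\bU^*)^T\bU^* = \bI_r$; and $\bP^\bot\,\E\bff_\ell = \bzero$ for every $\ell$, because $\E\bff_\ell = \E\bv_\ell \in \cS$ by the population moment condition \eqref{def:v2}, so the projection onto $\cS^\bot$ annihilates it.

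The first step, which I expect to be the only delicate point, is to pass from the \emph{uncentered} second moment appearing in \eqref{def:sigma0} to a \emph{centered} one that matches the covariance $\bS^*$. Writing $\bff_\ell = (\bff_\ell - \E\bff_\ell) + \E\bff_\ell$ and expanding the bilinear form $\Sigma_{j\ell}^* = \E[\bff_j^T \bP^\bot \bff_\ell]$, the three cross/constant terms carrying a factor $\E\bff_j$ or $\E\bff_\ell$ all vanish, since each is of the form $(\,\cdot\,)^T \bP^\bot \E\bff_\ell$ or $(\E\bff_j)^T\bP^\bot(\,\cdot\,)$ and $\bP^\bot$ kills the means by the remark above. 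This leaves $\Sigma_{j\ell}^* = \E\!\left[(\bff_j - \E\bff_j)^T \bP^\bot (\bff_\ell - \E\bff_\ell)\right]$. It is precisely here that the hypothesis $\E\bv_\ell \in \cS$ enters, and it is what makes $\bSigma^*$ (defined via raw moments) coincide with a purely covariance-based quantity.

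The remaining two steps are routine. By cyclicity of the trace, the scalar equals $\Tr\!\big(\bP^\bot (\bff_\ell - \E\bff_\ell)(\bff_j - \E\bff_j)^T\big)$; taking expectations and moving $\bP^\bot$ outside gives $\Sigma_{j\ell}^* = \Tr\!\big(\bP^\bot \cov(\bff_\ell, \bff_j)\big)$. Now I invoke Assumption~\ref{ass:uniform}: the block form $\bar\bS^* = \bS^* \otimes \bI_p$ says exactly that the $(\ell,j)$ block of the covariance of $\bar\bff$ is $\cov(\bff_\ell, \bff_j) = S_{\ell j}^*\,\bI_p$. Substituting, and using symmetry $S_{\ell j}^* = S_{j\ell}^*$ of the covariance-derived matrix $\bS^*$, yields $\Sigma_{j\ell}^* = S_{j\ell}^*\,\Tr(\bP^\bot) = (p-r)\,S_{j\ell}^*$. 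Since $j,\ell$ are arbitrary, this is the claimed identity $\bSigma^* = (p-r)\bS^*$.

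Thus the only conceptual ingredient is the mean-cancellation of the first step; everything else is a trace identity plus the Kronecker structure supplied by the assumption. No further regularity is needed beyond the finite second moments already assumed for the $\bff_\ell$.
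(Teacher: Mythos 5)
Your proof is correct and follows essentially the same route as the paper's: center the bilinear form using $\bP^\bot\,\E\bff_\ell = \bzero$ (which holds since $\E\bv_\ell \in \cS$), apply cyclicity of the trace to obtain $\Tr\big(\bP^\bot \cov(\bff_\ell,\bff_j)\big)$, and then invoke the Kronecker block structure $\bar\bS^* = \bS^*\otimes\bI_p$ together with $\Tr(\bP^\bot)=p-r$. The only cosmetic difference is that you spell out the mean-cancellation expansion and the symmetry $S^*_{\ell j}=S^*_{j\ell}$ explicitly, which the paper leaves implicit.
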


Similar to the classical theory in GMM, we will establish asymptotic normality of our estimator $\hat \bU_n \bR_n$ (corrected by a rotation $\bR_n$). The asymptotic covariance of our GMM estimator is a function of the weighting matrix $\bW_n$. Once an explicit expression is obtained, it is then easy to justify, in terms of asymptotic efficiency, the optimality of our estimation procedure.

An important departure from the classical GMM theory is that, our estimator must satisfy the constraint $\hat \bU_n \in O(p,r)$, which is a manifold of dimension $pr - r(r+1)/2$, and that the estimator is determined up to rotation. For this reason, instead of the difference $\hat \bU_n \bR_n - \bU^*$, we measure our estimator through $\bP_{\hat \bU_n}^\bot(\hat \bU_n \bR_n - \bU^*)$, where $\bP_{\hat \bU_n}^\bot:= \bI_p - \hat \bU_n \hat \bU_n^T \in \R^{p \times p}$. A desirable property of applying a projection $\bP_{\hat \bU_n}^\bot$ is that it does not depend on the choice of $\bR_n$, as $\bP_{\hat \bU_n}^\bot\hat \bU_n = \bzero$ always holds, so we can also write $\bP_{\hat \bU_n}^\bot (\hat \bU_n \bR_n - \bU^*)  =  -\bP_{\hat \bU_n}^\bot \bU^* $. The following result also serves as an intermediate result towards optimality under $d(\hat \cS, \cS)$. Below, the asymptotic variance is a degenerate matrix, but nevertheless, it is the smallest one in terms of the (generalized) inequality $\succeq$.

\begin{thm}\label{thm:normality}
Define $\bP_{\hat \bU_n}^\bot:= \bI_p - \hat \bU_n \hat \bU_n^T \in \R^{p \times p}$, $\bP_{\bU^*}^\bot:= \bI_p - \bU^* (\bU^*)^T \in \R^{p \times p}$, and $\bG^* := [ \E \bv_1, \ldots, \E \bv_m]^T \bU^* \in \R^{m \times r}$. Under Assumptions~\ref{ass:span} and \ref{ass:W}, we have
\begin{align}
&\sqrt{n} \, \vect( \bP_{\hat \bU_n}^\bot \bU^*) \xrightarrow{d} N(\bzero, (\bA^* \otimes \bP_{\bU^*}^\bot)\, \bar \bS^* \, ((\bA^*)^T \otimes \bP_{\bU^*}^\bot)), \label{asym:1} \\
&\text{where} ~~~~ \bA^* := [(\bG^*)^T \bW^* \bG^* ]^{-1} (\bG^*)^T \bW^*, \notag
\end{align}
where $\vect(\cdot)$ stacks matrix columns into a vector. If we suppose, in addition, Assumption~\ref{ass:uniform}, then the choice $\bW^* = (\bS^*)^{-1}$ or $\bW^* = (\bSigma^*)^{-1}$ leads to the smallest asymptotic variance, in the following sense
\begin{equation}\label{ineq:long}
(\bA^* \otimes \bP_{\bU^*}^\bot)\, \bar \bS^* \, ((\bA^*)^T \otimes \bP_{\bU^*}^\bot) \succeq  [(\bG^*)^T (\bS^*)^{-1} \bG^* ]^{-1} \otimes \bP_{\bU^*}^\bot, \qquad \forall \, \bW^* \succ \bzero.
\end{equation}
\end{thm}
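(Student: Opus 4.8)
The plan is to sidestep the manifold first-order conditions and instead combine the eigenvector characterization of Proposition~\ref{prop:eigen} with analytic perturbation theory for spectral projectors. Write $\bar\bV := [\E\bv_1,\ldots,\E\bv_m]$ and $\bM^* := \bar\bV\bW^*\bar\bV^T$. Under Assumption~\ref{ass:span} the matrix $\bM^*$ is symmetric positive semidefinite with range exactly $\cS$ and rank $r$, so its top-$r$ eigenprojector equals $\bP_{\bU^*} := \bU^*(\bU^*)^T$ and its $(r{+}1)$-th eigenvalue is $0$, giving a strictly positive eigengap. Since $\hat\bU_n$ spans the top-$r$ eigenspace of $\hat\bM_n := \bV\bW_n\bV^T$ and $\hat\bM_n\xrightarrow{p}\bM^*$ by the law of large numbers and Assumption~\ref{ass:W}, with $\bE_n := \hat\bM_n - \bM^* = O_p(n^{-1/2})$, the Davis--Kahan bound makes the perturbation small relative to the gap for large $n$ (this is Theorem~\ref{thm:consist1} in projector form).

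First I would record the first-order expansion of the spectral projector via the Riesz contour-integral representation of $\bP_{\hat\bU_n}$ around the top-$r$ eigenvalues. Expanding the resolvent, the within-$\cS$ double-pole contribution cancels and one obtains $\bP_{\hat\bU_n} = \bP_{\bU^*} + \bP_{\bU^*}^\bot\bE_n(\bM^*)^{+} + (\bM^*)^{+}\bE_n\bP_{\bU^*}^\bot + O_p(\|\bE_n\|^2)$. Multiplying by $\bU^*$ and using $\bP_{\bU^*}^\bot\bU^* = \bzero$ together with $(\bM^*)^{+}\bU^* = \bU^*[(\bG^*)^T\bW^*\bG^*]^{-1}$ (valid since $(\bU^*)^T\bM^*\bU^* = (\bG^*)^T\bW^*\bG^*$) gives
\[
\bP_{\hat\bU_n}^\bot\bU^* = -\,\bP_{\bU^*}^\bot\,\bE_n\,\bU^*\,[(\bG^*)^T\bW^*\bG^*]^{-1} + O_p(n^{-1}).
\]
The crucial step is to expand $\bE_n = (\bV-\bar\bV)\bW^*\bar\bV^T + \bar\bV\bW^*(\bV-\bar\bV)^T + \bar\bV(\bW_n-\bW^*)\bar\bV^T + (\text{h.o.t.})$ and note that the second and third terms are annihilated on the left because $\bP_{\bU^*}^\bot\bar\bV = \bzero$ \emph{exactly} (every column $\E\bv_\ell\in\cS$). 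This is precisely why the estimation error $\bW_n-\bW^*$ does not enter at first order---the standard GMM phenomenon in our geometric setting---and it is the point I expect to require the most care. Writing $\Delta\bV := \bV-\bar\bV$ and using $\bar\bV^T\bU^* = \bG^*$ and $\bW^*\bG^*[(\bG^*)^T\bW^*\bG^*]^{-1} = (\bA^*)^T$, the surviving leading term is $\bP_{\hat\bU_n}^\bot\bU^* = -\bP_{\bU^*}^\bot\Delta\bV(\bA^*)^T + o_p(n^{-1/2})$.

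The distributional conclusion then follows routinely. The multivariate CLT gives $\sqrt n\,\vect(\Delta\bV)\xrightarrow{d} N(\bzero,\bar\bS^*)$, and the identity $\vect(\bP_{\bU^*}^\bot\Delta\bV(\bA^*)^T) = (\bA^*\otimes\bP_{\bU^*}^\bot)\vect(\Delta\bV)$ converts the expansion into $\sqrt n\,\vect(\bP_{\hat\bU_n}^\bot\bU^*)\xrightarrow{d} N(\bzero,(\bA^*\otimes\bP_{\bU^*}^\bot)\bar\bS^*((\bA^*)^T\otimes\bP_{\bU^*}^\bot))$, which is \eqref{asym:1}.

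For the optimality claim \eqref{ineq:long}, under Assumption~\ref{ass:uniform} I substitute $\bar\bS^* = \bS^*\otimes\bI_p$ and apply the mixed-product rule $(\bA\otimes\bB)(\bC\otimes\bD) = (\bA\bC)\otimes(\bB\bD)$ with $(\bP_{\bU^*}^\bot)^2 = \bP_{\bU^*}^\bot$ to reduce the left side to $(\bA^*\bS^*(\bA^*)^T)\otimes\bP_{\bU^*}^\bot$. Since $\bX\otimes\bP_{\bU^*}^\bot\succeq\bY\otimes\bP_{\bU^*}^\bot$ whenever $\bX\succeq\bY$ and $\bP_{\bU^*}^\bot\succeq\bzero$, it suffices to prove the $r\times r$ inequality $\bA^*\bS^*(\bA^*)^T\succeq[(\bG^*)^T(\bS^*)^{-1}\bG^*]^{-1}$. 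This is the classical Gauss--Markov/Hansen argument: with $\bA_0 := [(\bG^*)^T(\bS^*)^{-1}\bG^*]^{-1}(\bG^*)^T(\bS^*)^{-1}$ and $\bH := \bA^* - \bA_0$, both $\bA^*$ and $\bA_0$ are left inverses of $\bG^*$ (which has full column rank by Assumption~\ref{ass:span}), so $\bH\bG^* = \bzero$ and the cross term $\bA_0\bS^*\bH^T = [(\bG^*)^T(\bS^*)^{-1}\bG^*]^{-1}(\bH\bG^*)^T = \bzero$ vanishes, leaving $\bA^*\bS^*(\bA^*)^T = [(\bG^*)^T(\bS^*)^{-1}\bG^*]^{-1} + \bH\bS^*\bH^T\succeq[(\bG^*)^T(\bS^*)^{-1}\bG^*]^{-1}$. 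Finally, because $\bSigma^* = (p-r)\bS^*$ by Lemma~\ref{lem:sigmaS}, the choices $\bW^* = (\bS^*)^{-1}$ and $\bW^* = (\bSigma^*)^{-1}$ differ only by a positive scalar and hence yield the same $\bA^*$ and the same minimal variance.
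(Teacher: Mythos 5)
Your overall strategy---replacing the paper's stationarity-condition argument with a spectral-projector perturbation expansion of $\hat\bM_n=\bV\bW_n\bV^T$ around $\bM^* = \bar\bV\bW^*\bar\bV^T$---is a legitimate alternative to the paper's route, and your second half (reducing \eqref{ineq:long} to the $r\times r$ inequality via the mixed-product rule, then the Gauss--Markov cancellation using $\bH\bG^*=\bzero$) is correct and complete; the paper simply cites the standard GMM reference for that inequality. The gap is in the first half. Assumption~\ref{ass:W} asserts only $\bW_n \xrightarrow{p} \bW^*$, with \emph{no rate}, so your claim $\bE_n := \hat\bM_n - \bM^* = O_P(n^{-1/2})$ is false in general: $\bE_n$ contains the term $\bar\bV(\bW_n-\bW^*)\bar\bV^T$, which is only $o_P(1)$. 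You correctly observe that this term is annihilated in the \emph{first-order} term of the expansion because $\bP_{\bU^*}^\bot\bar\bV=\bzero$, but the remainder of your expansion is $O_P(\|\bE_n\|^2)$ in the \emph{full} perturbation, and with $\|\bE_n\|$ only $o_P(1)$ this remainder is $o_P(1)$ rather than $o_P(n^{-1/2})$; it can swamp the $O_P(n^{-1/2})$ leading term, so the expansion as written does not deliver the $\sqrt n$-scale CLT. (A term-by-term inspection of Kato's series shows every higher-order term has at least one $\bE_n$ factor adjacent to $\bP_{\bU^*}^\bot$, which reduces it to the $O_P(n^{-1/2})$ part, so the conclusion is in fact salvageable---but that analysis is exactly what is missing.)

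Two clean repairs. First, re-center the expansion at the \emph{random} matrix $\bM_n^* := \bar\bV\bW_n\bar\bV^T = \bU^*\left[(\bG^*)^T\bW_n\bG^*\right](\bU^*)^T$: with probability $1-o(1)$ this is positive semidefinite of rank exactly $r$, its top-$r$ eigenprojector is exactly $\bP_{\bU^*}$, and its eigengap is bounded away from zero, while the perturbation $\hat\bM_n - \bM_n^*$ now contains only $\bV-\bar\bV$ factors and is genuinely $O_P(n^{-1/2})$; since $(\bM_n^*)^+\xrightarrow{p}(\bM^*)^+$, your argument then goes through verbatim with Slutsky's theorem. Second, and this is the paper's route: the eigenvector equation gives exactly $\bP_{\hat\bU_n}^\bot\bV\bW_n\bV^T\hat\bU_n=\bzero$, from which one derives the \emph{exact} (finite-sample, high-probability) identity $\bP_{\hat\bU_n}^\bot\bU^* = -\bP_{\hat\bU_n}^\bot\bP_{\bU^*}^\bot\bV\bW_n\bG\,(\bH^T\bW_n\bG)^{-1}$, with $\bG=\bV^T\hat\bU_n\bR$ and $\bH=\bV^T\bU^*$. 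There the only $\sqrt n$-scaled object is $\bP_{\bU^*}^\bot\bV$, the randomness of $\bW_n$ enters solely through factors that converge in probability, and Slutsky's theorem finishes with no remainder to control at all---which is precisely what makes that argument robust to the absence of any rate on $\bW_n$.
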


Note that the asymptotic variance is invariant to the scaling of $\bW^*$; in particular, the choice $\bW^* = (\bS^*)^{-1}$ or $\bW^* = (\bSigma^*)^{-1}$ leads to the same asymptotic variance, in light of Lemma~\ref{lem:sigmaS}. Note also that the right-hand side of \eqref{ineq:long} is a symmetric matrix, due to basic properties of Kronecker products (see Lemma~\ref{lem:Kron} (iii)). The covariance-like matrix $\bSigma^*$ is essentially a compact form of $\bar \bS^*$, encoding the covariances through summation over $p$ coordinates. A natural question is that, without the block-wise assumption, whether the choice $\bW^* = (\bSigma^*)^{-1}$ is optimal for a simpler criterion. The next subsection gives an affirmative answer.

\subsection{Optimality of estimation procedure}\label{sec:optproc}

In this subsection, we measure the difference between the estimated subspace $\hat \cS = \spann(\hat \bU_n)$ and the true $\cS = \spann(\bU^*)$ using the notion \textit{canonical angles}, which is a generalization of angles between two vectors.

Let $\sigma_1, \ldots, \sigma_r$ be the singular values of $\hat \bU_n^T \bU^*$. Then, the canonical angles $\theta_1,\ldots, \theta_r \in [0, \pi/2]$ between $\hat \bU_n$ and $\bU^*$ are $\theta_k = \arccos(\sigma_k)$, $k \in [r]$. Also denote $\sin \bTheta_n(\bW_n) = \diag\{ \sin \theta_1,\ldots, \sin \theta_r\} \in \R^{r \times r}$, where we show explicitly the dependency on $\bW_n$ and $\Theta_n$r. Note that rotational invariance of singular values ensures that the canonical angles do not depend on the basis chosen to represent subspaces. The canonical angles are closely related to a common distance between $\spann(\hat \bU_n)$ and $\spann(\bU^*)$ \citep{Ste90, Ste98}:
\begin{equation}\label{eq:mse}
d(\hat \cS, \cS)^2 = \| \hat \bU_n \hat \bU_n^T - \bU^* (\bU^*)^T  \|_F^2 = 2 \| \sin \bTheta_n(\bW_n) \|_F^2.
\end{equation}

As briefly discussed in the previous subsection, we may remove the restrictive block-form Assumption~\ref{ass:uniform} if the gauge is a different quantity. To this end, given a weighting matrix $\bW_n \succ \bzero$, we gauge $\hat \bU = \hat \bU(\bW_n)$ through an $r \times r$ positive semidefinite matrix
\begin{equation}\label{def:psi}
\bPsi_n(\bW_n)  = (\bU^*)^T \bP_{\hat \bU_n}^\bot \bU^* = \bI_r - (\bU^*)^T \hat \bU_n  (\hat \bU_n)^T \bU^*.
\end{equation}
The eigenvalues of $\bPsi_n(\bW_n)$ are determined by the canonical angles between $\hat \bU_n$ and $\bU^*$. In particular, a useful identity is given below. See the derivation in the proof of Theorem~\ref{thm:opt2}.
\begin{equation}\label{eq:connect}
\Tr \big( \bPsi_n(\bW_n) \big) =  \| \sin \bTheta_n(\bW_n) \|_F^2 = \frac{1}{2}\, d(\hat \cS, \cS)^2.
\end{equation}

To see how we may relax Assumption~\ref{ass:uniform}, we note that $\bSigma^*$, as defined in \eqref{def:sigma0}, captures covariance-like information of the vectors $\bff_1(\xx_i,y_i),\ldots, \bff_m(\xx_i,y_i)$ in an average sense. More precisely, we have the following identity:
\begin{equation}\label{def:sigma}
\Sigma_{j\ell}^* = \Tr\big[ \cov (\bP_{\bU^*}^\bot\bff_j(\xx_i,y_i),  \,\bP_{\bU^*}^\bot \bff_{\ell}(\xx_i,y_i) )   \big], \quad \forall\, j,\ell \in [m].
\end{equation}
Here, recall that $\bP_{\bU^*}^\bot = \bI_r - \bU^* (\bU^*)^T$ is a deterministic matrix, and that $ \bP_{\bU^*}^\bot \, \bff_j(\xx_i,y_i)$ has zero mean. Since $\bPsi_n(\bW_n)$ is determined by summation over $p$ coordinates, it is reasonable to expect that we only require a condition on $\bSigma^*$, rather than the big $\bar m \times \bar m$ matrix $\bar \bS$.

This intuition leads to Assumption~\ref{ass:sigma} below, which greatly relaxes Assumption~\ref{ass:uniform}. It is possible to further relax this assumption (see Section~\ref{sec:singular}), but for now we will content ourselves with it.
\begin{ass}[invertibility]\label{ass:sigma}
The matrix $\bSigma^*$ defined in \eqref{def:sigma} is not singular.
\end{ass}

We will use a notion called \textit{asymptotic expectation} to assess and compare the quality of $\bPsi_n(\bW_n)$ for different $\bW_n$. Formally, similar to \cite{Shao03}, we define the asymptotic expectation below.

\begin{defn}\label{def:aE}
Let $\{\bxi_n\}$ be a sequence of random vectors and $\{a_n\}$ be a sequence of positive numbers satisfying $a_n \to \infty$ or $a_n \to a >0$. If $a_n \bxi_n \xrightarrow{d} \bxi$ and $\E \| \bxi \|_2 < \infty$, then $\E \bxi / a_n$ is called the asymptotic expectation of $\bxi_n$, which is denoted by $\aE(\bxi_n)$.
\end{defn}
Note that if $\E \bxi \neq \bzero$, then the asymptotic expectation is unique up to a $(1 + o(1))$ factor. This is a consequence of Prop.\ 2.3 of \cite{Shao03}. Roughly speaking, the asymptotic expectation is a `weaker' version of the usual expectation, because weak convergence  bypasses the issue of potential extreme values of $\bxi_n$, which are often caused by its singularities. If certain sort of uniform convergence is guaranteed, we will recover the usual expectation, as stated in the next theorem.

\begin{thm}\label{thm:opt2}
Under Assumptions~\ref{ass:span}, \ref{ass:W} and \ref{ass:sigma}, the asymptotic expectation\footnote{Statements and inequalities involving asumptotic expectation are up to a $1+o(1)$ factor, due to the nature of its definition.} of $\bPsi_n(\bW^*)$ is minimized at $\bW^* = (\bSigma^*)^{-1}$, that is, for any limiting $ \bW^* \succ \bzero$,
\begin{equation*}
\aE\Big(\bPsi_n((\bSigma^*)^{-1}) \Big) = \frac{1}{n} \,[(\bG^*)^T (\bSigma^*)^{-1}\bG^* ]^{-1}  \preceq \aE \Big(\bPsi_n(\bW^*) \Big) = \aE \Big(\bPsi_n(\bW_n) \Big) .\end{equation*}
If, in addition, $\{ n\| \bPsi_n(\bW^*) \|_F \}_{n} $ and $\{ n\| \bPsi_n((\bSigma^*)^{-1}) \|_F \}_n $ are both uniformly integrable, then
\begin{equation*}
 \E\,\bPsi_n((\bSigma^*)^{-1}) \preceq (1+o(1)) \cdot \E\, \bPsi_n(\bW^*).
\end{equation*}
Moreover, the same holds if we replace $\bPsi_n((\bSigma^*)^{-1})$ by $\bPsi_n((\hat \bSigma_n)^{-1})$.
\end{thm}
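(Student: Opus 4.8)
The plan is to reduce $\bPsi_n$ to a Gram matrix of the normalized estimation error, apply Theorem~\ref{thm:normality} together with the continuous mapping theorem to obtain the limit law of $n\bPsi_n$, and then run the classical Hansen efficiency argument on the resulting asymptotic expectation. Since $\bP_{\hat\bU_n}^\bot$ is a projection, $\bPsi_n(\bW_n) = (\bU^*)^T\bP_{\hat\bU_n}^\bot\bU^* = (\bP_{\hat\bU_n}^\bot\bU^*)^T(\bP_{\hat\bU_n}^\bot\bU^*)$. Setting $\bZ_n := \sqrt{n}\,\bP_{\hat\bU_n}^\bot\bU^* \in \R^{p\times r}$, we have $n\bPsi_n(\bW_n) = \bZ_n^T\bZ_n$. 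Theorem~\ref{thm:normality} gives $\vect(\bZ_n)\xrightarrow{d} N(\bzero,(\bA^*\otimes\bP_{\bU^*}^\bot)\bar\bS^*((\bA^*)^T\otimes\bP_{\bU^*}^\bot))$, so by continuous mapping applied to $\bZ\mapsto\bZ^T\bZ$ we get $n\bPsi_n(\bW_n)\xrightarrow{d}\bZ_\infty^T\bZ_\infty$, where $\bZ_\infty$ is the Gaussian limit of $\bZ_n$. As $\bZ_\infty$ is Gaussian, $\E\|\bZ_\infty^T\bZ_\infty\|<\infty$, so Definition~\ref{def:aE} yields $\aE(\bPsi_n(\bW_n)) = \frac{1}{n}\E[\bZ_\infty^T\bZ_\infty]$; this equals $\aE(\bPsi_n(\bW^*))$ because the limiting law depends on the weights only through their probability limit $\bW^*$ (Assumption~\ref{ass:W}).

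The crux is to compute $\E[\bZ_\infty^T\bZ_\infty]$ and to observe that it depends on $\bar\bS^*$ only through the compressed matrix $\bSigma^*$. Using $\vect(\bP_{\bU^*}^\bot\bH(\bA^*)^T) = (\bA^*\otimes\bP_{\bU^*}^\bot)\vect(\bH)$, I would realize the limit as $\bZ_\infty = \bP_{\bU^*}^\bot\bH(\bA^*)^T$, where $\bH\in\R^{p\times m}$ is Gaussian with $\vect(\bH)\sim N(\bzero,\bar\bS^*)$, its columns being the centered normal limits of $\bff_1,\ldots,\bff_m$. Since $(\bP_{\bU^*}^\bot)^2=\bP_{\bU^*}^\bot$, we obtain $\E[\bZ_\infty^T\bZ_\infty] = \bA^*\,\E[\bH^T\bP_{\bU^*}^\bot\bH]\,(\bA^*)^T$, and the $(j,\ell)$ entry of the inner expectation is $\E[\bff_j^T\bP_{\bU^*}^\bot\bff_\ell] = \Tr[\bP_{\bU^*}^\bot\cov(\bff_\ell,\bff_j)] = \Sigma_{j\ell}^*$ by \eqref{def:sigma}. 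Hence $\E[\bH^T\bP_{\bU^*}^\bot\bH]=\bSigma^*$ and $\aE(\bPsi_n(\bW^*)) = \frac{1}{n}\bA^*\bSigma^*(\bA^*)^T$. This collapse of the $\bar m\times\bar m$ matrix $\bar\bS^*$ into the $m\times m$ matrix $\bSigma^*$, effected by the trace over the $p$ ambient coordinates, is precisely what removes the need for the block-form Assumption~\ref{ass:uniform}; I expect justifying this reduction (and the interchange of weak limit with expectation) to be the main obstacle.

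Given $\aE(\bPsi_n(\bW^*)) = \frac{1}{n}\bA^*\bSigma^*(\bA^*)^T$, the optimality is the standard efficiency bound. Recall $\bA^* = [(\bG^*)^T\bW^*\bG^*]^{-1}(\bG^*)^T\bW^*$ satisfies $\bA^*\bG^*=\bI_r$. Set $\bC := [(\bG^*)^T(\bSigma^*)^{-1}\bG^*]^{-1}(\bG^*)^T(\bSigma^*)^{-1}$ and $\bD := \bA^* - \bC$. Then $\bC\bG^*=\bI_r$, so $\bD\bG^*=\bzero$. Since $\bC\bSigma^*\bC^T = [(\bG^*)^T(\bSigma^*)^{-1}\bG^*]^{-1}$ and the cross term $\bD\bSigma^*\bC^T = \bD\bG^*[(\bG^*)^T(\bSigma^*)^{-1}\bG^*]^{-1}=\bzero$, expanding $\bA^*=\bC+\bD$ gives
\begin{equation*}
\bA^*\bSigma^*(\bA^*)^T = \bD\bSigma^*\bD^T + [(\bG^*)^T(\bSigma^*)^{-1}\bG^*]^{-1} \succeq [(\bG^*)^T(\bSigma^*)^{-1}\bG^*]^{-1},
\end{equation*}
using $\bSigma^*\succ\bzero$, with equality at $\bW^*=(\bSigma^*)^{-1}$ (where $\bD=\bzero$). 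This proves the first display, since the right-hand side is exactly $n\cdot\aE(\bPsi_n((\bSigma^*)^{-1}))$.

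For the expectation statement, weak convergence $n\bPsi_n\xrightarrow{d}\bZ_\infty^T\bZ_\infty$ combined with uniform integrability of $\{n\|\bPsi_n\|_F\}$ upgrades convergence in distribution to convergence of the (entrywise) means, giving $\E[\bPsi_n(\bW^*)] = \frac{1}{n}(1+o(1))\bA^*\bSigma^*(\bA^*)^T$ and likewise $\E[\bPsi_n((\bSigma^*)^{-1})] = \frac{1}{n}(1+o(1))[(\bG^*)^T(\bSigma^*)^{-1}\bG^*]^{-1}$; the matrix inequality of the previous paragraph then transfers up to the $1+o(1)$ factor. The feasible version follows because the plug-in $\hat\bSigma_n$ is consistent given a consistent initial $\hat\bU^0$, so $(\hat\bSigma_n)^{-1}\xrightarrow{p}(\bSigma^*)^{-1}$ verifies Assumption~\ref{ass:W} with limit $(\bSigma^*)^{-1}$; applying the first part to this weight sequence yields the same optimal asymptotic expectation for $\bPsi_n((\hat\bSigma_n)^{-1})$.
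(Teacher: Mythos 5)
Your proposal is correct and follows essentially the same route as the paper: both identify the limiting law of $n\bPsi_n$ as $\bA^* \bXi^T \bP_{\bU^*}^\bot \bXi (\bA^*)^T$ with $\vect(\bXi) \sim N(\bzero, \bar\bS^*)$, collapse $\bar\bS^*$ to $\bSigma^*$ via the trace identity \eqref{def:sigma}, invoke the GMM efficiency bound \eqref{ineq:optimal}, and finish with the same uniform-integrability and plug-in consistency arguments. The only cosmetic differences are that you treat Theorem~\ref{thm:normality} as a black box and apply the continuous mapping theorem, where the paper reuses the first-order-condition identity \eqref{eq:useful1-0} from that theorem's proof and applies Slutsky to the product form, and that you prove the efficiency inequality explicitly via the $\bC$/$\bD$ orthogonal decomposition where the paper cites it as well known.
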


In the above theorem, the asymptotic expectations of $\bPsi_n((\bSigma^*)^{-1})$ and $\bPsi_n((\hat \bSigma_n)^{-1})$ are essentially the same (up to a $1+o(1)$ factor). This is because the asymptotic expectations are determined by the limiting weighting matrix, and $(\hat \bSigma_n)^{-1}$ implemented in our procedure converges in probability to $(\bSigma^*)^{-1}$.

Using \eqref{eq:connect}, we obtain an optimality result for the final estimator $\hat \bU_n^{\GMM} = \hat \bU_n(\hat \bSigma_n^{-1})$ produced by our estimation procedure in Section~\ref{sec:procedure}.

\begin{cor}\label{cor:procedure}
Under Assumptions~\ref{ass:span}, \ref{ass:W} and \ref{ass:sigma}, the matrix $\hat \bSigma_n$ computed in Step 2 of our procedure is invertible with probability $1-o(1)$, and for any limiting $\bW^* \succ \bzero$,
\begin{align*}
\aE\left( \left\| \hat \bU_n^{\GMM} [\hat \bU_n^{\GMM}]^T - \bU^* (\bU^*)^T  \right\|_F^2  \right) \le \aE\left( \left\| \hat \bU_n(\bW_n) [\hat \bU_n(\bW_n)]^T - \bU^* (\bU^*)^T  \right\|_F^2  \right).
\end{align*}
\end{cor}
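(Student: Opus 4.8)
The plan is to reduce the Frobenius-distance criterion to the trace of the matrix $\bPsi_n(\bW_n)$ and then invoke the L\"owner-order optimality already established in Theorem~\ref{thm:opt2}. Combining the identity \eqref{eq:mse} with \eqref{eq:connect}, for every admissible weighting matrix one has
\[
\big\| \hat \bU_n(\bW_n)\,[\hat \bU_n(\bW_n)]^T - \bU^*(\bU^*)^T \big\|_F^2 = 2\,\Tr\big(\bPsi_n(\bW_n)\big),
\]
so the whole statement becomes a comparison of the traces of $\bPsi_n((\hat \bSigma_n)^{-1})$ and $\bPsi_n(\bW_n)$.

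First I would dispatch the invertibility claim. By Theorem~\ref{thm:consist1} the initial estimator $\hat \bU_n^0$ is consistent up to rotation; since the plug-in quantity in \eqref{def:sigmahat}--\eqref{fan1} is a sample average of functions with finite second moments evaluated at this consistent $\hat \bU_n^0$, the law of large numbers together with the continuous mapping theorem gives $\hat \bSigma_n \xrightarrow{p} \bSigma^*$. Assumption~\ref{ass:sigma} makes $\bSigma^*$ nonsingular, and continuity of the determinant then forces $\P\big(\det \hat \bSigma_n \neq 0\big) \to 1$. On this event $(\hat \bSigma_n)^{-1}$ (or its thresholded variant \eqref{def:generalinv}) is well defined, so $\hat \bU_n^{\GMM} = \hat \bU_n((\hat \bSigma_n)^{-1})$ is the top-$r$ eigenspace furnished by Proposition~\ref{prop:eigen}.

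Next I would push the optimality through the trace. Taking asymptotic expectations and using that $\Tr(\cdot)$ is a continuous linear functional, so that $\aE(\Tr \bPsi_n) = \Tr(\aE\,\bPsi_n)$ --- this follows from Definition~\ref{def:aE} applied to the matrix sequence $n\bPsi_n$ and the scalar sequence $n\,\Tr\bPsi_n = \Tr(n\bPsi_n)$, which converge in distribution by the continuous mapping theorem --- I obtain
\[
\aE\big( \| \hat \bU_n^{\GMM}[\hat \bU_n^{\GMM}]^T - \bU^*(\bU^*)^T \|_F^2 \big) = 2\,\Tr\big(\aE\,\bPsi_n((\hat \bSigma_n)^{-1})\big),
\]
and the analogous identity holds for $\bW_n$. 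Theorem~\ref{thm:opt2} supplies $\aE\,\bPsi_n((\hat \bSigma_n)^{-1}) \preceq \aE\,\bPsi_n(\bW_n)$ in the L\"owner order; since $A \preceq B$ implies $\Tr A \le \Tr B$, chaining these relations (absorbing the $1+o(1)$ factor per the convention of the footnote to Theorem~\ref{thm:opt2}) yields the desired inequality.

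The main obstacle is not in the corollary proper --- given Theorem~\ref{thm:opt2} the argument is essentially bookkeeping --- but in justifying the interchange $\aE \circ \Tr = \Tr \circ \aE$, which presupposes that $n\,\bPsi_n(\bW_n)$ converges in distribution to a limit with finite first moment. This is exactly the Wishart-type limit underlying Theorem~\ref{thm:opt2}: writing $\bPsi_n = (\bP_{\hat \bU_n}^\bot \bU^*)^T (\bP_{\hat \bU_n}^\bot \bU^*)$ exhibits it as a quadratic form in the $\sqrt{n}$-consistent quantity $\bP_{\hat \bU_n}^\bot \bU^*$, whose scaled version is asymptotically Gaussian by \eqref{asym:1}. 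I would therefore phrase the comparison at the level of asymptotic expectations, upgrading to ordinary expectations only under the uniform-integrability hypothesis exactly as in the second display of Theorem~\ref{thm:opt2}.
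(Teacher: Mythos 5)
Your proposal is correct and follows essentially the same route as the paper's own proof: reduce the Frobenius criterion to $\Tr(\bPsi_n)$ via \eqref{eq:mse}--\eqref{eq:connect}, justify the interchange of trace and asymptotic expectation through the distributional limit of $n\bPsi_n$, and then convert the L\"owner-order optimality of Theorem~\ref{thm:opt2} into a trace inequality. The only cosmetic differences are that the paper re-derives the identity $\Tr(\bPsi_n)=\|\sin\bTheta_n\|_F^2$ explicitly via the SVD of $\hat\bU_n^T\bU^*$ inside the corollary's proof, and it delegates the invertibility of $\hat\bSigma_n$ to the closing lines of the proof of Theorem~\ref{thm:opt2}, whereas you argue both points directly.
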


Similar to Theorem~\ref{thm:opt2}, the asymptotic expectation can be replaced by the usual expectation if uniform integrability is satisfied.

Our results in this subsection rely on Assumption~\ref{ass:sigma}, namely, the invertibility of $\bSigma^*$. However, if $m$ is not very small, $\bSigma^*$ may be singular or nearly singular in practice. Before addressing this issue and justifying the de-noising step suggested in \eqref{def:generalinv}, we first focus on a more practical aspect---estimating the dimension of the subspace $r$ if it is not known in advance---in the next subsection.

\subsection{Estimating subspace dimension $r$}\label{sec:dimest}

In Section~\ref{sec:subspGMM} and \ref{sec:normal}, we assumed that the subspace dimension $r$ is known. However, in practice, this is usually unknown. For example, in finite mixture models, $r$ is the number of mixtures, which typically needs to be estimated from the data. Therefore, in these cases, it is of interest to determine $r$ prior to obtaining the subspace GMM estimator.

Fortunately, the simple form of \eqref{opt:eigen} allows us to estimate $r$ consistently using only the eigenvalues of $\bV_n \bW_n \bV_n^T$. For a symmetric matrix $\bA$, let us denote by $\lambda_j(\bA)$ the $j$th largest eigenvalue of $\bA$.

\begin{thm}\label{thm:dimest}
Suppose $r<p$, and Assumptions~\ref{ass:span} and ~\ref{ass:sigma} hold. If $\bW_n \xrightarrow{p} \bW^* = (\bSigma^*)^{-1}$, where $\bSigma^*$ is defined in \eqref{def:sigma0}, then $\sum_{j=r+1}^p \lambda_j(  \bV_n \bW_n \bV_n^T ) = O_P(n^{-1})$. Moreover, assuming Assumption~\ref{ass:uniform} in addition, we have
\begin{equation*}
n(p-r) \sum_{j=r+1}^p \lambda_j(  \bV_n \bW_n \bV_n^T ) \xrightarrow{d} \chi^2_{(p-r)(m-r)}.
\end{equation*}
In particular, the choice $\bW_n = (\hat \bSigma_n)^{-1}$ in our procedure satisfies $\bW_n \xrightarrow{p} (\bSigma^*)^{-1}$.
\end{thm}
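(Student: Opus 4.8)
The plan is to recognize $\sum_{j=r+1}^p \lambda_j(\bV_n\bW_n\bV_n^T)$ as the minimized GMM objective $Q(\hat\bU_n)$ and then establish for it the classical distribution theory of the overidentification (Hansen--Sargan) statistic, carried out on the Grassmann manifold. Since $\hat\bU_n$ collects the top $r$ eigenvectors of $\bV_n\bW_n\bV_n^T$, the bottom $p-r$ eigenvalues sum to a residual trace,
\begin{equation*}
\sum_{j=r+1}^p\lambda_j(\bV_n\bW_n\bV_n^T) = \Tr\big(\bP_{\hat\bU_n}^\bot\,\bV_n\bW_n\bV_n^T\big) = Q(\hat\bU_n),
\end{equation*}
the last equality by the computation preceding Proposition~\ref{prop:eigen}. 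For the $O_P(n^{-1})$ bound I would use the variational inequality $Q(\hat\bU_n)\le Q(\bU^*) = \sum_{k,\ell}(\bW_n)_{k\ell}\bv_k^T\bP_{\bU^*}^\bot\bv_\ell$. Because $\E\bv_\ell\in\cS$, we have $\bP_{\bU^*}^\bot\bv_\ell = \bP_{\bU^*}^\bot(\bv_\ell-\E\bv_\ell) = O_P(n^{-1/2})$ by the central limit theorem (finite second moments); together with $(\bW_n)_{k\ell} = O_P(1)$ and $m$ fixed this gives $\sum_{j>r}\lambda_j = O_P(n^{-1})$, needing only Assumption~\ref{ass:sigma}.

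\textbf{Linearization.} For the distributional limit I would expand $\sqrt n\,\bP_{\hat\bU_n}^\bot\bV_n$ around $\bU^*$. Writing $\bD_n := \bV_n - \E\bV_n$ and using $\E\bV_n = \bU^*(\bG^*)^T$ (valid since the columns of $\E\bV_n$ lie in $\cS$), I decompose
\begin{equation*}
\bP_{\hat\bU_n}^\bot\bV_n = \bP_{\hat\bU_n}^\bot\bU^*(\bG^*)^T + \bP_{\bU^*}^\bot\bD_n + (\bP_{\hat\bU_n}^\bot - \bP_{\bU^*}^\bot)\bD_n,
\end{equation*}
where the last term is $O_P(n^{-1})$ as both factors are $O_P(n^{-1/2})$. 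The key input is the linear representation behind Theorem~\ref{thm:normality}, $\sqrt n\,\bP_{\hat\bU_n}^\bot\bU^* = -\bP_{\bU^*}^\bot(\sqrt n\,\bD_n)(\bA^*)^T + o_P(1)$ (the first-order invariant-subspace perturbation that the proof of Theorem~\ref{thm:normality} establishes, whose covariance matches \eqref{asym:1}). Substituting and setting $\bPi := \bG^*\bA^*$ gives
\begin{equation*}
\sqrt n\,\bP_{\hat\bU_n}^\bot\bV_n = \bP_{\bU^*}^\bot\,(\sqrt n\,\bD_n)\,(\bI_m - \bPi^T) + o_P(1).
\end{equation*}

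\textbf{Identifying the $\chi^2$.} Under Assumption~\ref{ass:uniform}, $\sqrt n\,\vect(\bD_n)\xrightarrow{d} N(\bzero,\bS^*\otimes\bI_p)$, and $(p-r)\bW_n\xrightarrow{p}(p-r)(\bSigma^*)^{-1} = (\bS^*)^{-1}$ by Lemma~\ref{lem:sigmaS}. Writing $Q(\hat\bU_n) = \Tr\big((\bP_{\hat\bU_n}^\bot\bV_n)\bW_n(\bP_{\hat\bU_n}^\bot\bV_n)^T\big)$ and applying Slutsky with the expansion above, continuous mapping yields
\begin{equation*}
n(p-r)\sum_{j=r+1}^p\lambda_j \xrightarrow{d} \vect(\bY)^T\big((\bS^*)^{-1}\otimes\bI_p\big)\vect(\bY), \qquad \vect(\bY)\sim N(\bzero,\bGamma),
\end{equation*}
with $\bGamma = \big[(\bI_m-\bPi)\bS^*(\bI_m-\bPi)^T\big]\otimes\bP_{\bU^*}^\bot$. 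To see this is a chi-square, conjugate by $(\bS^*)^{-1/2}\otimes\bI_p$: the form becomes $\|\br\|_2^2$ with $\br\sim N(\bzero,\tilde\bGamma)$ and $\tilde\bGamma = \big[(\bS^*)^{-1/2}(\bI_m-\bPi)\bS^*(\bI_m-\bPi)^T(\bS^*)^{-1/2}\big]\otimes\bP_{\bU^*}^\bot$. The crucial fact is that, since $\bPi$ is invariant to the scaling of $\bW^*$ I may take $\bW^* = (\bS^*)^{-1}$, so $\bPi = \bG^*[(\bG^*)^T(\bS^*)^{-1}\bG^*]^{-1}(\bG^*)^T(\bS^*)^{-1}$ is self-adjoint in the $(\bS^*)^{-1}$ inner product; equivalently $(\bS^*)^{-1/2}\bPi(\bS^*)^{1/2} = \tilde\bPi$ is the ordinary orthogonal projection onto $\spann\{(\bS^*)^{-1/2}\bG^*\}$. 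Hence the bracketed factor collapses to $(\bI_m-\tilde\bPi)$, an orthogonal projection of rank $m-r$ (Assumption~\ref{ass:span} makes $\bG^*$ rank $r$), and $\tilde\bGamma = (\bI_m-\tilde\bPi)\otimes\bP_{\bU^*}^\bot$ is itself an orthogonal projection of rank $(m-r)(p-r)$. A Gaussian with projection covariance has squared norm equal to $\chi^2$ with degrees of freedom the rank, giving $\chi^2_{(p-r)(m-r)}$. Finally $\bW_n = (\hat\bSigma_n)^{-1}\xrightarrow{p}(\bSigma^*)^{-1}$ follows from consistency of the plug-in $\hat\bSigma_n$ (Step~2 of the procedure) and Assumption~\ref{ass:sigma} via continuity of matrix inversion.

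The main obstacle is the linearization step: rigorously justifying the first-order expansion of the estimated eigenspace with an $o_P(1)$ remainder on the Stiefel/Grassmann manifold, i.e.\ controlling the higher-order invariant-subspace perturbation terms. This is precisely the delicate part already handled in the proof of Theorem~\ref{thm:normality}, so I would reuse that expansion verbatim; the remaining work is the purely linear-algebraic identification of $\tilde\bGamma$ as a projection together with the rank count, and a routine Slutsky/continuous-mapping argument for the random weight $\bW_n$.
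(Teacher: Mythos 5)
Your proposal is correct, but it takes a genuinely different route from the paper. The paper proves Theorem~\ref{thm:dimest} by applying Kato's second-order eigenvalue perturbation expansion (Lemma~\ref{lem:pert}) directly to $\bV_n\bW_n\bV_n^T = \bT + \omega_n\bT^{(1)} + \omega_n^2\bT^{(2)}$ with $\omega_n = n^{-1/2}$: it shows the first-order term $\Tr(\bT^{(1)}\bP_{\bU^*}^\bot)$ vanishes, identifies the second-order term as a quadratic form in $\sqrt{n}\,(\bV_n-\E\bV_n)$ with kernel $\bW^*-\tilde\bW^*$ where $\tilde\bW^* = \bW^*(\bV^*)^T[\bV^*\bW^*(\bV^*)^T]^+\bV^*\bW^*$, and then uses Lemma~\ref{lem:pseudo} to recognize $(p-r)(\bS^*)^{1/2}(\bW^*-\tilde\bW^*)(\bS^*)^{1/2}$ as a rank-$(m-r)$ projection; a separate argument patches the rigor of applying Kato's fixed-matrix expansion to random matrices (the very gap the paper points out in \cite{Li91,Li92}). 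You instead identify $\sum_{j>r}\lambda_j(\bV_n\bW_n\bV_n^T)$ with the minimized GMM criterion $Q(\hat\bU_n)$ via Proposition~\ref{prop:eigen}, which yields the $O_P(n^{-1})$ claim by the one-line variational bound $0 \le Q(\hat\bU_n)\le Q(\bU^*) = O_P(n^{-1})$ (no perturbation theory at all), and then derive the $\chi^2$ limit by the Hansen--Sargan overidentification argument: the linear representation $\sqrt{n}\,\bP_{\hat\bU_n}^\bot\bU^* = -\bP_{\bU^*}^\bot(\sqrt{n}\,\bD_n)(\bA^*)^T + o_P(1)$, which indeed follows from \eqref{eq:useful1-0} in the proof of Theorem~\ref{thm:normality} exactly as you claim, plus the whitening computation showing $\tilde\bGamma = (\bI_m - \tilde\bPi)\otimes\bP_{\bU^*}^\bot$ is an orthogonal projection of rank $(m-r)(p-r)$. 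Your kernel $(\bI_m-\tilde\bPi)$ and the paper's $(p-r)(\bS^*)^{1/2}(\bW^*-\tilde\bW^*)(\bS^*)^{1/2}$ are the same projection, so both arguments land on the identical limit. What each buys: your route is more elementary for the first claim, makes the degrees-of-freedom count transparent as the classical $J$-statistic argument, and reuses machinery already established for Theorem~\ref{thm:normality}; the paper's route is self-contained in eigenvalue perturbation theory, needs no linearization of the estimated eigenspace, and its careful remainder control via Kato's bounds is precisely what repairs the non-rigorous step in the cited sliced-regression literature. One point to make explicit if you write this up: the bound $\|\bP_{\hat\bU_n}^\bot-\bP_{\bU^*}^\bot\|_F = O_P(n^{-1/2})$ used to discard the cross term $(\bP_{\hat\bU_n}^\bot-\bP_{\bU^*}^\bot)\bD_n$ requires the conclusion of Theorem~\ref{thm:normality} (not merely consistency), which is available here because $\bW^* = (\bSigma^*)^{-1}\succ\bzero$ under Assumption~\ref{ass:sigma} makes Assumption~\ref{ass:W} hold.
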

This result predicts a large eigen-gap between top $r$ eigenvalues and the remaining eigenvalues of $\bV_n \bW_n \bV_n^T$ in the large sample regime: as $n$ grows to $\infty$, the sum of the remaining eigenvalues scale with $1/n$, whereas, under Assumption~\ref{ass:span}, $\lambda_r(  \bV_n \bW_n \bV_n^T )$ remains a constant order. This suggests a simple threshold-based method:
\begin{equation*}
\hat r_{\tau} = \argmax_{k} \{ k \, | \,\lambda_k(  \bV_n \bW_n \bV_n^T ) > \tau_n \},
\end{equation*}
where $\tau_n$ is the threshold value with $\tau_n = o(1)$ and $n\tau_n \to \infty$. In cases where Assumption~\ref{ass:uniform} holds and the chi-squared distribution is a suitable approximation, one can also use the following estimator, which is similar to \cite{Li91} in spirit.
\begin{equation}\label{def:rhateta}
\hat r_{\eta} = \argmin_{k} \{ k \,| \, n(p-k)\, \sum_{j > k} \lambda_j(  \bV_n \bW_n \bV_n^T ) \le \eta_n(k) \}.
\end{equation}
The parameter $\eta_n(k)$ is related to the quantiles of $\chi^2_{(p-k)(m-k)}$. In practice, for example, one may choose $\eta_n(k)$ to be the $95$th quantile of $\chi^2_{(p-k)(m-k)}$.  The following corollary establishes the consistency of both estimators.
\begin{cor}\label{cor:dimest}
Suppose $r<p$, Assumptions~\ref{ass:span} and ~\ref{ass:sigma} hold,  and $\bW_n \xrightarrow{p} (\bSigma^*)^{-1}$. Then, with $\tau_n = o(1)$ and $n\tau_n \to \infty$, the estimator $\hat r_\tau$ is consistent; with $\eta_n(r) = o(n)$ and $\eta_n(r) \to \infty$, the estimator $\hat r_\eta$ is consistent.
\end{cor}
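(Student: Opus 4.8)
The plan is to leverage Theorem~\ref{thm:dimest}, which already supplies the two asymptotic facts that drive both estimators: under the stated assumptions, $\sum_{j=r+1}^p \lambda_j(\bV_n \bW_n \bV_n^T) = O_P(n^{-1})$, while Assumption~\ref{ass:span} guarantees that $\lambda_r(\bV_n \bW_n \bV_n^T)$ stays bounded away from zero with probability tending to $1$. The whole proof is an exercise in combining these two scales of eigenvalue behavior with the prescribed rates on the tuning parameters. I would treat the two estimators separately but in parallel, since each reduces to controlling two error events: overestimation ($\hat r > r$) and underestimation ($\hat r < r$).

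For $\hat r_\tau$, first I would show $\P(\hat r_\tau \ge r) \to 1$. It suffices that $\lambda_r(\bV_n \bW_n \bV_n^T) > \tau_n$ with high probability; since the left side is $\Theta_P(1)$ by Assumption~\ref{ass:span} (the top $r$ eigenvalues converge to the positive eigenvalues of the limiting matrix $\bV^* \bW^* (\bV^*)^T$, whose rank is exactly $r$ by the nondegeneracy assumption) and $\tau_n = o(1)$, this is immediate. For the reverse direction $\P(\hat r_\tau \le r)\to 1$, I would use that $\lambda_{r+1}(\bV_n \bW_n \bV_n^T) \le \sum_{j>r}\lambda_j(\bV_n \bW_n \bV_n^T) = O_P(n^{-1})$, so $n\lambda_{r+1} = O_P(1)$; since $n\tau_n \to \infty$, we get $\lambda_{r+1} = o_P(\tau_n)$, forcing the $(r+1)$-th eigenvalue below threshold with high probability. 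Combining the two events gives $\hat r_\tau = r$ with probability $1-o(1)$.

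For $\hat r_\eta$, the argument mirrors this but works with the cumulative quantity $T_n(k) := n(p-k)\sum_{j>k}\lambda_j(\bV_n \bW_n \bV_n^T)$. At the truth $k=r$, Theorem~\ref{thm:dimest} gives $T_n(r) = O_P(1)$ (indeed it converges in distribution to $\chi^2_{(p-r)(m-r)}$ under Assumption~\ref{ass:uniform}, but boundedness in probability is all that is needed), so the condition $\eta_n(r)\to\infty$ ensures $T_n(r) \le \eta_n(r)$ with high probability, yielding $\P(\hat r_\eta \le r)\to 1$. For the lower bound $\P(\hat r_\eta \ge r)\to 1$, I would examine $T_n(k)$ for $k<r$: here the sum $\sum_{j>k}\lambda_j$ includes at least one of the top-$r$ eigenvalues, namely $\lambda_r = \Theta_P(1)$, so $T_n(k) \ge n(p-k)\lambda_r$ grows like $n$; since $\eta_n(r) = o(n)$ and $\eta_n(k)$ is of comparable (quantile-based) order for each fixed $k<r$, the defining inequality fails for all $k<r$ with high probability, ruling out underestimation.

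The main obstacle, such as it is, lies in the $\hat r_\eta$ case: one must be careful that the threshold sequence $\eta_n(k)$ is simultaneously large enough at $k=r$ (to avoid rejecting the true dimension) and small enough relative to $n$ at $k<r$ (to force rejection of too-small dimensions). The hypotheses $\eta_n(r)\to\infty$ and $\eta_n(r)=o(n)$ are stated only at $k=r$, so I would need to argue that the quantile-based choices $\eta_n(k)$ for $k<r$ remain $o(n)$ as well; this is routine because for each fixed $k$ the degrees of freedom $(p-k)(m-k)$ are fixed constants, so the relevant chi-squared quantiles are $O(1)$ and the product against the $\Theta_P(n)$ growth of $T_n(k)$ diverges. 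Beyond this bookkeeping, the proof is a direct eigenvalue-separation argument and requires no new estimates beyond Theorem~\ref{thm:dimest}.
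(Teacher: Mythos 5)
Your proof is correct and takes essentially the same route as the paper: bound $\lambda_r(\bV_n\bW_n\bV_n^T)$ away from zero via Assumption~\ref{ass:span} (the limiting matrix $\bV^*(\bSigma^*)^{-1}(\bV^*)^T = \bU^*\bigl((\bG^*)^T(\bSigma^*)^{-1}\bG^*\bigr)(\bU^*)^T$ has rank exactly $r$), use Theorem~\ref{thm:dimest} to make the trailing eigenvalues $O_P(n^{-1})$, and play these two scales against the rates of $\tau_n$ and $\eta_n$. In fact you are slightly more careful than the paper, whose treatment of $\hat r_\eta$ is only ``the second part follows similarly'': your observation that one must also control $\eta_n(k)$ for $k<r$ (handled by the quantile-based choice being $O(1)$ for fixed $p,m,k$) addresses a detail the paper leaves implicit.
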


In the literature, especially in factor models, eigenvalues are the basis of many methods for rank or dimension estimation (e.g., the number of factors in factor models). An early work is the scree test method \citep{Cat66}, which sorts and plots the eigenvalues in descending order. Many recent works on dimension estimation focus on factor models, including \cite{BaiNg02}, in which one minimizes an objective function that is similar to information criterion, and \cite{Ona10,LamYao12,AhnHor13}, in which one use differences or ratios of eigenvalues to determine the number of factors, etc.

Though being related, these recent works derive sophisticated methods based on different models and typically consider different regimes ($p$ grow with $n$).

\subsection{Redundancy and singular $\bSigma^*$}\label{sec:singular}

In this subsection, our goal is to further relax the non-singular Assumption~\ref{ass:sigma}, and justifies the use of thresholding in Step 4 of our estimation procedure. Often, we derive the vectors $\bv_\ell$ from different approaches, hoping to improve the estimation of $\cS$. However, in the process, we may inadvertently introduce redundant vectors that lead to a singular $\bSigma^*$. In our simulations, for example, we do observe such singularity issue. Nevertheless, if the singularity of $\bSigma^*$ is solely caused by redundant vectors, then, as we will show, we can simply resort to the thresholding without losing any optimality guarantee.

Let us first define precisely what we mean by `redundancy'. For simplicity, let us write $\bff_\ell := \bff_\ell(\xx_i, y_i)$, suppressing the dependency on $i$ due to i.i.d.\ data. Recall the notation $\bF = [\bff_1,\ldots,\bff_m] \in \R^{p \times m}$, where we suppress the subscript $i$ due to the i.i.d.\ assumption. Let us partition $\bF$ and $\bSigma^*$ as follows:
\begin{align}\label{eq:partition}
\begin{split}
&\bF = [ \bF_1, \bF_2 ], \quad \text{where} ~ \bF_1 \in \R^{p \times m_1}, \bF_2 \in \R^{p \times m_2}, \\
& \bSigma^* = \left[ \begin{array}{cc}
\bSigma^*_{11} & \bSigma^*_{12} \\
\bSigma^*_{21} & \bSigma^*_{22}
\end{array} \right], \quad
\text{where} ~ \bSigma^*_{ab} \in \R^{m_a \times m_b}, a,b \in \{1,2\}.
\end{split}
\end{align}
By construction, the dimensions must satisfy $m_1 + m_2 = m$. We require that such partition separates the non-redundant vectors (columns of $\bF_1$) from the redundant vectors (columns of $\bF_2$), as will soon be explained. We require $\bF_1$ to be well behaved in the sense that $\bSigma_{11}^*$ is invertible. Moreover, we let
\begin{equation}\label{eq:F2}
\tilde \bF_2 := \bF_2 - \bF_1 (\bSigma_{11}^*)^{-1} \bSigma_{12}^* \in \R^{p \times m_2}
\end{equation}
be the part of $\bF_2$ not explained by $\bF_1$. This construction ensures $\bF_1$ and $\tilde \bF_2$ have zero cross-covariance, i.e., $\E\, \bF_1^T \bF_2 = \bzero$. Also, as seen in Section~\ref{sec:combine}, the expectation of any linear combination of vectors $\bff_\ell$ (or $\bv_\ell$) always lies in $\cS$, and thus $\E \,\tilde \bF_2$ satisfies the condition \eqref{def:v2}. We call $\bF_2$ redundant if $\E \,\tilde \bF_2 = \bzero$; in other words, in expectation, $\bF_2$ is fully explained by $\bF_1$. This separation of well-behaved vectors and redundant vectors is summarized the following assumption.
\begin{ass}[partial invertibility]\label{ass:redundancy}
Suppose $0 \le m_2 < m$, and that $\bF$ and $\bSigma^*$ can be partitioned as in \eqref{eq:partition} such that $\bSigma_{11}^*$ is invertible, and that $\tilde \bF_2$, as defined in \eqref{eq:F2}, satisfies $\E\, \tilde \bF_2 = \bzero$.
\end{ass}
Note that we permit $m_2=0$, i.e., $\bSigma^*$ is non-singular and no partition is needed, which includes Assumption~\ref{ass:sigma} as a special (less general) case. This assumption relaxes the invertibility assumption in that it allows scenarios where some $\bff_\ell$ are fully explained by others, often due to overlapping information practitioners introduce when collecting these $\bff_\ell$. In particular, this assumption is satisfied if $\bSigma_{11}^*$ is invertible and $\bF_2$ is a linear transformation of $\bF_1$.

We also note that our notion of redundancy is related to that in \cite{Bre99}: in both cases, the redundant part is explained by the other part in a way similar to \eqref{eq:F2}, but our covariance-like matrix $\bSigma^*$ is tailored to the special structure here.


We will generalize Section~\ref{sec:dimest} by showing that the optimal limiting weighting matrix is $(\bSigma^*)^{+}$, i.e., the pseudoinverse of $\bSigma^*$, and that the thresholding step \eqref{def:generalinv} guarantees $\bW_n \xrightarrow{p} (\bSigma^*)^{+}$ for a suitable parameter $\delta_n$. The optimality of $(\bSigma^*)^{+}$ is established in a slightly larger family of limiting weighting matrices:
\begin{equation*}
\cW := \{ \bW^* \in \R^{m \times m}  | \, \bW^* \succeq \bzero \,, (\bG^*)^T \bW^* \bG^* \succ \bzero \}.
\end{equation*}
This family $\cW$ includes all positive semidefinite matrices. We can show that $(\bG^*)^T (\bSigma^*)^{+}\bG^*$ is always positive definite, and thus $(\bSigma^*)^{+} \in \cW$. Moreover,  any limiting weighting matrix in $\cW$ leads to a consistent estimator for $\cS$, which is a generalization of Theorem~\ref{thm:consist1} (see Theorem~\ref{thm:consist2} in the supplementary materials). The next result is a formal optimality statement.

\begin{thm}\label{thm:opt3}
Suppose Assumptions~\ref{ass:span} and \ref{ass:redundancy} hold, and $\bW_n \xrightarrow{p} (\bSigma^*)^{+}$. Then, $(\bG^*)^T (\bSigma^*)^{+}\bG^* \succ\bzero$, and for any sequence $\{ \bW_n'\}$ with $ \bW_n' \xrightarrow{p} \bW^* \in \cW$,
\begin{equation}\label{ineq:opt3}
\aE\Big(\bPsi_n(\bW_n) \Big) = \frac{1}{n} \,[(\bG^*)^T (\bSigma^*)^{+}\bG^* ]^{-1}  \preceq \aE \Big(\bPsi_n(\bW^*) \Big) = \aE \Big(\bPsi_n(\bW_n') \Big).
\end{equation}
In particular, $\bW_n$ in \eqref{def:generalinv} with $\delta_n = o(1)$ and $\sqrt{n}\, \delta_n \to \infty$ satisfies $\bW_n \xrightarrow{p} (\bSigma^*)^{+}$.
\end{thm}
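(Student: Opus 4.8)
My plan is to reduce the possibly-singular problem to the invertible case of Theorem~\ref{thm:opt2} by exploiting the block structure granted by Assumption~\ref{ass:redundancy}, and then to run a generalized-inverse Gauss--Markov argument. The organizing device is the invertible matrix
\begin{equation*}
\bT = \begin{pmatrix} \bI_{m_1} & -(\bSigma_{11}^*)^{-1}\bSigma_{12}^* \\ \bzero & \bI_{m_2} \end{pmatrix}, \qquad \bF\bT = [\bF_1, \tilde\bF_2],
\end{equation*}
with $\tilde\bF_2$ as in \eqref{eq:F2}. Since $\E\,\bF_1^T\bP_{\bU^*}^\bot\tilde\bF_2 = \bSigma_{12}^* - \bSigma_{11}^*(\bSigma_{11}^*)^{-1}\bSigma_{12}^* = \bzero$, congruence by $\bT$ block-diagonalizes the covariance-like matrix, $\bT^T\bSigma^*\bT = \diag(\bSigma_{11}^*, \tilde\bSigma_{22}^*)$ with $\tilde\bSigma_{22}^* = \E\,\tilde\bF_2^T\bP_{\bU^*}^\bot\tilde\bF_2$, while the redundancy condition $\E\,\tilde\bF_2 = \bzero$ kills the lower block of $\bG^*$, giving $\bT^T\bG^* = [\bG_1^*; \bzero]$ with $\bG_1^* := (\E\bF_1)^T\bU^*$. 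I would first record the decisive structural fact: the columns of $\bG^*$ lie in the range of $\bSigma^*$. Indeed, $[\bG_1^*;\bzero]$ lies in $\R^{m_1}\oplus\{\bzero\}\subseteq \spann(\bSigma_{11}^*)\oplus\spann(\tilde\bSigma_{22}^*) = \spann(\bT^T\bSigma^*\bT)$, and applying $\bT^{-T}$ transports this into $\spann(\bSigma^*)$. Solving $\tilde\bSigma^*\bY = [\bG_1^*;\bzero]$ for the minimum-norm solution then yields $\bY = [(\bSigma_{11}^*)^{-1}\bG_1^*; \bzero]$, whence
\begin{equation*}
(\bG^*)^T(\bSigma^*)^+\bG^* = \bG_1^{*T}(\bSigma_{11}^*)^{-1}\bG_1^*.
\end{equation*}
Under Assumption~\ref{ass:span}, $\E\,\tilde\bF_2 = \bzero$ forces $\spann(\text{cols of }\E\bF_1) = \cS$, so $\rank(\E\bF_1) = r$ and $\bG_1^*$ has full column rank; combined with $(\bSigma_{11}^*)^{-1}\succ\bzero$ this gives positive definiteness, proving the first assertion.

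Next I would compute the asymptotic expectation for a generic $\bW^*\in\cW$. Invoking the extension of the asymptotic normality (Theorem~\ref{thm:normality}, in the form generalized to $\cW$ alongside Theorem~\ref{thm:consist2}), $\sqrt n\,\bP_{\hat\bU_n}^\bot\bU^*\xrightarrow{d}\bZ$ is Gaussian, so by $\bPsi_n = (\bP_{\hat\bU_n}^\bot\bU^*)^T(\bP_{\hat\bU_n}^\bot\bU^*)$ from \eqref{def:psi} one has $n\bPsi_n\xrightarrow{d}\bZ^T\bZ$ and $\aE(\bPsi_n(\bW^*)) = \tfrac1n\,\E[\bZ^T\bZ]$. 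The key simplification is that forming $\bZ^T\bZ$ sums over the $p$ rows, which collapses the big block covariance into $\bSigma^*$: writing $\bA^* = [(\bG^*)^T\bW^*\bG^*]^{-1}(\bG^*)^T\bW^*$ and using \eqref{def:sigma}, the entrywise computation $\E[\bZ^T\bZ]_{ab} = \sum_k \Tr\big(\bP_{\bU^*}^\bot(\sum_{j\ell}A^*_{aj}A^*_{b\ell}\,\bar\bS^*_{j\ell})\bP_{\bU^*}^\bot\big)$ telescopes to $[\bA^*\bSigma^*\bA^{*T}]_{ab}$. Thus
\begin{equation*}
\aE\big(\bPsi_n(\bW^*)\big) = \frac1n\,\bA^*\bSigma^*\bA^{*T} = \frac1n\,[(\bG^*)^T\bW^*\bG^*]^{-1}(\bG^*)^T\bW^*\bSigma^*\bW^*\bG^*[(\bG^*)^T\bW^*\bG^*]^{-1},
\end{equation*}
with no recourse to Assumption~\ref{ass:uniform}; this is exactly why $\bSigma^*$ alone governs the criterion.

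The optimality inequality then follows from a generalized Gauss--Markov argument. Set $\bB^* := [(\bG^*)^T(\bSigma^*)^+\bG^*]^{-1}(\bG^*)^T(\bSigma^*)^+$, which is well defined by the first assertion and satisfies $\bB^*\bSigma^*\bB^{*T} = [(\bG^*)^T(\bSigma^*)^+\bG^*]^{-1}$ via $(\bSigma^*)^+\bSigma^*(\bSigma^*)^+ = (\bSigma^*)^+$. Both $\bA^*$ and $\bB^*$ obey the unbiasedness identity $\bA^*\bG^* = \bB^*\bG^* = \bI_r$. The cross term vanishes precisely because of the range condition established in the first paragraph: since $\bSigma^*(\bSigma^*)^+\bG^* = \bG^*$, one gets $\bA^*\bSigma^*\bB^{*T} = \bA^*\bG^*[(\bG^*)^T(\bSigma^*)^+\bG^*]^{-1} = [(\bG^*)^T(\bSigma^*)^+\bG^*]^{-1} = \bB^*\bSigma^*\bB^{*T}$, so $(\bA^*-\bB^*)\bSigma^*\bB^{*T} = \bzero$ and, by symmetry, the transpose too. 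Hence
\begin{equation*}
\bA^*\bSigma^*\bA^{*T} = \bB^*\bSigma^*\bB^{*T} + (\bA^*-\bB^*)\bSigma^*(\bA^*-\bB^*)^T \succeq [(\bG^*)^T(\bSigma^*)^+\bG^*]^{-1},
\end{equation*}
because $\bSigma^*\succeq\bzero$ makes the residual PSD. Taking $\bW^* = (\bSigma^*)^+$ attains equality, giving \eqref{ineq:opt3}; the equality $\aE(\bPsi_n((\bSigma^*)^+)) = \aE(\bPsi_n(\bW_n))$ holds since the asymptotic expectation depends only on the limiting weight.

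Finally, for the thresholding claim I would argue that $\hat\bSigma_n = \bSigma^* + O_P(n^{-1/2})$ by the central limit theorem for the plug-in \eqref{def:sigmahat} together with consistency of $\hat\bU^0$. The nonzero eigenvalues of $\bSigma^*$ are bounded below by some $c>0$, so their empirical counterparts eventually exceed $\delta_n = o(1)$ and are inverted; the eigenvalues of $\hat\bSigma_n$ attached to the null space of $\bSigma^*$ are $O_P(n^{-1/2}) = o_P(\delta_n)$ since $\sqrt n\,\delta_n\to\infty$, hence are thresholded to zero. Combined with convergence of the spectral projectors (eigen-gap bounded away from zero on the nonzero part), this yields $\bW_n = \psi(\hat\bSigma_n)\xrightarrow{p}(\bSigma^*)^+$ for $\bW_n$ in \eqref{def:generalinv}. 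I expect the main obstacle to be the first paragraph: verifying that Assumption~\ref{ass:redundancy} is exactly what places the columns of $\bG^*$ in $\spann(\bSigma^*)$, since this range membership is simultaneously responsible for the definiteness in part (a) and for the cross-term cancellation that makes the pseudoinverse weighting optimal; handling $(\bSigma^*)^+$ correctly under congruence (rather than naively transporting it by $\bT$) is the delicate point.
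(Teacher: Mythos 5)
Your proposal is correct, and its central step takes a genuinely different route from the paper's. The overlap first: like the paper, you extend the consistency/normality machinery to semidefinite limits in $\cW$ (this needs only $(\bG^*)^T\bW^*\bG^*\succ\bzero$ together with Theorem~\ref{thm:consist2}), you arrive at $\aE(\bPsi_n(\bW^*))=\frac{1}{n}\bA^*\bSigma^*(\bA^*)^T$ with no appeal to Assumption~\ref{ass:uniform}, and your Weyl-plus-Davis--Kahan treatment of the thresholded inverse is essentially the paper's final step verbatim. The divergence is in how $(\bSigma^*)^+$ is handled. The paper factors $\bSigma^*=\bB\,\diag\big(\bSigma_{11}^*,\,\bSigma_{22}^*-\bSigma_{21}^*(\bSigma_{11}^*)^{-1}\bSigma_{12}^*\big)\bB^T$ (your $\bT$ is exactly $(\bB^T)^{-1}$), asserts the explicit formula $(\bSigma^*)^+=(\bB^T)^{-1}\diag\big((\bSigma_{11}^*)^{-1},(\cdot)^+\big)\bB^{-1}$, and then reduces \eqref{ineq:optimal3} by congruence to the known nonsingular inequality $\bSigma_{11}^*\succeq\bG_1^*[(\bG_1^*)^T(\bSigma_{11}^*)^{-1}\bG_1^*]^{-1}(\bG_1^*)^T$. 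You instead isolate the range condition $\spann(\bG^*)\subseteq\spann(\bSigma^*)$ and run a self-contained Gauss--Markov cancellation using $\bSigma^*(\bSigma^*)^+\bG^*=\bG^*$. This buys something real: the paper's displayed formula is in general only a reflexive generalized inverse, not the Moore--Penrose inverse --- for $\bSigma^*=\left(\begin{smallmatrix}1&c\\ c&c^2\end{smallmatrix}\right)$, the prototypical redundancy $\bff_2=c\,\bff_1$, the formula returns $\diag(1,0)$ while $(\bSigma^*)^+=(1+c^2)^{-2}\bSigma^*$, the symmetry conditions failing because $\bSigma^*$ times the formula is not symmetric. That slip is harmless in the paper precisely because only the quadratic form $(\bG^*)^T(\cdot)\,\bG^*$ enters, and this form takes the same value for every generalized inverse once the columns of $\bG^*$ lie in $\spann(\bSigma^*)$ --- which is exactly the ``decisive structural fact'' your proposal proves. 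One sentence is worth adding to close your ``whence'' step: after transporting the solution of $\diag(\bSigma_{11}^*,\tilde\bSigma_{22}^*)\,\bY=[\bG_1^*;\bzero]$ back through $\bT$, note that $\bT\bY$ and $(\bSigma^*)^+\bG^*$ both solve $\bSigma^*\bX=\bG^*$ and their difference has columns in $\ker(\bSigma^*)$, which is orthogonal to the columns of $\bG^*$; hence $(\bG^*)^T(\bSigma^*)^+\bG^*=(\bG^*)^T\bT\bY=(\bG_1^*)^T(\bSigma_{11}^*)^{-1}\bG_1^*$. With that justification made explicit, your argument is complete and, on this point, tighter than the paper's.
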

This theorem is a generalization of Theorem~\ref{thm:opt2}. Indeed, if $\bSigma^*$ is invertible, then we recover Theorem~\ref{thm:opt2}. Similarly, if we set $\hat \bU_n^{\GMM} = \hat \bU_n(\bW_n)$ where $\bW_n$ is computed from \eqref{def:generalinv}, we obtain a generalized corollary.
\begin{cor}\label{cor:opt3}
Suppose Assumptions~\ref{ass:span} and \ref{ass:redundancy} hold, and we compute $\bW_n$ according to \eqref{def:generalinv} with $\delta_n = o(1)$ and $\sqrt{n}\, \delta_n \to \infty$. Then, for any sequence $\{ \bW_n'\}$ with $ \bW_n' \xrightarrow{p} \bW^* \in \cW$,
\begin{align}\label{ineq:cor3}
\aE\left( \left\| \hat \bU_n^{\GMM} [\hat \bU_n^{\GMM}]^T - \bU^* (\bU^*)^T  \right\|_F^2  \right) \le \aE\left( \left\| \hat \bU_n(\bW_n') [\hat \bU_n(\bW_n')]^T - \bU^* (\bU^*)^T  \right\|_F^2  \right).
\end{align}
\end{cor}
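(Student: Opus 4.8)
The plan is to obtain \eqref{ineq:cor3} as a direct corollary of the matrix-level optimality already established in Theorem~\ref{thm:opt3}, by translating the Loewner-order inequality there into the scalar Frobenius-distance inequality here through the geometric identities \eqref{eq:mse} and \eqref{eq:connect}. Concretely, I would first invoke Theorem~\ref{thm:opt3} to guarantee that, under Assumptions~\ref{ass:span} and~\ref{ass:redundancy}, the matrix $\bW_n$ produced by \eqref{def:generalinv} with $\delta_n = o(1)$ and $\sqrt{n}\,\delta_n \to \infty$ satisfies $\bW_n \xrightarrow{p} (\bSigma^*)^{+}$, so that the theorem applies with this particular $\bW_n$ and with $\hat \bU_n^{\GMM} = \hat \bU_n(\bW_n)$. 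The content of the corollary is then essentially a restatement of \eqref{ineq:opt3} after taking traces.

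The first step is to record the pathwise identity connecting the two objects. For any weighting matrix in the relevant family, combining \eqref{eq:mse} with \eqref{eq:connect} gives, for each fixed sample,
\[
\left\| \hat \bU_n(\bW_n) [\hat \bU_n(\bW_n)]^T - \bU^* (\bU^*)^T \right\|_F^2 = 2\,\Tr\big( \bPsi_n(\bW_n) \big),
\]
and similarly with $\bW_n$ replaced by $\bW_n'$. Applying $\aE(\cdot)$ to both sides, the two quantities appearing in \eqref{ineq:cor3} equal $2\,\aE\big(\Tr(\bPsi_n(\bW_n))\big)$ and $2\,\aE\big(\Tr(\bPsi_n(\bW_n'))\big)$ respectively. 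The next step is to commute $\aE$ with the trace: since $\Tr$ is a fixed linear functional and Theorem~\ref{thm:opt3} already identifies the matrix asymptotic expectation of $\bPsi_n(\bW_n)$ (so that $n\,\bPsi_n(\bW_n)$ converges in distribution to a limit with finite first moment), the continuous mapping theorem yields $n\,\Tr(\bPsi_n(\bW_n)) \xrightarrow{d} \Tr(\text{limit})$ with finite expectation, whence $\aE\big(\Tr(\bPsi_n(\bW_n))\big) = \Tr\big(\aE(\bPsi_n(\bW_n))\big)$ up to the $1+o(1)$ factor built into Definition~\ref{def:aE}; the same holds for $\bW_n'$.

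Finally I would invoke monotonicity of the trace under the Loewner order: by Theorem~\ref{thm:opt3}, $\aE(\bPsi_n(\bW_n)) \preceq \aE(\bPsi_n(\bW_n'))$ as $r \times r$ matrices, and since $A \preceq B$ forces $\Tr(B-A) \ge 0$, i.e.\ $\Tr(A) \le \Tr(B)$, taking traces and multiplying by $2$ gives exactly \eqref{ineq:cor3}. The only genuinely delicate point is the commutation of $\aE$ with $\Tr$ together with the bookkeeping of the $1+o(1)$ factors, which requires the limiting law of $n\,\bPsi_n$ to have a finite first moment; this is supplied by the finite-second-moment hypothesis on the $\bff_\ell$ and by the explicit form of the asymptotic expectation in Theorem~\ref{thm:opt3}. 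Beyond this routine verification, the corollary introduces no new difficulty, as all the substantive work—establishing $(\bSigma^*)^{+} \in \cW$, the optimality of $(\bSigma^*)^{+}$ within $\cW$, and the convergence of the thresholded weight in \eqref{def:generalinv}—has already been carried out in Theorem~\ref{thm:opt3}.
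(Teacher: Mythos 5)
Your proposal is correct and takes essentially the same route as the paper: the paper omits the proof of Corollary~\ref{cor:opt3}, noting it is almost identical to that of Corollary~\ref{cor:procedure}, and that template proof does exactly what you propose --- it identifies $\Tr(\bPsi_n)$ with $\|\sin\bTheta_n\|_F^2 = \tfrac{1}{2}\, d(\hat\cS,\cS)^2$ (the identities \eqref{eq:connect} and \eqref{eq:mse}), passes the trace through the asymptotic expectation via the distributional limit of $n\,\bPsi_n$, and then applies trace monotonicity to the Loewner-order inequality supplied by Theorem~\ref{thm:opt3}. No gaps.
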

We remark that, in practice, a larger $\delta_n$ may be used to select more important $\bff_\ell$ (or their linear combinations), apart from eliminating redundant $\bff_\ell$. This may be useful when $m$ is large. See more discussion in Section~\ref{sec:disc}.

Note also that asymptotically, adding more moments does not increase errors: if $\bV' \in \R^{p \times m'}$ is a submatrix of $\bV$ with $m' < m$, then under similar assumptions, the asymptotic error of the GMM estimator is non-increasing in $m$. That is, denoting the left-hand side of \eqref{ineq:cor3} by $\aE(\mathrm{err}^m)$, we have $\aE(\mathrm{err}^{m}) \le \aE(\mathrm{err}^{m'})$. This is because we can constrain $\bW^*$ to have nonzero values only in a $m' \times m'$ block.

\subsection{Augmenting regular moments}\label{sec:augment}
It is clear that, since the columns of $\bV_n$ are the moments of the data (with or without transformations), the matrix $\bV_n \bW_n \bV_n^T$ takes into consideration all pairwise products of these moments. However, it is not explicitly clear whether the moments themselves are included. The purpose of this subsection is to show that, implicitly, it also covers the moments themselves. 

To elaborate on this point, let us consider the simple factor model we discussed in Section~\ref{sec:factormodel}. Suppose we choose $\bV_n = [\bv_1, \bv_2,\ldots,\bv_{p+1}]$ where $\bv_1$ and $\bv_{1+\ell}$ ($\ell \in [p]$) are vectors given in \eqref{eq:vfactor}. They consist of the first moments and the second moments of $\xx_i$. Equivalently, we can write $\bV_n = [\bar \xx, n^{-1}\bX^T \bX - \sigma^2 \bI_p]$, where $\bX = [\xx_1,\ldots, \xx_n]^T \in \R^{n \times p}$. The GMM estimator finds the top eigenspace of $\bV_n \bW_n \bV_n^T$.

The above estimate is predominately based on the pairwise products of the second moment information.  A natural question is if the first moment condition, based on $n^{-1} \bX^T \bX$, provides additional information. See Section~\ref{sec:extension} also for discussions. A natural aggregation of the information based on the first and second moment conditions is 
\begin{equation*}
\eigen_r\left( \kappa\, n^{-1} \bX^T \bX +  \bV_n \bW_n \bV_n^T \right)
\end{equation*}
with a suitably chosen parameter $\kappa \ge 0$. However, we shall prove that the most efficient estimator we can hope to obtain is simply the GMM estimator (or equivalently $\kappa = 0$).  This means that the information contained in the traditional PCA-based information $n^{-1} \bX^T \bX$ is contained in our second moment method $\eigen_r\left( \bV_n \bW_n \bV_n^T \right)$.

%


\begin{thm}\label{thm:augment}
Assume the same conditions as in Corollary~\ref{cor:opt3}. Also suppose that $\bM_n \in \R^{p \times p}$ is a symmetric submatrix of $\bV_n$ satisfying $\rank(\E \bM_n) = r$ and $\E \bM_n \succeq \bzero$. For any parameter $\kappa \ge 0$ and any sequence $\{ \bW_n'\}$ with $\bW_n' \xrightarrow{p} \bW^* \in \cW$, define
\begin{align*}
\tilde \bU_n := \tilde \bU_n(\kappa,\bW_n') &= \eigen_r\left(\kappa \bM_n +  \bV_n \bW_n' \bV_n^T\right),
\end{align*}
Then, our GMM estimator $\hat \bU_n^\GMM$ satisfies
\begin{equation*}
\aE\left( \left\| \hat \bU_n^{\GMM} [\hat \bU_n^{\GMM}]^T - \bU^* (\bU^*)^T  \right\|_F^2  \right) \le \aE\left( \left\| \tilde \bU_n \tilde \bU_n^T - \bU^* (\bU^*)^T  \right\|_F^2  \right). 
\end{equation*}
\end{thm}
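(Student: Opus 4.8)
The plan is to reduce the comparison to the same generalized least squares (GLS) minimization that already underlies Theorem~\ref{thm:opt3}, after showing that the augmentation $\kappa\bM_n$ merely modifies the \emph{effective weight} that enters the first-order expansion of $\bPsi_n$, without altering its quadratic structure. Write $\bV^*:=[\E\bv_1,\ldots,\E\bv_m]=\bU^*(\bG^*)^T$ (the last identity because $\bP_{\bU^*}^\bot\bV^*=\bzero$) and $\Delta\bV_n:=\bV_n-\bV^*=n^{-1}\sum_i(\bF_i-\E\bF_i)=O_P(n^{-1/2})$. Since $\bM_n$ is a symmetric submatrix of $\bV_n$, there is a fixed $0$--$1$ selection matrix $\bE\in\R^{m\times p}$ with $\bM_n=\bV_n\bE$. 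Put $\bC_n:=\kappa\bM_n+\bV_n\bW_n'\bV_n^T$, with probability limit $\bC^*:=\kappa\,\E\bM_n+\bV^*\bW^*(\bV^*)^T$. The columns of $\E\bM_n$ lie in $\cS$ and $\rank(\E\bM_n)=r=\dim\cS$, so $\mathrm{range}(\E\bM_n)=\cS$; combined with $\E\bM_n\succeq\bzero$, $\kappa\ge 0$, and $(\bG^*)^T\bW^*\bG^*\succ\bzero$ for $\bW^*\in\cW$, this gives $\bC^*=\bU^*\bLambda(\bU^*)^T$ with $\bLambda:=(\bU^*)^T\bC^*\bU^*\succ\bzero$ of rank $r$ and range exactly $\cS$. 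Hence $\bC^*\succeq\bzero$ has exactly $r$ positive eigenvalues and a spectral gap bounded away from zero, so $\eigen_r(\bC^*)$ spans $\cS$ and the standard invariant-subspace perturbation applies to $\tilde\bU_n=\eigen_r(\bC_n)$.

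Next I would extract the leading stochastic term. As in the proof of Theorem~\ref{thm:opt3}, the invariant-subspace expansion (applied to the rotation-free projection $\tilde\bU_n\tilde\bU_n^T$, so that $\bPsi_n$ in \eqref{def:psi} is well defined) gives
\begin{equation*}
\bPsi_n(\kappa,\bW_n')\;\approx\;\bLambda^{-1}(\bU^*)^T\bDelta_n\,\bP_{\bU^*}^\bot\,\bDelta_n\bU^*\bLambda^{-1},\qquad \bDelta_n:=\bC_n-\bC^*,
\end{equation*}
so only the off-diagonal block $\bP_{\bU^*}^\bot\bDelta_n\bU^*=\bP_{\bU^*}^\bot\bC_n\bU^*$ matters. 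Using $\bP_{\bU^*}^\bot\bV^*=\bzero$ and $\bM_n=\bV_n\bE$, both the linear contribution $\kappa\,\bP_{\bU^*}^\bot\bM_n\bU^*$ and the quadratic contribution $\bP_{\bU^*}^\bot\bV_n\bW_n'\bV_n^T\bU^*$ reduce, to first order, to $\bP_{\bU^*}^\bot\Delta\bV_n\,\bK$ with the common effective weight
\begin{equation*}
\bK\;:=\;\kappa\,\bE\bU^*+\bW^*\bG^*\;\in\;\R^{m\times r}.
\end{equation*}
A covariance computation using $\bSigma^*=\E[(\bF_i-\E\bF_i)^T\bP_{\bU^*}^\bot(\bF_i-\E\bF_i)]$ from \eqref{def:sigma} and the i.i.d.\ structure of the columns of $\Delta\bV_n$ then yields
\begin{equation*}
\aE\big(\bPsi_n(\kappa,\bW_n')\big)\;=\;\tfrac1n\,\bLambda^{-1}\bK^T\bSigma^*\bK\,\bLambda^{-1},\qquad\bLambda=(\bG^*)^T\bK=\bK^T\bG^*,
\end{equation*}
where the identity $\bLambda=(\bG^*)^T\bK$ follows by restricting $\bC^*$ to $\cS$ together with $\bV^*=\bU^*(\bG^*)^T$ and $\bM_n=\bV_n\bE$. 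Thus $\aE(\bPsi_n)=\tfrac1n\,\bM(\bK)$ with $\bM(\bK):=[(\bG^*)^T\bK]^{-1}\bK^T\bSigma^*\bK[(\bG^*)^T\bK]^{-1}$, which is the \emph{same} functional form as in Theorem~\ref{thm:opt3}, only evaluated at the augmented $\bK$ rather than at $\bW^*\bG^*$.

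The optimization step is then the Gauss--Markov/GLS argument over \emph{all} admissible $\bK$. Since $\bM(\bK)$ is invariant under $\bK\mapsto\bK\bT$ for invertible $\bT\in\R^{r\times r}$, I may normalize $(\bG^*)^T\bK=\bI_r$ and minimize $\bK^T\bSigma^*\bK$ in the Loewner order under that constraint. Writing $\bK=\bK_0+\bH$ with $\bK_0:=(\bSigma^*)^{+}\bG^*[(\bG^*)^T(\bSigma^*)^{+}\bG^*]^{-1}$, the constraint forces $\bH^T\bG^*=\bzero$; the range condition $\bG^*\in\mathrm{range}(\bSigma^*)$ guaranteed under Assumption~\ref{ass:redundancy} (as in Theorem~\ref{thm:opt3}) makes the cross term vanish, leaving $\bK^T\bSigma^*\bK=[(\bG^*)^T(\bSigma^*)^{+}\bG^*]^{-1}+\bH^T\bSigma^*\bH\succeq[(\bG^*)^T(\bSigma^*)^{+}\bG^*]^{-1}$. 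Hence $\bM(\bK)\succeq[(\bG^*)^T(\bSigma^*)^{+}\bG^*]^{-1}$, the value attained by the GMM choice $\kappa=0$, $\bW^*=(\bSigma^*)^{+}$ (for which $\bK=(\bSigma^*)^{+}\bG^*$). Taking traces and using \eqref{eq:connect} converts the Loewner inequality $\aE(\bPsi_n(\kappa,\bW_n'))\succeq\aE(\bPsi_n^{\GMM})$ into the stated comparison of $\aE(\|\cdot\|_F^2)$.

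The main obstacle is the perturbation bookkeeping of the second paragraph rather than the optimization: one must verify that the linear-in-$\bM_n$ term and the quadratic-in-$\bV_n$ term genuinely combine into a single $\bP_{\bU^*}^\bot\Delta\bV_n\bK$ with an error that is $o_P(n^{-1/2})$ after projection, uniformly enough that the neglected remainder is $o_P(n^{-1})$ in $\bPsi_n$ and survives the passage to the asymptotic expectation of Definition~\ref{def:aE}. Once this is established, the conceptual content is simply that augmentation can only enlarge the family of effective weights $\bK$, whereas the GLS bound already minimizes $\bM(\bK)$ over all of $\R^{m\times r}$; therefore no choice of $(\kappa,\bW^*)$ with $\kappa\ge 0$ and $\bW^*\in\cW$ can outperform the GMM optimum.
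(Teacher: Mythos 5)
Your proposal is correct in substance, but it takes a genuinely different route from the paper's proof. The paper never expands the augmented matrix directly: it replaces $\bM_n$ by its best rank-$r$ approximation $\bM_n^r$, uses the identity $\bM_n^r = \bM_n(\bM_n^r)^{+}\bM_n^T$ to absorb $\kappa\bM_n^r$ into the weighting matrix, writing $\kappa\bM_n^r + \bV_n\bW_n'\bV_n^T = \bV_n\bW_n^{\mathrm{new}}\bV_n^T$ with $\bW_n^{\mathrm{new}}$ converging in probability to an element of $\cW$, so that Corollary~\ref{cor:opt3} applies verbatim; it then shows via Davis--Kahan that the discarded part $\bM_n - \bM_n^r$ moves the projector by only $o_P(n^{-1/2})$, so both augmented estimators share the same asymptotic expectation. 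You instead keep $\kappa\bM_n$ intact, write $\bM_n = \bV_n\bE$ with a selection matrix $\bE$, and show that the augmentation merely changes the \emph{effective weight} in the first-order expansion from $\bW^*\bG^*$ to $\bK = \kappa\bE\bU^* + \bW^*\bG^*$, after which a Gauss--Markov argument over all $\bK \in \R^{m\times r}$ gives the bound. Your route proves a strictly stronger statement (optimality of the GMM choice over \emph{all} effective weights, of which weighting matrices and $\bM_n$-augmentations are special cases) and dispenses with the rank-truncation trick and the Davis--Kahan step; the paper's route buys economy, reusing Corollary~\ref{cor:opt3} without re-deriving any expansion. Two refinements would complete your argument. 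First, the perturbation bookkeeping you flag as the main obstacle can be closed without any generic expansion by the paper's own device in Theorem~\ref{thm:normality}: the eigenvector equation for $\tilde\bU_n$ gives exactly $\bP_{\tilde\bU_n}^\bot \bV_n \bK_n = \bzero$ with $\bK_n = \kappa\bE\tilde\bU_n + \bW_n'\bV_n^T\tilde\bU_n$, which is \eqref{eq:optcond} with $\bW\bG$ replaced by $\bK_n$, and $\bK_n$ (after rotation) converges in probability to your $\bK$; the exact identity \eqref{eq:useful1-0} and the Slutsky step then carry over verbatim. Second, the vanishing cross term in your GLS step requires the range inclusion of the columns of $\bG^*$ in the range of $\bSigma^*$; this is not stated in the paper, but it does follow under Assumption~\ref{ass:redundancy} from the block decomposition of $\bG^*$ and $\bSigma^*$ through the invertible matrix $\bB$ in the proof of Theorem~\ref{thm:opt3}, so it should be spelled out rather than cited. (Minor point: full invariance of your $\bM(\bK)$ under $\bK \mapsto \bK\bT$ does not hold for arbitrary invertible $\bT$; what your argument actually uses is the normalization $\bK \mapsto \bK[(\bG^*)^T\bK]^{-1}$, which does preserve $\bM(\bK)$ because $(\bG^*)^T\bK$ is symmetric for every $\bK$ in your family, and that suffices.)
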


Note that, the condition $\rank(\E \bM_n) = r$ essentially requires that $\bM_n$ contains enough information about $\cS$, which is similar to Assumption~\ref{ass:span}. Also note that, in general, $\kappa \ge 0$ is necessary for consistency. The proof idea is that the dominant part of $\bM_n$ (that is, its best rank-$r$ approximation) can be absorbed into $\bV_n \bW_n' \bV_n^T$, while its remainder is negligible. See supplementary materials for details.

This result connects our method to \cite{LetPel17}, in which eigenvectors of $n^{-1} \bX \bX^T + \gamma \bar \xx \, \bar \xx^T$ are studied. Our result is more general in the sense that we consider a weighting matrix instead of a single parameter $\gamma$.

\section{Numerical simulations}\label{sec:sim}
In this section, we show how our methods may extract information from a pool of commonly used methods, under two kinds of models in Section~\ref{sec:factormodel}--\ref{sec:mixturemodel}: factor models (Example 1) and mixture models (Example 2). Both examples assume a known $r$. In the supplementary materials, we give additional simulation results for the multiple index models (Section~\ref{sec:indexmodel}) and consistent dimension estimation (Section~\ref{sec:dimest}).

In all simulations, the hard-thresholding parameter $\delta_n$ is fixed at $0.01$.

\subsection*{Example 1: first moments may help}

Let us consider the simple factor model \eqref{model:factor}. We generate each entry of $\bB$ independently from $N(0,1)$, and keep it fixed as the loading matrix throughout the data generating process. We also generate $\bz_i \sim N(\bmu_\bz, \bI_r)$ and $\bepsilon_i \sim N(\bzero,\sigma^2 \bI_p)$ independently, and then compute $\xx_i$. The left singular vectors of $\bB$ are computed and set to be $\bU^*$.

We fix $p=10$, $r = 2$ and $\sigma = 2$, and set $\bmu_\bz = (\mu, -\mu)^T$, where $\mu$ is a parameter. We consider two experiments: (1) fix $n = 500$, and let $\mu$ run through $\{0,0.5,1,1.5,\ldots,4\}$; (2) fix $\mu = 2$, and let $n$ run through $\{100,200,400,800,1600,3200\}$. The first experiment studies how much information we can extract from the first moments with a varying $\mu$, and the second experiment studies the effect of sample sizes on the performance of methods under investigation.

We compare the performance of three estimators: (i) \textit{standard}---the first one is the top $r$ eigenvectors of the sample covariance of $\by_i$, which is the standard method; (ii) \textit{GMM full}---the second one is the two-step estimator proposed in Section~\ref{sec:procedure}; (iii) \textit{GMM diagonal}---the third one keeps only the diagonal entries of $\hat \bSigma$ and sets the rest to zero, and elsewhere follows the same procedure as \textit{GMM full}. Both GMM estimators (ii) and (iii) are constructed from the $p+1$ vectors in \eqref{eq:vfactor}.

The performance of estimators is assessed over $100$ independent simulations and measured through the estimation error $\E \| \hat \bU\hat \bU^T - \bU^* (\bU^*)^T \|_F$, where $\hat \bU$ is any of the three estimators. We compute this error by averaging over all $100$ simulations, and we also compute the standard errors of the averages. The results are presented in Figure~\ref{fig:ex1}. We have also considered other estimation errors, including $\E \| \hat \bU\hat \bU^T - \bU^* (\bU^*)^T \|_F^2$, $\E \| \hat \bU\hat \bU^T - \bU^* (\bU^*)^T \|_2$ and $\E \| \hat \bU\hat \bU^T - \bU^* (\bU^*)^T \|_2^2$. The results for these errors are similar, and figures are omitted here.

\begin{figure*}[t!]
    \centering
    \includegraphics[scale=0.5]{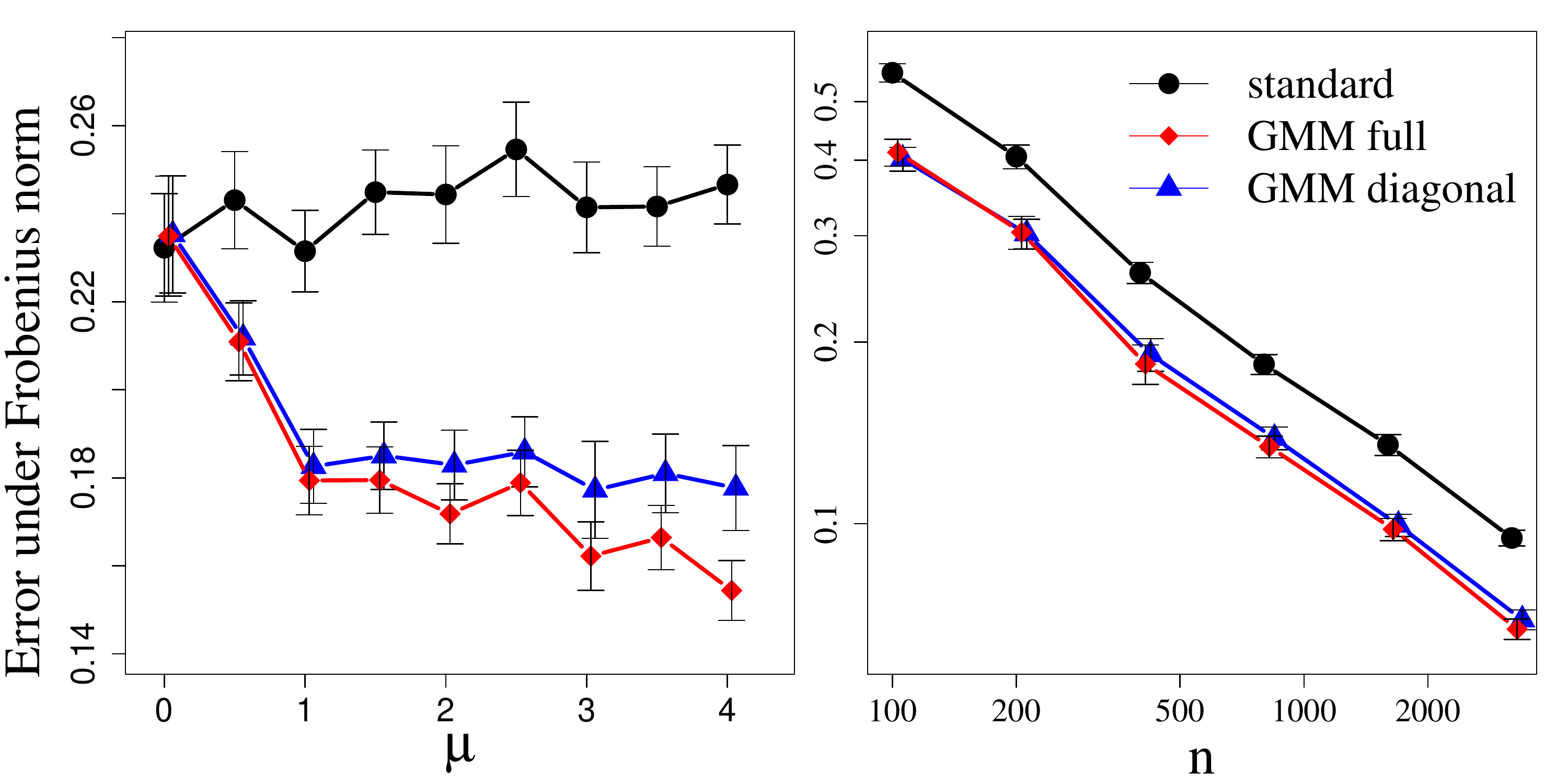}
    \caption{Estimation errors $\E \| \hat \bU\hat \bU^T - \bU^* (\bU^*)^T \|_F$ of three estimators calculated based on averages over $100$ simulations. On the markers, bars represent standard errors of the averages. {\bf Left plot}: fix $n = 500$.  {\bf Right plot} (on the log scale): fix $\mu = 2$.}\label{fig:ex1}
\end{figure*}

On the left plot of Figure~\ref{fig:ex1}, as $\mu$ increases, we can see that the standard method does not benefit from $\mu_\bz$'s deviation from $\bzero$, whereas both of the two GMM methods have decreasing estimation errors. Note that when $\mu = 0$, there is no information we can exploit from $\bv_1$ (first moments) since its expectation is zero, so we expect all methods behave similarly. Indeed, the figure supports this prediction at $\mu = 0$. However, as $\mu$ starts increasing, both GMM estimators have superior performance, which suggests that $\bv_1$ indeed contributes to the improvement on the standard method.  This contribution grows as $\mu$ continues to deviate from $0$. Thus, the inclusion of first moments through our GMM estimation helps if the factors do not have zero mean.

The right plot of Figure~\ref{fig:ex1} shows the same estimation error against varying $n$ on the logarithmic scale for all estimators. All three curves have the same alignment, which is consistent with the theory that errors scale with $n^{-1/2}$. Moreover, both GMM estimators outperform the standard one regardless of $n$, which further demonstrates the benefits of first moments.

\subsection*{Example 2: advantages of transformations}

We now look at another appealing aspect of our GMM methods: using different transformations in the construction of $\bv_\ell$, we may expose and extract even more information from GMM estimators. Consider the mixed linear regression model \eqref{model:linreg} with identical $\sigma_k$.

We fix $p=10$ and $K=2$, and sample each $\bbeta_k$ independently and uniformly from a sphere in $\R^p$ with radius $4$. We also generate $\beta_{k0} \sim N(0,1)$. Then, we generate i.i.d.\ entries of $\xx$ from $N(0,1)$, i.i.d.\ $z_i \sim \mbox{Bernoulli}(1/2)$ and i.i.d.\  $\epsilon_i \sim N(0,\sigma^2)$, where $\sigma$ is fixed as $1$. The left singular vectors of $[ \bbeta_1, \bbeta_2]$ are set to $\bU^*$. We collect $m=25$ vectors $\bv_\ell$ as below.

\begin{itemize}
\item[(a)] first moments: $\bv_1 = n^{-1} \sum_{i=1}^n y_i \xx_i; $
\item[(b)] second moments: $\bv_{1+j} = n^{-1}\sum_{i=1}^n y_i ^2\left( \xx_i \xx_i^T \be_j - \be_j\right), \quad \forall \, j \in[p]; $
\item[(c)] transformed first moments: $\bv_{11+j} = n^{-1} \sum_{i=1}^n \cos\left(y_i \pi/(2\tau) + (j-1)\pi/4\right) \xx_i, ~ \forall \, j \in[4]; $
\item[(d)] transformed second moments: $\bv_{15+j} = n^{-1}\sum_{i=1}^n \tilde y_i \left( \xx_i \xx_i^T \be_j - \be_j\right), \quad \forall \, j \in[p]. $
\end{itemize}
In (c), $\tau$ is a scale parameter chosen to be $0.8$ quantile of $| y_i |$; and in (d), the transformed response $\tilde y_i$ is defined by
\begin{equation*}
\tilde y_i := \sgn(y_i)\, \sgn(\xx_i^T \tilde \bv_1), \quad \text{where }\tilde \bv_1: = n^{-1} \sum_{i=1}^n \sgn(y_i) \xx_i.
\end{equation*}

\begin{figure*}[t!]
    \centering
    \includegraphics[scale=0.45]{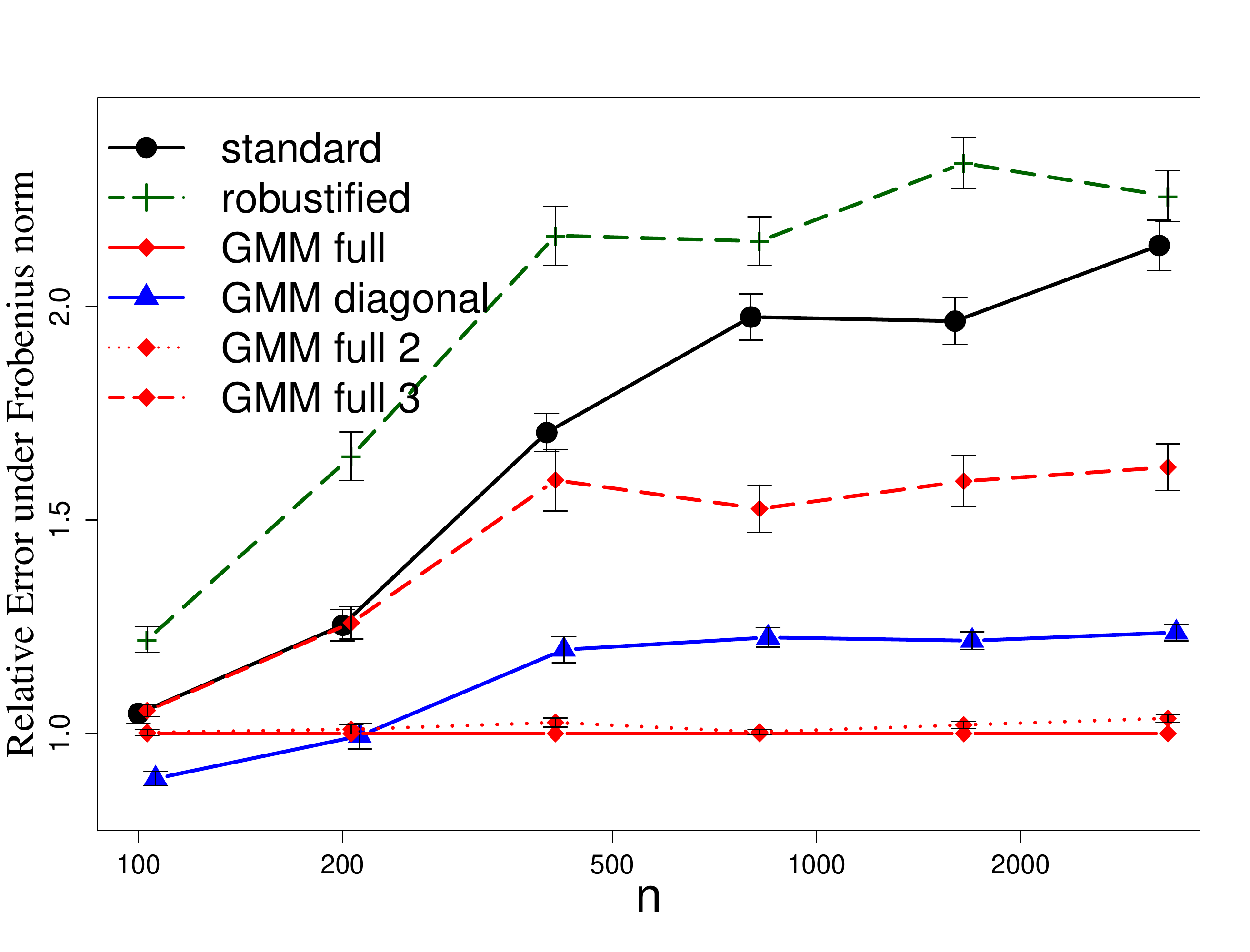}
    \caption{Ratios of estimation errors $\E \| \hat \bU\hat \bU^T - \bU^* (\bU^*)^T \|_F$ calculated based on averages over $100$ simulations. Five methods (five curves) are compared against `GMM full' (the horizontal line $y=1$). On the markers, bars are shown to represent standard errors. \label{fig:ex2}}
\end{figure*}

In essence, (d) seeks to robustify the second moments by
only using the sign of $y_i$. The additional $\sgn(\xx_i^T \tilde \bv_1)$ avoids vanishing $\E \bv_{15+j}$, which is proposed by \cite{SunIoaMon14}. Note that (d) does not strictly satisfy the condition \eqref{def:v2}, because the sign function is not smooth and $\tilde \bv_1$ is only asymptotically a linear combination of $\bbeta_1$ and $\bbeta_2$. Nevertheless, we find it practically useful for our model here.

Gauged by the same error $\E \| \hat \bU\hat \bU^T - \bU^* (\bU^*)^T \|_F$, we compare six estimators as listed in Table~\ref{tab:1}.
\begin{table}[t!]
\begin{tabular}{ll}
\hline
\textit{standard}: only using (b) & \textit{robustified}: only using (d) \\
\textit{GMM full}: combining (a)--(d) & \textit{GMM diagonal}: combining (a)--(d), but $\bW$ is diagonal\\
\textit{GMM full 2}: combining (a),(b),(d) & \textit{GMM full 3}: combining (a),(b).\\
\hline
\end{tabular}
\caption{Comparing six methods: two atomic methods and four GMM-based methods. }\label{tab:1}
\end{table}
In Figure~\ref{fig:ex2}, we plot the performance of these estimator averaged over $100$ simulations. We make a few observations:
\begin{itemize}
\item Atomic methods like `standard' or `robustified' have higher errors than GMM methods, suggesting it is better to combine different moments.
\item `GMM full 3' has better performance than `standard', which reinforces our conclusion of the previous example that `first moments may help'.
\item `GMM full' and `GMM full 2' have further improved performance, which indicates that (d) contains useful information. Combining all moments as `GMM full' is the best.
\end{itemize}

A minor observation is that, when $n$ is small, `GMM diagonal' may be preferred over `GMM full'. It is likely because estimating the full matrix $\bW$ is difficult with a small sample size. To conclude, by constructing $\bv_\ell$ with different transformations, we may exploit and combine different information using our GMM-based methods.

\section{Real data examples}\label{sec:real}

We apply our procedure to the \textit{ozone} dataset. This dataset has been studied in \cite{BreFri85} and \cite{Li92}, etc. It contains $n=330$ days of measurements of ozone concentration and $p=8$ meteorological features, where all variables take continuous values.  Dataset details can be found in \cite{BreFri85}. The goal is to study the relationship between the ozone concentration (response) with meteorological features (predictors). We will use this dataset to study different methods for the multiple index model.

The \textit{abalone} dataset is also studied. Results and details of the dataset are in the supplementary materials. In our experiments, we use the same procedure to produce our GMM estimators as in the simulations. We also fix the parameter $\delta_n$ to be $0.01$ as before.


We obtain the ozone dataset from the R package `gclus' \citep{Hurley}. First, we run a least squares fit of the ozone concentration $y$ against all $p=8$ standardized covariates, which results in an R-squared $0.69$.  This serves as a baseline for our subsequence comparisons.

Next, we consider three methods: residual-based pHd, GMM diagonal and GMM full (recall definitions in Sect.~\ref{sec:sim}, Example~3). Before running these methods, we make a linear transformation of the data such that the covariates have identity sample covariance matrix (a.k.a.\ whitening). Then, we let $K = 2$, and run each of these three methods to find $2$ orthogonal directions, say, $\hat \bu_1$ and $\hat \bu_2$, which allows us to reduce the number of covariates to $8$ to $2$ (the new covariates have the form $\xx^T \hat \bu_k$). Finally, as in \cite{Li92}, we fit a quadratic regression. We include cross products between variables in the quadratic regression. Using the same procedure, we also run these methods with $K=1$ and $K=3$. The values of R-squared are summarized in Table~\ref{tab:2}. The adjusted R-squared values are very similar, which are omitted here.
The outperformance of our aggregated GMM methods can be easily seen.

\begin{table}[t!]
\begin{center}
\begin{tabular}{c|c|c|c|c}
\hline
\hline
 & least squares & $r$-based pHd & GMM diagonal & GMM full \\
\hline
$K=1$ &  & $0.67$ & $0.74$ & $0.74$ \\
\cline{1-1} \cline{3-5}
$K=2$ & $0.69$ & $0.69$ & $0.76$ & $0.76$ \\
\cline{1-1} \cline{3-5}
$K=3$ &  & $0.72$ & $0.77$ & $0.77$\\
\hline
\hline
\end{tabular}
\end{center}
\caption{R-squared values of the four methods under three settings. A quadratic regression is fit after $\hat \bu_k$ is obtained. In the first setting $K=1$, we find a single direction $\hat \bu_1$; in the second setting $K=2$ and the third setting $K=3$, we find $\hat \bu_k$ and run a quadratic regression including cross products between the two new covariates $\xx_i^T \hat \bu_k$.}\label{tab:2}
\end{table}

To understand the variability of our results, we sample $100$ bootstrap samples and run the same methods. In Figure~\ref{fig:realdata1} (see supplementary materials), we report the boxplots from $100$ bootstrap samples, which show the R-squared values of quadratic fits with $K=2$.

From Table~\ref{tab:2} and Figure~\ref{fig:realdata1}, we can see that both GMM methods lead to better regression fits than the naive least squares and the pHd method. Moreover, for the pHd method, the outliers of the boxplot suggest that there are failure chances, that is, pHd method does not correctly find good direction $\hat \bu$, whereas, our GMM methods show robust performance in all subsamples. This is consistent with our findings in Example~3 of Section~\ref{sec:sim}.

\section{Discussions}\label{sec:disc}

In this paper, we proposed an estimation framework via GMM to combine information from overidentified vectors and estimate an unknown subspace. This approach is applied to a variety of statistical problems.

A natural question that may be explored is to allow $p$ or $m$ to grow with $n$. The large $p$ regime is relevant in the high dimensional literature, in particular, in the presence of sparsity or matrix incoherence structure, e.g.\ sparse principal component analysis \citep{ZouHasTib06}, matrix completion \citep{CanRec09}, etc. The large $m$ regime is related to moments selection in the GMM literature, which studies selecting informative moments from a large pool of candidate moment conditions (that is, vectors $\bv_\ell$ in our problem).

Also, it is interesting to see whether our method may help modern problems, such as (nonlinear) matrix completion and neural nets. Similar to the pHd, we might consider combining different transformations and activation functions for optimal results. For example, \cite{CohSha16} considered tensor structure for convolutional networks, and \cite{Wan18} studied data dependent link function (or activation function) for deep neural nets.




\appendix
\newpage

\bigskip
\begin{center}
{\large\bf SUPPLEMENTARY MATERIAL}
\end{center}

This supplementary document consists of two additional simulation experiments, one additional real data example, proofs, and other details.

\section*{A. Additional simulation and data examples}

\subsection*{Example 3: a remedy for vanishing moments}

In this example, we study multiple index models in Section~\ref{sec:indexmodel}. Related to Example~1 and~2, we illustrate how GMM methods are able to avoid the issue of vanishing moments by collecting information from different moments. We focus on three specific forms:
\begin{align*}
y_i &= \cos(2 \xx_i^T \bbeta_1) - \sin(\xx_i^T \bbeta_2) + 0.5\epsilon_i, \label{model:a} \tag{Model A} \\
y_i &= \cos(2 \xx_i^T \bbeta_1) - \xx_i^T \bbeta_2 + 0.5\epsilon_i, \label{model:b} \tag{Model B} \\
y_i &= \cos(2 \xx_i^T \bbeta_1) - \cos(\xx_i^T \bbeta_2) + 0.5\epsilon_i. \label{model:c} \tag{Model C}
\end{align*}

\begin{figure*}[b!]
    \centering
    \includegraphics[scale=0.45]{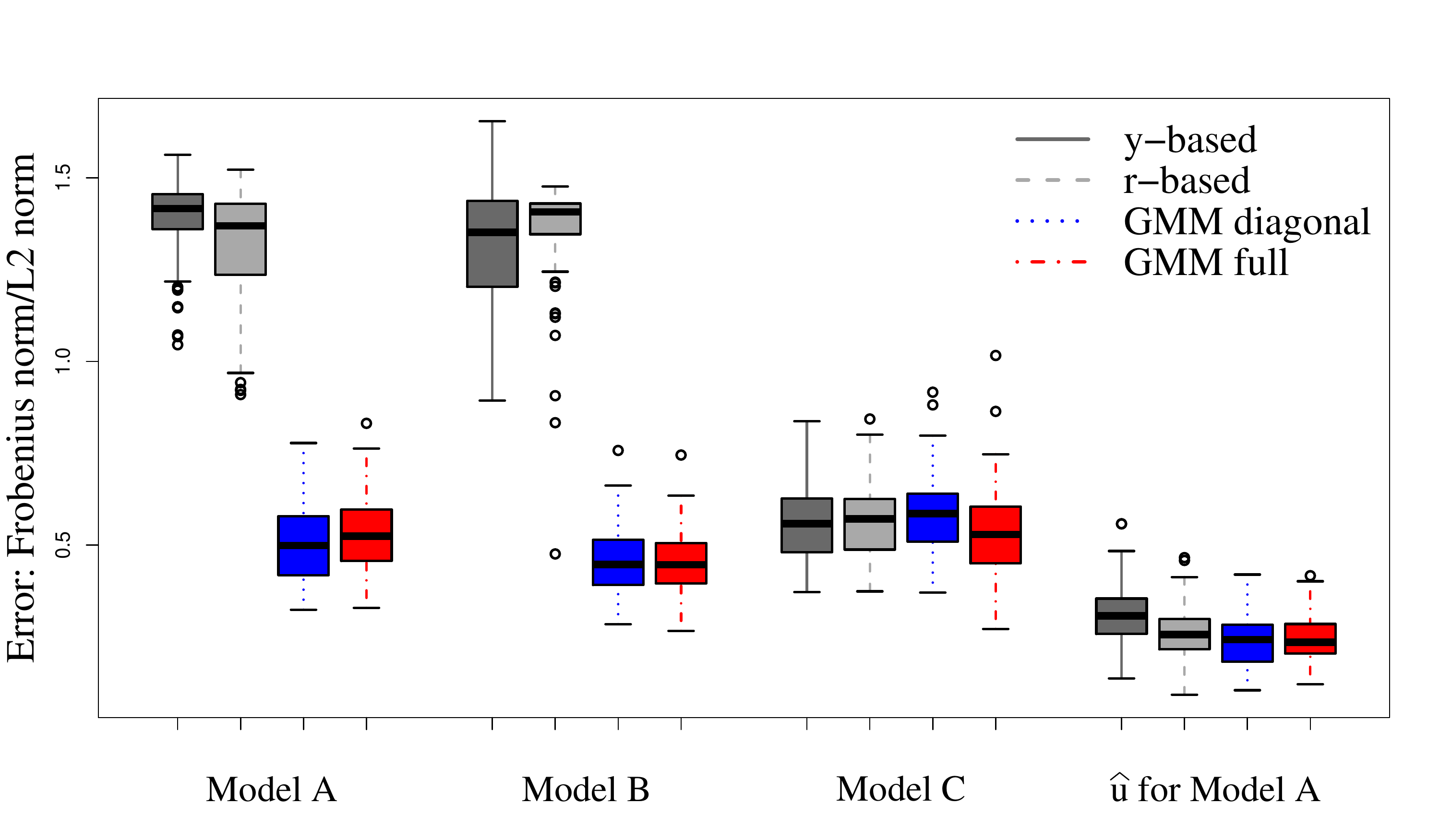}
    \caption{{\bf First three groups}: Distribution of $\| \hat \bU\hat \bU^T - \bU^* (\bU^*)^T \|_F$ for four methods under three models from $100$ simulations. {\bf Last group}: Distribution of $\| \bP_{\bU^*}^\bot \hat \bu \|_2$ under Model A from $100$ simulations.}\label{fig:ex3}
\end{figure*}

Model C has been considered by \cite{Li92}. Here we adopt the same parameter setup as \cite{Li92}. Let $n=400$, $p=10$, and $r=2$ (since there are only two $\bbeta_k$s). We fix $\bbeta_1 = \be_1 = (1,0,0,\ldots)^T$ and $\bbeta_2 = \be_2 = (0,1,0,\ldots)^T$, and generate i.i.d.\ $\xx_i \sim N(0, \bI_p)$ and $\epsilon_i \sim N(0,1)$. Set $\bU^* = \spann(\bbeta_1, \bbeta_2)$.

We compare four methods for these three models: $y$-based pHd method, $r$-based pHd method, GMM diagonal and GMM full. The first two methods are from \cite{Li92}, and the last two are the same as in Example 1. For the GMM methods, we use first moments and transformed first moments as in Example 2 (see (a) and (c)), as well as the moments constructed from both pHd methods:
\begin{equation*}
n^{-1}\sum_{i=1}^n y_i \left( \xx_i \xx_i^T \be_j - \be_j\right), \quad n^{-1}\sum_{i=1}^n r_i \left( \xx_i \xx_i^T \be_j - \be_j\right), \quad \forall\, j \in[p].
\end{equation*}
As \cite{Li92}, we center the data $\xx_i$ and $y_i$  first (but this is not essential in our case). We compute the same estimation errors for all methods and all models over $100$ simulations. In addition, for the top eigenvector $\hat \bu$ produced by all four methods, we compute $\E \| \bP_{\bU^*}^\bot \hat \bu \|_2$, which is the expected amount of the part of $\hat \bu$ unexplained by $\cS$.

In Figure~\ref{fig:ex3}, the first three groups of boxplots show the distribution of $\| \hat \bU\hat \bU^T - \bU^* (\bU^*)^T \|_F$, and the last group shows that of $\| \bP_{(\bU^*)^\bot} \hat \bu \|_2$. Clearly, the GMM methods are much better than the pHd methods for Model~A and~B, due to the fact that the pHd methods tend to miss linear trend (or more generally odd functions). On model C, all methods perform similarly, since pHd methods capture most useful information from second moments.

From the last group in Figure~\ref{fig:ex3}, we also observe that the quality of the first eigenvector $\hat \bu$ produced by pHd methods is roughly on a par with the GMM methods. This suggests that, pHd methods can, after all, find a first direction, though they fail to find a second (missing the direction of $\bbeta_2$). This is where GMM methods are very effective, because they can find directions from various moments and collect them all.

\subsection*{Estimation of Subspace Dimension}

Lastly, we study the estimation of $r$. We consider the same factor model as in Example 1. As before, we fix $p=10$, $\mu = 2$, $\sigma = 2$. We consider two cases: $r = 2$ and $r=4$, with $\mu_\bz = (\mu,-\mu)^T$ or $\mu_\bz = (\mu,-\mu,\mu,-\mu)^T$. For both cases, we set $n = 250\,r$. The mechanism for sampling parameters and generating random variables remain the same.

\begin{figure*}[t!]
    \centering
    \includegraphics[scale=0.45]{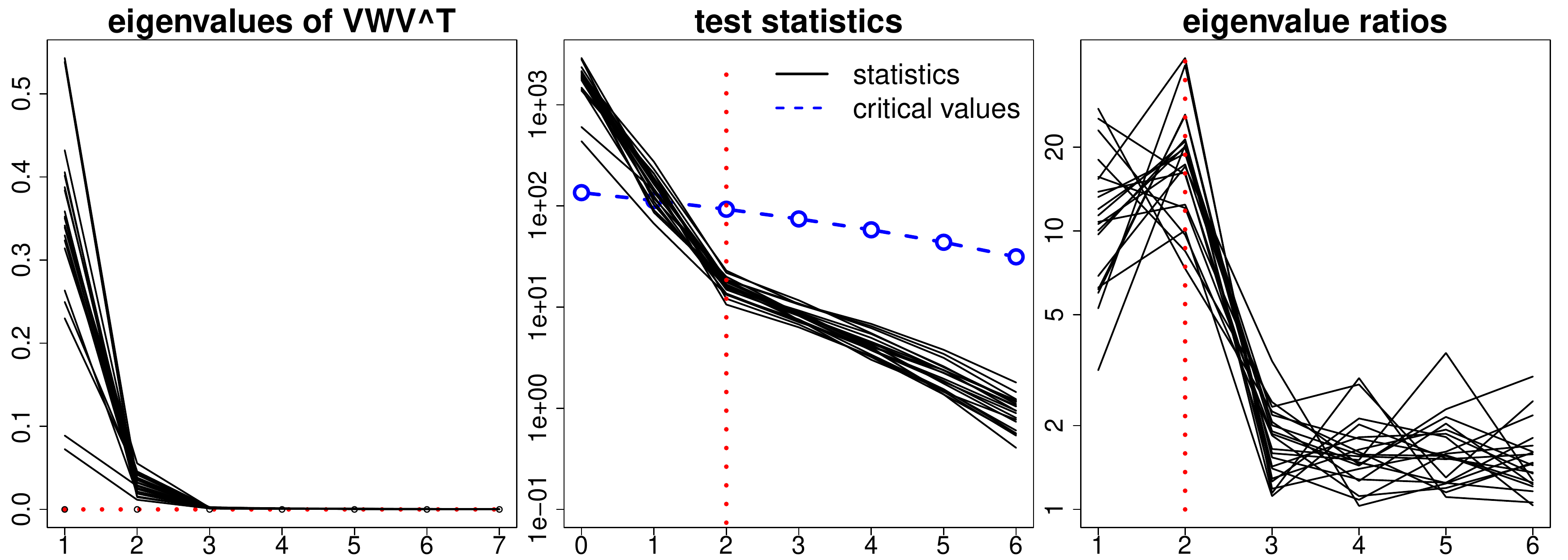}
    \caption{$r=2$. We plot $20$ solid curves computed from $20$ simulations. The $y$-axis is on the log scale on the middle plot and the right plot.}\label{fig:dimest1}
\end{figure*}

For each case, we make three plots that correspond to different methods (see Figure~\ref{fig:dimest1} and~\ref{fig:dimest2}) from $20$ simulations. Let $\lambda_k := \lambda_k(\bV \bW \bV^T)$ be the $k$th largest eigenvalue computed from the GMM methods.
\begin{itemize}
\item The left plots show $\lambda_k$ with different $k$.
\item The middle plots show $n(p-k) \sum_{j > k} \lambda_j$ with different $k$. The dashed curve with circle markers plots the critical value, namely $95\%$ quantile of $\chi^2_{(p-k)(m-k)}$, for different $k$. Note that $k$ starts from $0$.
\item The right plots show the eigen-ratio $\lambda_k / \lambda_{k+1}$ with different $k$.
\end{itemize}

Moreover, we add dotted lines on each plots. The horizontal dotted lines on the left plots have zero values on the y-axis. If some $\lambda_j$ is close to this line, then the dimension $r$ should be smaller than $j$. Indeed, in both cases, the eigenvalue curves do not visibly touch the dotted lines until $j = r+1$. This leads to good dimension estimator $\hat r_\tau$ with any reasonable parameter $\tau_n$.

On the middle and right plots, the vertical dotted lines represent the true dimension $r$. The method based on the chi-squared asymptotics \eqref{def:rhateta} mostly likely leads to $\hat r = r$ or $\hat r = r-1$, which may underestimate $r$. This is probably due to the fact that the  chi-squared asymptotics depend on stronger assumptions (see Theorem~\ref{thm:dimest}). For both eigen-ratio curves, we see a large spike at the true $r$ in most simulations, which suggests they are useful in finding $r$.

\begin{figure*}[t!]
    \centering
    \includegraphics[scale=0.45]{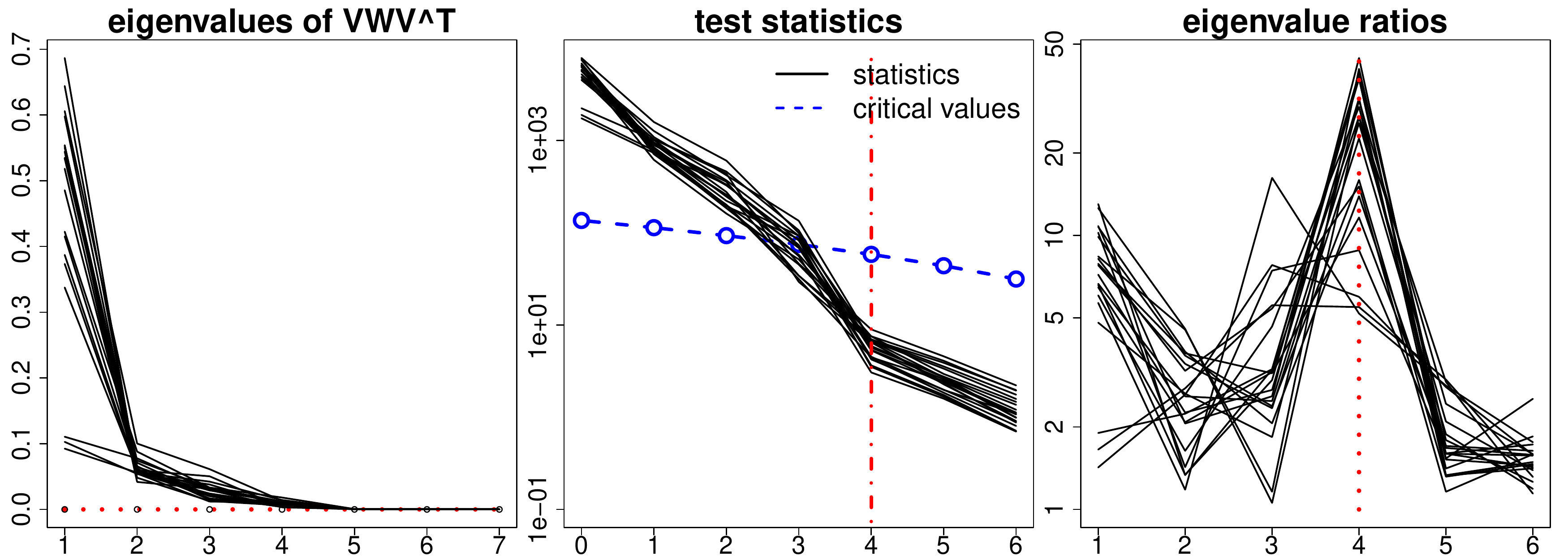}
    \caption{$r=4$. Solid curves are computed from $20$ simulations. The $y$-axis is on the log scale on the middle plot and the right plot.}\label{fig:dimest2}
\end{figure*}

In practice, we can consider combining all these plots to choose $r$. More sophisticated methods, such as those based on information criterion \citep{BaiNg02}, may be helpful, but they are out of the scope of this paper.

\subsection*{The abalone dataset}

The abalone dataset is a popular one from UCI machine learning repository \citep{Dua:2017}. It contains $n=4177$ sets of measurements of abalone, where each set of measurements consists of $7$ physical quantities with continuous values, and one discrete variable indicating the sex (male, female or infant). The goal is to predict the rings (or equivalently, the ages) of abalone. For our illustration, we will treat the sex as an unobserved (latent) variable, and study a mixed linear regression model with different methods.

We consider the mixed linear regression model \eqref{model:linreg} for our second data example. Each data unit has a `sex' variable $z$, which can only take three values (male, female and infant). According to the value of this variable, we group our dataset into three sub-datasets, and set $K=3$. We observe strong correlations between the $7$ physical variables (the minimum correlation is $0.77$), so we decide to only use their principal components. We fix $p=5$ and treat the $5$ principal components as our covariates.

For each sub-dataset, we run a least squares regression of the `rings' variable $y$ against the covariates. The resulting coefficient vectors $\bbeta_1,\bbeta_2,\bbeta_3 \in \R^p$ (without the intercepts) are treated as the true parameters, and our goal is to estimate $\spann\{\bbeta_1,\bbeta_2,\bbeta_3\}$. Henceforth, we treat $z$ as a latent variable.

We run and compare three methods: standard (second moments), GMM diagonal and GMM full. The explanations of the three methods are in Table~\ref{tab:1}. Under the same Frobenius norm as considered in Section~\ref{sec:sim}, the three methods give an error of $1.52$, $1.44$ and $0.78$ respectively. We also implement the same procedure for $100$ bootstrap subsamples and make boxplots of the errors. The results are shown in Figure~\ref{fig:realdata2}.

\begin{figure*}[t!]
    \centering
    \includegraphics[scale=0.45]{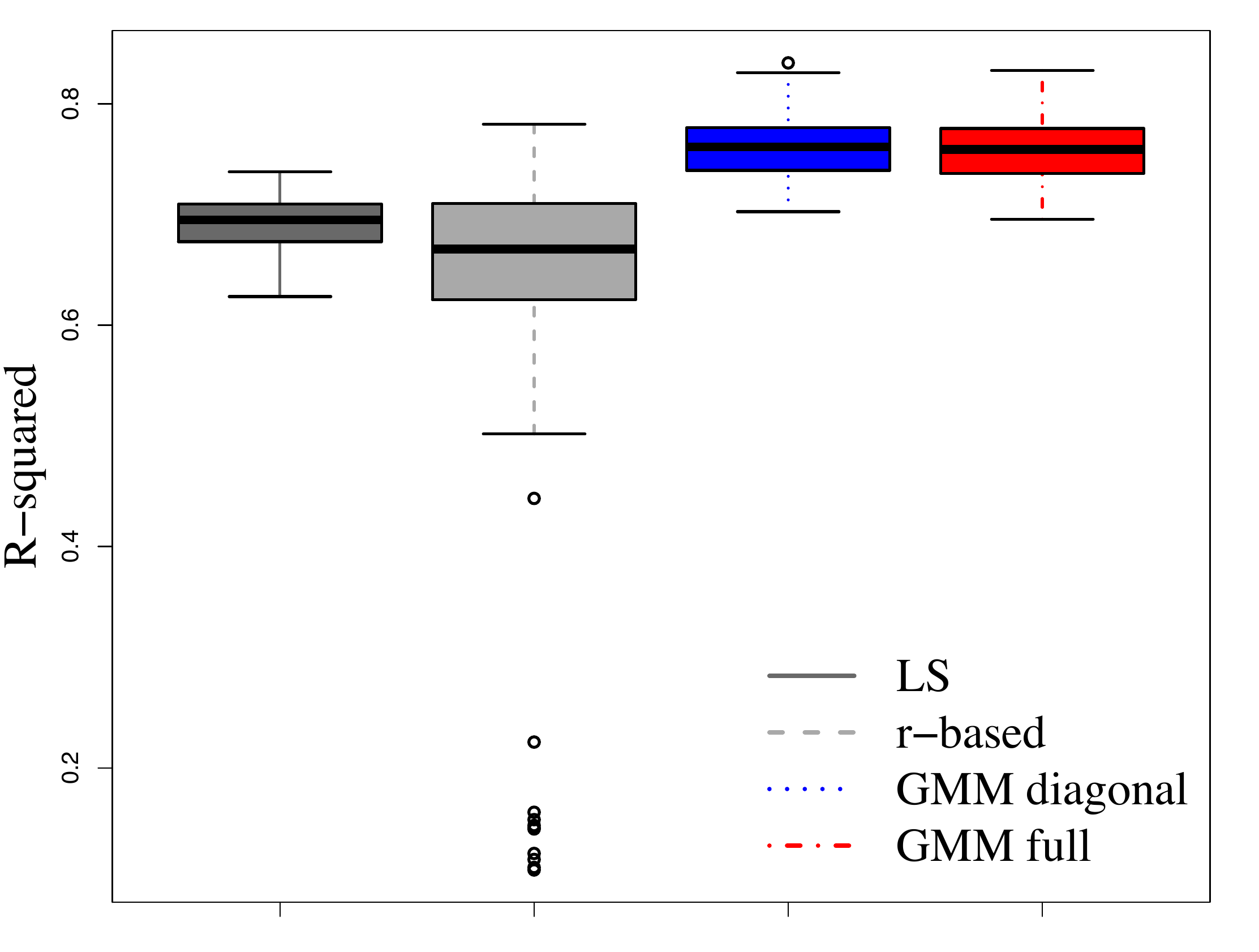}
    \caption{Comparison of the four methods based on $100$ bootstrap samples. The R-squared values are calculated after fitting a quadratic regression (including cross products).}\label{fig:realdata1}
\end{figure*}

\begin{figure*}[t!]
    \centering
    \includegraphics[scale=0.45]{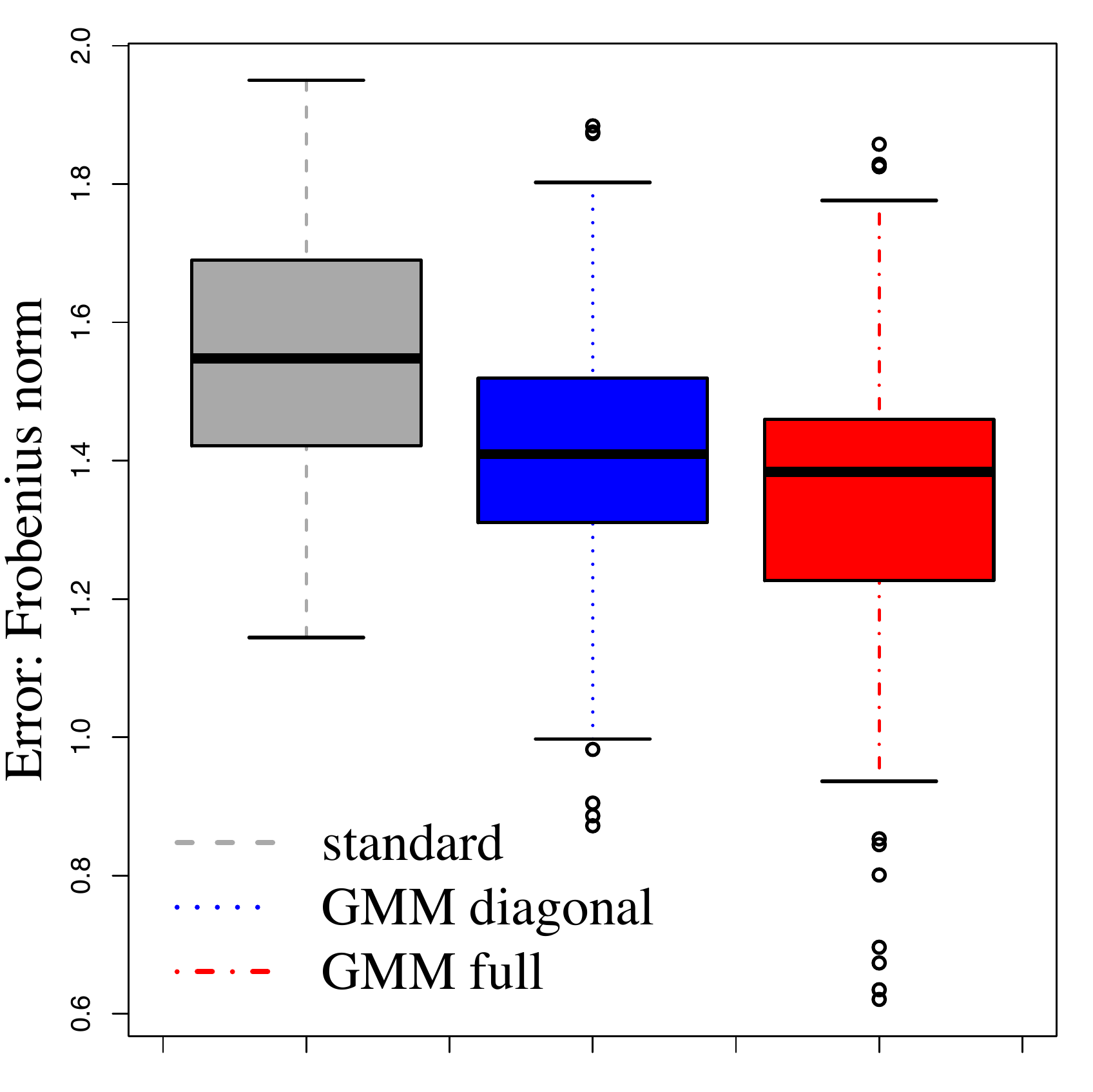}
    \caption{Comparison of the three methods from $100$ bootstrap subsamples. The boxplots show the distributions of $\| \hat \bU\hat \bU^T - \bU^* (\bU^*)^T \|_F$. }\label{fig:realdata2}
\end{figure*}

As shown by the results, our two GMM estimators outperform the standard second moments. Note that this dataset is a challenging one, since the standard method produces an estimation error that exceeds $1$. Even so, the GMM estimators have improvements over the standard method, which shows the robustness of our GMM estimators.

\section*{B. Proofs}

\subsection*{Technical lemmas}
The next lemma lists a few useful properties of the Kronecker product.
\begin{lem}\label{lem:Kron}
Suppose $\bA,\bB,\bC,\bD$ are matrices with specified dimensions, or with appropriate dimensions such that matrix products are valid. Then, we have the following identities:  \\
(i) $(\bA \otimes \bB) ( \bC \otimes \bD) = (\bA \bC) \otimes (\bB \bD)$; \\
(ii) $( \bA \otimes \bB)^{-1} = \bA^{-1} \otimes \bB^{-1}$; ~ (assuming both $\bA$ and $\bB$ are invertible) \\
(iii) $( \bA \otimes \bB)^{T} = \bA^{T} \otimes \bB^{T}$;\\
(iv) $\rank(\bA \otimes \bB) = \rank(\bA)\, \rank(\bB)$; \\
(v) If $\bA, \bB$ are both orthogonal projection matrices, then $\bA \otimes \bB$ is also an orthogonal projection;\\
(vi) $\vect(\bA \bB \bC) = (\bC^T \otimes \bA) \vect(\bB)$; ~ (recall $\vect(\cdot)$ means vectorizing a matrix column-wise)\\
(vii) There is a unique permutation matrix $\bP(p,m)$, depending only on dimension $p$ and $m$, such that $\vect(\bA^T) = \bP(p,m) \vect(\bA)$ for any $\bA \in \R^{p \times m}$;\\
(viii) $\bA \otimes \bB = \bP(p,m) (\bB \otimes \bA) \bP(p,m)^T$ holds for any $\bA \in \R^{p \times p}$, $\bB \in \R^{m \times m}$, where $\bP(p,m)$ is given in (vii); \\
(i$\mspace{1mu}\mathrm{x}$) $\Tr(\bA^T \bB \bA) = (\vect(\bA))^T ( \bI_p \otimes \bB) \vect(\bA)$, where $p$ is the number of columns of $\bA$.
\end{lem}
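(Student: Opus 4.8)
The plan is to derive the whole list from two engines: the mixed-product identity (i), which generates all the purely algebraic items, and the vectorization identity (vi), which generates the items involving $\vect$. First I would establish (i) by a direct block computation: writing $\bA = (a_{ij})$, the $(i,k)$ block of $(\bA \otimes \bB)(\bC \otimes \bD)$ is $\sum_j (a_{ij}\bB)(c_{jk}\bD) = \big(\sum_j a_{ij}c_{jk}\big)(\bB\bD)$, which is precisely the $(i,k)$ block of $(\bA\bC)\otimes(\bB\bD)$. With (i) in hand, (ii) is immediate since $(\bA\otimes\bB)(\bA^{-1}\otimes\bB^{-1}) = (\bA\bA^{-1})\otimes(\bB\bB^{-1}) = \bI\otimes\bI$; (iii) follows by inspecting the block definition, as transposition permutes blocks and transposes each block; and (v) follows because (iii) gives symmetry $(\bA\otimes\bB)^T = \bA\otimes\bB$ while (i) gives idempotence $(\bA\otimes\bB)^2 = \bA^2\otimes\bB^2 = \bA\otimes\bB$, and an orthogonal projection is exactly a symmetric idempotent matrix.

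For (iv) I would invoke the singular value decompositions $\bA = \bU_A\bSigma_A\bV_A^T$ and $\bB = \bU_B\bSigma_B\bV_B^T$. Applying (i) and (iii) yields $\bA\otimes\bB = (\bU_A\otimes\bU_B)(\bSigma_A\otimes\bSigma_B)(\bV_A\otimes\bV_B)^T$, where the outer factors are orthogonal by (i). Hence $\rank(\bA\otimes\bB)$ equals the number of nonzero diagonal entries of $\bSigma_A\otimes\bSigma_B$, which are the products of a nonzero singular value of $\bA$ with one of $\bB$, and this count is $\rank(\bA)\rank(\bB)$.

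The vectorization items require slightly more care. For (vi) I would reduce to rank-one pieces: writing $\bB = \sum_{r,s} b_{rs}\be_r\be_s^T$, one has $\vect(\bA\be_r\be_s^T\bC) = \vect\big((\bA\be_r)(\bC^T\be_s)^T\big) = (\bC^T\be_s)\otimes(\bA\be_r)$, while $(\bC^T\otimes\bA)\vect(\be_r\be_s^T) = (\bC^T\otimes\bA)(\be_s\otimes\be_r) = (\bC^T\be_s)\otimes(\bA\be_r)$ by (i); summing over $r,s$ and using linearity gives the claim. Item (vii) is a structural observation: the map $\vect(\bA)\mapsto\vect(\bA^T)$ is linear and merely permutes the $pm$ entries of $\bA$, so it is represented by a permutation matrix $\bP(p,m)$ depending only on $p,m$. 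For (viii) I would apply both $\bA\otimes\bB$ and $\bP(p,m)(\bB\otimes\bA)\bP(p,m)^T$ to an arbitrary $\vect(\bX)$ and check the images agree, repeatedly using (vi) together with the defining property of $\bP(p,m)$ from (vii) to interchange $\vect(\bX)$ and $\vect(\bX^T)$. Finally (ix) follows from (vi) with $\bC = \bI_p$, which gives $\vect(\bB\bA) = (\bI_p\otimes\bB)\vect(\bA)$, combined with the elementary identity $\Tr(\bX^T\bY) = \vect(\bX)^T\vect(\bY)$ applied to $\bX = \bA$ and $\bY = \bB\bA$.

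I expect (viii) to be the main obstacle, since it is the only item that genuinely intertwines the commutation matrix with the Kronecker structure; the difficulty is essentially bookkeeping — tracking which index block is permuted at each step and confirming the dimensions of $\bP(p,m)$ remain consistent — rather than conceptual. Every other item is routine once (i) and (vi) are secured.
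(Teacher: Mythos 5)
Your proposal is correct, but it takes a more self-contained route than the paper. The paper does not prove most of the lemma at all: it cites items (i)--(iv) and (vi)--(viii) from Horn and Johnson's \emph{Topics in Matrix Analysis} (Sections 4.2 and 4.3), declares (v) straightforward, and gives a direct proof only of (ix): writing $\bA = [\ba_1,\ldots,\ba_p]$, one has $\Tr(\bA^T \bB \bA) = \sum_j \ba_j^T \bB \ba_j$, and since $\bI_p \otimes \bB$ is block diagonal the quadratic form $(\vect(\bA))^T(\bI_p \otimes \bB)\vect(\bA)$ equals the same sum. You instead build everything from two engines, the mixed-product rule (i) and the vectorization identity (vi), and your derivations all check out: the block computation for (i), the SVD argument for (iv) (the outer Kronecker factors are orthogonal by (i) and (iii), so the rank is the number of nonzero products $\sigma_i(\bA)\sigma_j(\bB)$), the rank-one reduction for (vi) using $\vect(\bu\bv^T) = \bv \otimes \bu$, and the strategy for (viii) of applying both sides to an arbitrary $\vect(\bX)$, which indeed yields $\vect(\bB \bX \bA^T)$ on each side once you note that $\bP(p,m)^T = \bP(p,m)^{-1}$ undoes the transposition. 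Your proof of (ix), via (vi) with $\bC = \bI_p$ and the identity $\Tr(\bX^T\bY) = \vect(\bX)^T\vect(\bY)$, is a valid alternative to the paper's one-line block-diagonal argument, though slightly less direct. What your approach buys is a fully self-contained lemma requiring no external reference; what the paper's approach buys is brevity, outsourcing the standard identities and spending ink only on the one item ((ix)) not found verbatim in the cited text.
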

The proof of these properties is known: (i)--(iv) are proved in Sect.\ 4.2 of \cite{Hor91}, (vi)--(viii) are proved in Sect.\ 4.3 of the same book, and (v) is straightforward from the definition. We give a short proof of (i$\mspace{1mu} \mathrm{x}$) below.
\begin{proof}[\bf Proof of Lemma~\ref{lem:Kron} (i$\mspace{1mu}\mathrm{\bx}$)]
Let us write $\bA = [\ba_1,\ldots,\ba_p]$, where $\ba_j \in \R^p$. Using this notation, we have $\Tr(\bA^T \bB \bA) = \sum_j \ba_j^T \bB \ba_j$. Since $\bI_p \otimes \bB$ is a block diagonal matrix, we also have $(\vect(\bA))^T ( \bI_p \otimes \bB) \vect(\bA) = \sum_j \ba_j^T \bB \ba_j$, and thus we obtain the desired identity.
\end{proof}
We will also use an eigenvalue perturbation result from \cite{Kato66} (see Chap.\ 2, p.\ 79, Eq. (2.33)). Here we present a simplified form used in \cite{Li91} and \cite{Li92}.

\begin{lem}\label{lem:pert}
Consider the asymptotic expansion
\begin{equation*}
\bT(\omega) = \bT + \omega \bT^{(1)} +  \omega^2 \bT^{(2)} + o(\omega^2),
\end{equation*}
where $\omega = o(1)$, and $\bT(\omega), \bT, \bT^{(1)},\bT^{(2)} \in \R^{p \times p }$ are symmetric matrices. Suppose that $\bT$ has rank $k$, where $k < p$. Let $\lambda(\omega)$ be the sum of $p-k$ eigenvalues of $\bT(\omega)$ with smallest absolute values. Let $\bPi \in \R^{p \times p}$ be the projection matrix associate with the null space of $\bT$ so that $\bPi \bT = \bT \bPi = \bzero$. Then,
\begin{equation*}
\lambda(\omega) = \omega \lambda^{(1)} + \omega^2 \lambda^{(2)} +  o(\omega^2),
\end{equation*}
where $\lambda^{(1)} = \Tr( \bT^{(1)} \bPi)$ and $ \lambda^{(2)} = \Tr(\bT^{(2)} \bPi - \bT^{(1)} \bT^+ \bT^{(1)} \bPi) $. Here we use $^+$ to denote the Moore-Penrose pseudoinverse.
\end{lem}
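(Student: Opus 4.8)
The plan is to follow Kato's resolvent argument, which is well suited here because the zero eigenvalue of $\bT$ has multiplicity $p-k$ and the individual small eigenvalues of $\bT(\omega)$ need not be analytic in $\omega$; only their sum is, and the resolvent isolates exactly this sum. First I would fix a small circle $\Gamma$ in the complex plane centered at $0$ that, for all sufficiently small $\omega$, encloses precisely the $p-k$ eigenvalues of $\bT(\omega)$ of smallest absolute value and excludes the remaining $k$ (which stay near the nonzero eigenvalues of $\bT$). Writing $R(z) := (z\bI - \bT)^{-1}$ for the unperturbed resolvent, the key representation is
\[
\lambda(\omega) = \frac{1}{2\pi i} \oint_\Gamma z \,\Tr\big( (z\bI - \bT(\omega))^{-1} \big)\, dz,
\]
which equals the sum of the enclosed eigenvalues by the residue theorem. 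This reduces the problem to expanding the perturbed resolvent.

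Next I would expand $(z\bI - \bT(\omega))^{-1}$ by a Neumann series in powers of $\omega$ about $R(z)$, namely
\[
(z\bI - \bT(\omega))^{-1} = R + \omega R \bT^{(1)} R + \omega^2 \big( R\bT^{(2)}R + R\bT^{(1)}R\bT^{(1)}R \big) + o(\omega^2),
\]
and integrate term by term against $z/(2\pi i)$ around $\Gamma$. The zeroth-order term vanishes because the only eigenvalue of $\bT$ inside $\Gamma$ is $0$. For the first- and second-order terms I would use the Laurent expansion of the unperturbed resolvent near its pole, $R(z) = \bPi/z - \bT^+ + O(z)$, where $\bPi$ is the spectral (Riesz) projection onto $\ker \bT$ and $\bT^+$ is the reduced resolvent, which for symmetric $\bT$ coincides with the Moore-Penrose pseudoinverse and satisfies $\bPi \bT^+ = \bT^+ \bPi = \bzero$. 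Extracting residues at $z=0$ then gives $\lambda^{(1)} = \Tr(\bT^{(1)}\bPi)$ from the $R\bT^{(1)}R$ term (via an integration-by-parts identity using $R^2 = -R'$ together with $\frac{1}{2\pi i}\oint_\Gamma R\, dz = \bPi$), and the contribution $\Tr(\bT^{(2)}\bPi)$ from the $R\bT^{(2)}R$ term by the same identity.

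The main computational step, which I expect to be the crux, is the residue of the double term $z\,\Tr(R\bT^{(1)}R\bT^{(1)}R)$ at $z=0$. Substituting the Laurent expansion and collecting the coefficient of $z^{-2}$ produces three candidate contributions; two of them vanish because a factor $\bPi \bT^+$ or $\bT^+\bPi$ appears after cyclic rotation of the trace, leaving exactly $-\Tr(\bT^{(1)}\bT^+\bT^{(1)}\bPi)$. Combining this with the $\bT^{(2)}$ term yields $\lambda^{(2)} = \Tr\big(\bT^{(2)}\bPi - \bT^{(1)}\bT^+\bT^{(1)}\bPi\big)$, as claimed. The genuinely delicate points are (i) verifying that $\Gamma$ separates the two eigenvalue groups uniformly for small $\omega$, so that $\lambda(\omega)$ is well-defined and the contour integral is legitimate, and (ii) controlling the $o(\omega^2)$ remainder of the Neumann series uniformly on $\Gamma$; both follow from the boundedness of $R(z)$ on the fixed contour together with the assumed expansion $\bT(\omega) = \bT + \omega\bT^{(1)} + \omega^2\bT^{(2)} + o(\omega^2)$. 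An alternative, more elementary route would decompose $\R^p = \ker\bT \oplus \mathrm{range}\,\bT$, write $\bT(\omega)$ in block form, and read off the small eigenvalues as those of the Schur complement onto $\ker\bT$; expanding that Schur complement to order $\omega^2$ and taking its trace reproduces the same two coefficients, with the $\bT^+$ term arising from eliminating the off-diagonal blocks.
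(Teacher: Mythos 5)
Your proof is correct, but note how it relates to the paper: the paper does not prove Lemma~\ref{lem:pert} at all — it is quoted as a known result from \cite{Kato66} (Chap.\ 2, Eq.\ (2.33)), in the simplified form used by \cite{Li91} and \cite{Li92}. What you have written is, in effect, a correct self-contained reconstruction of the argument behind Kato's formula. The ingredients are exactly the classical ones: the representation $\lambda(\omega) = \frac{1}{2\pi i}\oint_\Gamma z\,\Tr\big[(z\bI - \bT(\omega))^{-1}\big]\,dz$ for the sum of the eigenvalues enclosed by $\Gamma$, the Neumann expansion of the perturbed resolvent, and the Laurent expansion $R(z) = \bPi/z - \bT^{+} + O(z)$, where for symmetric $\bT$ the reduced resolvent is indeed the Moore--Penrose pseudoinverse with $\bPi\bT^{+} = \bT^{+}\bPi = \bzero$. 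Your residue computation at the crux is right: of the three order-$z^{-2}$ contributions to $\Tr\big[R\,\bT^{(1)}R\,\bT^{(1)}R\big]$, the two containing a factor $\bPi\bT^{+}$ or $\bT^{+}\bPi$ after cyclic permutation of the trace vanish, leaving exactly $-\Tr\big(\bT^{(1)}\bT^{+}\bT^{(1)}\bPi\big)$, and the $R\,\bT^{(2)}R$ term contributes $\Tr\big(\bT^{(2)}\bPi\big)$, matching the claimed $\lambda^{(2)}$. Your two ``delicate points'' (uniform separation of the two eigenvalue groups by $\Gamma$, and uniform control of the $o(\omega^2)$ remainder on $\Gamma$) are also the right ones; they are the deterministic analogue of what the paper does later when applying this lemma to random matrices in the proof of Theorem~\ref{thm:dimest}, where the authors invoke Kato's remainder estimates ((3.2)--(3.6) of Chap.\ 2) precisely because, as they note in a footnote, this uniformity step was glossed over in Li's original papers. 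So your proposal supplies the proof that the paper outsources to a citation; what it buys is self-containedness, at no loss of correctness.
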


The next lemma gives a useful elementary property about the Moore-Penrose pseudoinverse. Recall that for any matrix $\bA \in \R^{n_1 \times n_2}$, the Moore-Penrose pseudoinverse of $\bA$ is denoted by $\bA^+$.
\begin{lem}\label{lem:pseudo}
The matrix $\bA (\bA^T \bA)^+ \bA^T$ is an orthogonal projection matrix for any $\bA$, and its rank is equal to the rank of $\bA$.
\end{lem}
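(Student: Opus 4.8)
The plan is to verify the two defining properties of an orthogonal projection---symmetry and idempotency---directly from the Moore--Penrose axioms, and then to compute the rank either through the trace or through a range argument.

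First I would set $\bP := \bA (\bA^T \bA)^+ \bA^T$ and $\bM := \bA^T \bA$, which is symmetric and positive semidefinite. For symmetry, I would invoke the elementary fact that the pseudoinverse of a symmetric matrix is symmetric, so $(\bM^+)^T = \bM^+$; then $\bP^T = \bA (\bM^+)^T \bA^T = \bA \bM^+ \bA^T = \bP$. For idempotency, I would compute $\bP^2 = \bA \bM^+ (\bA^T \bA) \bM^+ \bA^T = \bA (\bM^+ \bM \bM^+) \bA^T = \bA \bM^+ \bA^T = \bP$, where the middle equality uses the Moore--Penrose identity $\bM^+ \bM \bM^+ = \bM^+$. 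Together these establish that $\bP$ is an orthogonal projection.

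For the rank claim, I would use that the rank of an orthogonal projection equals its trace, so $\rank(\bP) = \Tr(\bP) = \Tr(\bA \bM^+ \bA^T) = \Tr(\bM^+ \bM)$ by cyclicity of the trace. Since $\bM^+ \bM$ is itself the orthogonal projection onto the range of $\bM$, its trace equals $\rank(\bM) = \rank(\bA^T \bA) = \rank(\bA)$, the last step being the standard fact that $\bA$ and $\bA^T \bA$ have the same null space and hence the same rank. An alternative route, avoiding $\rank(\bM^+\bM) = \rank(\bM)$, is to establish the identity $\bP \bA = \bA$ (most transparently through the singular value decomposition $\bA = \bU \bSigma \bV^T$, where $\bSigma (\bSigma^T \bSigma)^+ \bSigma^T \bSigma = \bSigma$ entrywise); this gives $\rank(\bA) = \rank(\bP \bA) \le \rank(\bP)$, while $\rank(\bP) \le \rank(\bA)$ is immediate from the factored form, forcing equality.

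The main obstacle is conceptual rather than computational: because $\bA^T \bA$ may be singular, ordinary inverse manipulations are unavailable, so every step must be justified via the Moore--Penrose axioms or the SVD rather than through $(\bA^T \bA)^{-1}$. The only genuinely nontrivial ingredients are the identities $\bM^+ \bM \bM^+ = \bM^+$ and $\bSigma (\bSigma^T \bSigma)^+ \bSigma^T \bSigma = \bSigma$, both of which become immediate once $\bM$ is diagonalized, or are taken as defining properties of the pseudoinverse.
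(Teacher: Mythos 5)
Your proof is correct, but it takes a genuinely different route from the paper's. The paper's proof is a two-citation reduction: it invokes the identity $(\bA^T \bA)^+ \bA^T = \bA^+$ (Ben-Israel and Greville), which collapses $\bA (\bA^T \bA)^+ \bA^T$ to $\bA \bA^+$, and then cites the standard fact that $\bA \bA^+$ is the orthogonal projection onto the range of $\bA$; both claims of the lemma follow at once, and as a bonus the target subspace of the projection is identified explicitly as $\mathrm{range}(\bA)$. You instead verify the two defining properties of an orthogonal projection directly from the Penrose axioms --- symmetry via $(\bM^+)^T = (\bM^T)^+ = \bM^+$ for the symmetric matrix $\bM = \bA^T \bA$, idempotency via $\bM^+ \bM \bM^+ = \bM^+$ --- and then obtain the rank either from $\rank(\bP) = \Tr(\bP) = \Tr(\bM^+ \bM) = \rank(\bM) = \rank(\bA)$ or from the sandwich $\rank(\bA) = \rank(\bP \bA) \le \rank(\bP) \le \rank(\bA)$ using $\bP \bA = \bA$. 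What your approach buys is self-containment: it needs no external identity beyond the four Moore--Penrose axioms (plus uniqueness, which gives symmetry of $\bM^+$), whereas the identity $(\bA^T \bA)^+ \bA^T = \bA^+$ the paper leans on is precisely the kind of fact one would otherwise have to prove via the SVD anyway. What the paper's approach buys is brevity and slightly more information --- it names the range of the projection --- though for the lemma as stated that extra information is not needed. Every step in your argument checks out, including the entrywise verification $\bSigma (\bSigma^T \bSigma)^+ \bSigma^T \bSigma = \bSigma$ in the alternative rank argument.
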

\begin{proof}[Proof of Lemma~\ref{lem:pseudo}]
It can be shown that $(\bA^T \bA)^+ \bA^T = \bA^+$ for any $\bA$ (see \cite{Ben03} Chap.\ 1, p.\ 49, Ex.\ 18). Furthermore, it is known that  $\bA \bA^+$ is an orthogonal projection onto the range of $\bA$ \citep[Chap.\ 5.5.2]{Gol13}. Thus, the conclusion follows.
\end{proof}

\subsection*{Proofs for Section~\ref{sec:normal}}

\begin{proof}[\bf Proof of Lemma~\ref{lem:sigmaS}]
Denote $\bP = \bI_p - \bU^* (\bU^*)^T$. It is clear that $\bP$ is a deterministic projection matrix with rank $p-r$, so $\Tr(\bP) = p-r$. For simplicity, let us also denote $\bff_j = \bff_j(\xx_i, y_i)$ and $\bar \bff_j = \E \bff_j$. For any $j,\ell \in [m]$, by definition and linearity of $\Tr(\cdot)$ and $\E(\cdot)$,
\begin{align*}
\Sigma^*_{j\ell} &= \Tr\left(  \E  \left[   \bff_j^T \bP\,  \bff_{\ell}    \right] \right)  = \E\left[	 \Tr\left( (\bff_j - \bar \bff_j)^T \bP\,  (\bff_{\ell} - \bar \bff_\ell)  	\right)\right] \\
&= \E\left[	 \Tr\left(  \bP\,  (\bff_{\ell} - \bar \bff_\ell)  (\bff_j - \bar \bff_j)^T	\right)\right] \\
&= \Tr\left(	 \E\left[  \bP\,  (\bff_{\ell} - \bar \bff_\ell)  (\bff_j - \bar \bff_j)^T	\right]\right) \\
&= \Tr\left( \bP\,  S_{\ell j} \bI_p\right) \\
&= (p-r)S_{j \ell}.
\end{align*}
In the above derivation, we used the fact that $\bP \bar \bff_j = \bzero$, and that $\E  (\bff_{\ell} - \bar \bff_\ell)  (\bff_j - \bar \bff_j)^T$ is the cross-covariance between $\bff_\ell$ and $\bff_j$, and thus a submatrix of $\bar \bS$.
\end{proof}

We state and prove a theorem that is more general than Theorem~\ref{thm:consist2}.

\begin{thm}\label{thm:consist2}
Suppose Assumption~\ref{ass:span} holds and $\bW_n \xrightarrow{p} \bW^* \succeq \bzero$, where $\bW^* \in \R^{m \times m}$ satisfies $(\bG^*)^T \bW^* \bG^* \succ \bzero$. Then, there exists a sequence of orthogonal matrices  $\bR_1, \bR_2, \bR_3, \ldots \in O(r)$ such that
\begin{equation*}
\hat \bU_n \bR_n \xrightarrow{p} \bU^*, \qquad \text{as} ~ n \to \infty.
\end{equation*}
\end{thm}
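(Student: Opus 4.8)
The plan is to exploit the eigenvector characterization of $\hat\bU_n$ from Proposition~\ref{prop:eigen}, which identifies the columns of $\hat\bU_n$ with the top $r$ eigenvectors of the symmetric matrix $\bV_n\bW_n\bV_n^T$, and to reduce the claim to a spectral-perturbation statement. Writing $\bV^*:=[\E\bv_1,\ldots,\E\bv_m]\in\R^{p\times m}$, the first step is to identify the probability limit of $\bV_n\bW_n\bV_n^T$ and to verify that this limit has a clean low-rank structure with a nonvanishing spectral gap; the second step is to transfer the resulting convergence of eigenspaces to the convergence $\hat\bU_n\bR_n\xrightarrow{p}\bU^*$ after a suitable alignment rotation $\bR_n$.

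For the first step, each $\bv_\ell=n^{-1}\sum_{i=1}^n\bff_\ell(\xx_i,y_i)$ is an average of i.i.d.\ terms with finite second (hence first) moments, so the weak law of large numbers gives $\bV_n\xrightarrow{p}\bV^*$. Combined with $\bW_n\xrightarrow{p}\bW^*$ and the continuous mapping theorem, this yields $\bV_n\bW_n\bV_n^T\xrightarrow{p}\bM:=\bV^*\bW^*(\bV^*)^T$. The key structural observation is that $\bM=\bU^*\big[(\bG^*)^T\bW^*\bG^*\big](\bU^*)^T$. Indeed, by Assumption~\ref{ass:span} the columns of $\bV^*$ span $\cS=\mathrm{range}(\bU^*)$, so $\mathrm{range}(\bM)\subseteq\mathrm{range}(\bV^*)=\cS$; since $\bM$ is symmetric this forces $\bM=\bU^*(\bU^*)^T\bM\,\bU^*(\bU^*)^T$, and $(\bU^*)^T\bM\,\bU^*=(\bU^*)^T\bV^*\bW^*(\bV^*)^T\bU^*=(\bG^*)^T\bW^*\bG^*$ because $\bG^*=(\bV^*)^T\bU^*$. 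By hypothesis $\bC:=(\bG^*)^T\bW^*\bG^*\succ\bzero$ is an $r\times r$ positive definite matrix, so $\bM$ has exactly $r$ positive eigenvalues, each at least $\lambda_{\min}(\bC)>0$, with eigenvectors spanning $\cS$, together with $p-r$ zero eigenvalues. In particular $\bM$ has a spectral gap $\lambda_{\min}(\bC)>0$ between its $r$th and $(r+1)$th eigenvalues, and its top-$r$ eigenspace is exactly $\cS$.

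For the second step, because the population gap $\lambda_{\min}(\bC)$ is a fixed positive constant while $\|\bV_n\bW_n\bV_n^T-\bM\|_F\xrightarrow{p}0$, the Davis--Kahan $\sin\Theta$ theorem applies: with probability tending to one the top-$r$ eigenspace of $\bV_n\bW_n\bV_n^T$ is well separated, and $\|\hat\bU_n\hat\bU_n^T-\bU^*(\bU^*)^T\|_F\le C_0\,\lambda_{\min}(\bC)^{-1}\|\bV_n\bW_n\bV_n^T-\bM\|_F\xrightarrow{p}0$ for an absolute constant $C_0$. I would then convert this projection-level convergence into the stated matrix convergence by choosing $\bR_n\in\argmin_{\bR\in O(r)}\|\hat\bU_n\bR-\bU^*\|_F$, namely the orthogonal factor from the polar decomposition of $(\hat\bU_n)^T\bU^*$; the elementary identity $\min_{\bR\in O(r)}\|\hat\bU_n\bR-\bU^*\|_F^2=2\sum_k(1-\cos\theta_k)\le 2\sum_k\sin^2\theta_k=\|\hat\bU_n\hat\bU_n^T-\bU^*(\bU^*)^T\|_F^2$ (where $\theta_k$ are the canonical angles) then gives $\|\hat\bU_n\bR_n-\bU^*\|_F\xrightarrow{p}0$, as desired.

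The main obstacle, and the point where this statement genuinely generalizes Theorem~\ref{thm:consist1}, is verifying that the limiting matrix $\bM$ does not suffer a rank deficiency. When $\bW^*\succ\bzero$ one has $\mathrm{rank}(\bM)=\mathrm{rank}(\bV^*)=r$ automatically, but for a merely positive semidefinite $\bW^*$ the rank could drop below $r$, destroying the eigen-gap and hence the identifiability of $\cS$ from the spectrum. The condition $(\bG^*)^T\bW^*\bG^*\succ\bzero$ is precisely what rules this out, and confirming that it yields both $\mathrm{range}(\bM)=\cS$ and a uniformly positive gap is the crux of the argument; the remaining perturbation and alignment steps are standard.
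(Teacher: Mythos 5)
Your proof is correct, but it takes a genuinely different route from the paper's. The paper proves this theorem by the classical GMM/M-estimation argument: it works with the population criterion $Q^*(\bU)=[\E\bg(\bU)]^T\bar\bW^*[\E\bg(\bU)]$, rewrites it as $\Tr\big((\bG^*)^T\bW^*\bG^*\,(\bU^*)^T(\bI_p-\bU\bU^T)\bU^*\big)$ so that the hypothesis $(\bG^*)^T\bW^*\bG^*\succ\bzero$ forces $Q^*(\bU)=0$ exactly when $\spann(\bU)=\spann(\bU^*)$, establishes a well-separated minimum $\veps=\inf_{\bU\notin N_\delta(\bU^*)}Q^*(\bU)>0$ using continuity of $Q^*$ and compactness of the complement of $N_\delta(\bU^*)$ in $O(p,r)$, and then proves uniform convergence $\sup_{\bU\in O(p,r)}|Q(\bU)-Q^*(\bU)|\xrightarrow{p}0$ to conclude that $\hat\bU_n$ eventually lies in $N_\delta(\bU^*)$. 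You instead exploit the eigenvector characterization of Proposition~\ref{prop:eigen}: you identify the probability limit $\bM=\bV^*\bW^*(\bV^*)^T=\bU^*[(\bG^*)^T\bW^*\bG^*](\bU^*)^T$, note that the same hypothesis gives $\bM$ rank exactly $r$ with top eigenspace $\cS$ and eigen-gap $\lambda_{\min}\big((\bG^*)^T\bW^*\bG^*\big)>0$, and then apply Davis--Kahan followed by a Procrustes alignment. Both arguments hinge on the identical role of $(\bG^*)^T\bW^*\bG^*\succ\bzero$ (it is what remains of positive definiteness when $\bW^*$ is merely semidefinite), but the payoffs differ: your spectral route is more quantitative, yielding the explicit bound $\|\hat\bU_n\hat\bU_n^T-\bU^*(\bU^*)^T\|_F\lesssim\lambda_{\min}\big((\bG^*)^T\bW^*\bG^*\big)^{-1}\|\bV_n\bW_n\bV_n^T-\bM\|_F$ and hence a rate ($O_P(n^{-1/2})$ here) rather than bare consistency, while avoiding the compactness/uniform-convergence machinery; the paper's route, by contrast, never uses the closed-form eigen-structure, so it would survive if $\hat\bU_n$ were only an approximate minimizer of $Q$ or if $\bar\bW$ were a general (non-block) weighting matrix for which Proposition~\ref{prop:eigen} is unavailable, and it parallels the standard GMM consistency proofs the paper builds on. Your supporting steps check out: $1-\cos\theta_k\le\sin^2\theta_k$ holds since $\theta_k\in[0,\pi/2]$, and the Procrustes identity $\min_{\bR\in O(r)}\|\hat\bU_n\bR-\bU^*\|_F^2=2\sum_k(1-\cos\theta_k)$ correctly converts projection convergence into the stated convergence of $\hat\bU_n\bR_n$.
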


\begin{proof}[\bf Proof of Theorem~\ref{thm:consist2}]
Fix any $\delta > 0$ independent of $n$. As stated, we suppress the subscript $n$. Let $N_{\delta}(\bU^*)$ be a neighborhood of $\bU^*$, up to rotation:
\begin{align*}
N_{\delta}(\bU^*) &= \{\bU \in O(p,r): \exists\, \bR \in O(r), \| \bU - \bU^* \bR \|_F < \delta \} \\
&= \cup_{\bR \in O(r)} \{ \bU \in O(p,r): \| \bU - \bU^* \bR \|_F < \delta \},
\end{align*}
which is an open set in $O(p,r)$. Recall $\bar \bW^* = \bW^* \otimes \bI_p$ and $Q^*(\bU) = [ \E \bg(\bU)]^T \bar \bW^* [\E\bg(\bU)]$. Also define $\veps := \inf_{\bU \notin N_{\delta}(\bU^*)} Q^*(\bU)$. Note that $Q^*(\bU)$ and $\veps$ do not depend on $n$. Also note that $ Q^*(\bU)$ is a continuous function in $O(p,r)$, and that the set $[ N_\delta (\bU^*) ]^c$, namely, the complement of $N_\delta (\bU^*)$, is compact. It follows that the minimum of $Q^*(\bU)$ over $[ N_\delta (\bU^*) ]^c$ can be attained. We claim that $\veps > 0$.

To prove this claim, recall $\bF = [\bff_1,\ldots,\bff_m]$ and $\bG^* = (\E \bF)^T \bU^*$. Since $\E (\bI_p - \bU \bU^T) \bv_\ell = (\bI_p - \bU \bU^T) \E \bff_\ell$ and $\E \bF = \bU^* (\bU^*)^T \E \bF = \bU^* (\bG^*)^T$, we can rewrite $Q^*(\bU)$ as
\begin{align*}
Q^*(\bU) &= \sum_{j,\ell = 1}^m w^*_{j\ell} ( \E \bff_j )^T (\bI_p - \bU \bU^T) \E \bff_\ell = \Tr\left( \bW^* (\E \bF)^T (\bI_p - \bU \bU^T) \E \bF\right) \\
&= \Tr \left( \bW^* \bG^* (\bU^*)^T (\bI_p - \bU \bU^T) \bU^* (\bG^*)^T \right) \\
&= \Tr \left( (\bG^*)^T \bW^* \bG^* (\bU^*)^T (\bI_p - \bU \bU^T) \bU^*  \right).
\end{align*}
Note that $ (\bU^*)^T (\bI_p - \bU \bU^T) \bU^* \succeq \bzero$, and by assumption $(\bG^*)^T \bW^* \bG^* \succ \bzero$. Thus $Q^*(\bU) = 0$ if and only if $ (\bU^*)^T (\bI_p - \bU \bU^T) \bU^* = \bzero$. Observe that
\begin{equation*}
\spann(\bU^*) = \spann(\bU) \Leftrightarrow (\bI_p - \bU \bU^T) \bU^* = \bzero \Leftrightarrow (\bU^*)^T (\bI_p - \bU \bU^T) \bU^* = \bzero.
\end{equation*}
Let $\bU_0 \in N_{\delta}(\bU^*)$ be the minimizer of $Q^*(\bU)$. It is clear that $\spann(\bU^*) \neq \spann(\bU_0)$, so from the above reasoning, we deduce $\veps = Q^*(\bU_0) > 0$.\\
Furthermore, since $\bI_p - \bU \bU^T$ is a projection matrix, for any $\bU \in O(p,r)$, we have
\begin{align*}
& \| \bg(\bU) - \E \bg(\bU) \|_2^2 = \sum_{\ell=1}^m \| (\bI_p - \bU \bU^T) ( \bv_\ell - \E \bv_\ell) \|_2^2 \le \sum_{\ell=1}^m \| \bv_\ell - \E \bv_\ell \|_2^2 \xrightarrow{p} 0, \\
& \| \bar \bW - \bar \bW^* \|_2 = \| (\bW -   \bW^*) \otimes \bI_p \|_2 \xrightarrow{p} 0.
\end{align*}
These also imply $\|\bg(\bU) \|_2$ and $\| \bar \bW \|_2$ are uniformly bounded by a constant, and therefore, uniform convergence:
\begin{equation*}
\sup_{\bU \in O(p,r)} \big| Q(\bU) - Q^*(\bU) \big| \xrightarrow{p} 0.
\end{equation*}
In particular, we have convergence in probability for $\bU = \hat \bU$ and $\bU = \bU^*$. Using the inequality $Q(\hat \bU) \le Q(\bU^*)$ (by definition of $\hat \bU$), we have, for large $n$,
\begin{equation*}
Q^*(\hat \bU) \le Q(\hat \bU) + \veps/3  \le Q(\bU^*) +\veps/3 \le Q^*(\bU^*) + 2\veps/3 < \veps.
\end{equation*}
Thus, we deduce that, for large $n$, $\hat \bU \in N_{\delta}(\bU^*)$. Since $\delta >0$ is arbitrary, we conclude that for appropriate choice of $\bR \in O(r)$, we have $\hat \bU \bR \xrightarrow{p} \bU^*$.
\end{proof}

\begin{proof}[\bf Proof of Theorem~\ref{thm:normality}]

By definition, $\hat \bU$ is a minimizer of \eqref{def:Q}. First, we establish a useful identity derived from the first-order optimality condition of $\hat \bU$. This is achieved by making use of the connection with the eigenvector formulation in \eqref{opt:eigen}.

Since the columns of $\hat \bU$ are eigenvectors of $\bV \bW \bV^T$, we must have
\begin{equation*}
\bV \bW \bV^T \hat \bU = \hat \bU \hat \bLambda, \quad \text{where} ~  \hat \bLambda := \diag\{\hat \lambda_1,\ldots, \hat \lambda_r\}.
\end{equation*}
Here, $\hat \lambda_1,\ldots, \hat \lambda_r$ are the top $r$ eigenvalues of $\bV^T \bW \bV $. It leads to
\begin{equation}\label{eq:optcond0}
\bP_{\hat \bU}^\bot \bV \bW  \bV^T \hat \bU =  \bP_{\hat \bU}^\bot\hat \bU \hat \bLambda = \bzero,
\end{equation}
Define $\bG := [\bv_1,\ldots,\bv_m]^T \hat \bU \bR \in \R^{m \times r}$. Right multiplying both sides of \eqref{eq:optcond0} by $\bR$, we obtain
\begin{equation}\label{eq:optcond}
\bP_{\hat \bU}^\bot \bV \bW  \bG = \bzero.
\end{equation}
Now we observe $  \bP_{\hat \bU}^\bot (\hat \bU \bR - \bU^*) = - \bP_{\hat \bU}^\bot \bU^*$ since $\bP_{\hat \bU}^\bot \hat \bU = \bzero$. Thus, defining $\bH := \bV^T\bU^* \in \R^{m \times r}$, we obtain
\begin{align*}
\bP_{\hat \bU}^\bot \left( \bP_{\hat \bU}^\bot \bV  - \bP_{\bU^*}^\bot \bV \right) &= - \bP_{\hat \bU}^\bot \left( \hat \bU \hat \bU^T \bV - (\bU^*)(\bU^*)^T \bV \right) = \bP_{\hat \bU}^\bot (\bU^*)(\bU^*)^T \bV \\
&= - \bP_{\hat \bU}^\bot (\hat \bU \bR - \bU^*) \bH^T.
\end{align*}
Using \eqref{eq:optcond}, we right multiply the above identity by $\bW \bG$ and obtain
\begin{equation*}
-\bP_{\hat \bU}^\bot \bP_{\bU^*}^\bot \bV \bW \bG = - \bP_{\hat \bU}^\bot (\hat \bU \bR - \bU^*) \bH^T \bW \bG.
\end{equation*}
To derive asymptotic properties for $\bP_{\hat \bU}^\bot (\hat \bU \bR - \bU^*)$,  we wish to make inversion of $\bH^T \bW \bG$ in the above equality. To that end, we make the following observation. Both $\bG$ and $\bH$ converge to the same limit in $\R^{m \times r}$, which is $\bG^* = [ \E \bv_1, \ldots, \E \bv_m]^T \bU^* \in \R^{m \times r}$, and it has full column rank under Assumption~\ref{ass:span}. By Assumption~\ref{ass:W}, $\bW \xrightarrow{p} \bW^* \succ \bzero$, so $\bG^T \bW \bH$ converges to $(\bG^*)^T \bW^* \bG^*$ in probability. These two facts also imply $(\bG^*)^T \bW^* \bG^* \succ \bzero$. Thus, with probability $1-o(1)$, $\bG^T \bW \bH$ is invertible, and its limit $(\bG^*)^T \bW^* \bG^*$ is also invertible. (Note that we have the same conclusion if we directly assumed $(\bG^*)^T \bW^* \bG^* \succ \bzero$ instead of the stronger condition $\bW^* \succ \bzero$.)

Therefore, with probability $1-o(1)$, we obtain
\begin{equation}\label{eq:useful1-0}
\bP_{\hat \bU}^\bot ( \hat \bU \bR - \bU^* )= \bP_{\hat \bU}^\bot \bP_{\bU^*}^\bot \bV \bW \bG  ( \bH^T \bW \bG)^{-1}.
\end{equation}
Note that $\vect(\bP_{\bU^*}^\bot \bV) = \bg(\bU^*)$. Using Lemma~\ref{lem:Kron} (vi), we express \eqref{eq:useful1-0} into the vector form:
\begin{equation*}
\vect( \bP_{\hat \bU}^\bot (\hat \bU \bR - \bU^*)) = ( (\bG^T \bW \bH)^{-1} \bG^T \bW \otimes \bP_{\hat \bU}^\bot )\, \bg(\bU^*)
\end{equation*}
Recall that $\bar \bS^* \in \R^{\bar m \times \bar m}$ is the covariance matrix of $ [\bff_1(\xx_i,y_i); \ldots; \bff_m(\xx_i, y_i)] \in \R^{\bar m}$ (concatenating all $\bff_\ell$ into a vector). Then, by the central limit theorem,
\begin{equation*}
\sqrt{n}\, \bg(\bU^*)  \xrightarrow{d} N(\bzero, \bar \bP_{\bU^*}^\bot \bar \bS^* \, \bar \bP_{\bU^*}^\bot), \quad \text{where}~\bar \bP_{\bU^*}^\bot = \bI_p \otimes \bP_{\bU^*}^\bot.
\end{equation*}
Note that the central limit theorem applies despite the covariance matrix of the asymptotic normal distribution being rank deficient. Moreover, since $(\bG^T \bW \bH)^{-1}\bG^T \bW\xrightarrow{p} [(\bG^*)^T \bW^* \bG^* ]^{-1} (\bG^*)^T \bW^*$ and $\bP_{\hat \bU}^\bot \xrightarrow{p} \bP_{\bU^*}^\bot$, we have
\begin{equation*}
(\bG^T \bW \bH)^{-1} \bG^T \bW \otimes \bP_{\hat \bU}^\bot\xrightarrow{p} [(\bG^*)^T \bW^* \bG^* ]^{-1} (\bG^*)^T \bW^*  \otimes \bP_{\bU^*}^\bot.
\end{equation*}

Using Slutsky's Theorem, Lemma~\ref{lem:Kron}, and the fact that $(\bP_{\bU^*}^\bot)^2 = \bP_{\bU^*}^\bot$,  we arrive at the desired asymptotic normality \eqref{asym:1}. In particular, when $\bar \bS^*$ has the block matrix form $\bar \bS^* = \bS^* \otimes \bI_p$, where $\bS^* \in \R^{m \times m}$ is invertible, then the asymptotic variance simplifies to  $(\bA \bS^* \bA^T) \otimes \bP_{\bU^*}^\bot$. From the standard GMM theory \citep{Hall05}, we deduce that the choice $\bW^* = (\bS^*)^{-1}$ is the optimal weighting matrix, in the sense that $\bA \bS^* \bA^T \succeq [(\bG^*)^T (\bS^*)^{-1} \bG^* ]^{-1}$ for any $\bW^* \succ \bzero$. It follows that \eqref{ineq:long} is true, and the equality in \eqref{ineq:long} can be attained at the choice $\bW^* = (\bS^*)^{-1}$. Note that rescaling $\bW^*$ does not change the asymptotic variance in \eqref{asym:1}. This completes the proof.

\end{proof}

\begin{proof}[\bf Proof of Theorem~\ref{thm:opt2}]
In the proof of Theorem~\ref{thm:normality}, we obtained \eqref{eq:useful1-0} under Assumption~\ref{ass:span} and \ref{ass:W}. Recall that $ \bP_{\hat \bU}^\bot (\hat \bU \bR - \bU^*) = - \bP_{\hat \bU}^\bot \bU^*$, so we have
\begin{equation*}
(\bU^*)^T \bP_{\hat \bU}^\bot \bU^* = [  \bP_{\hat \bU}^\bot (\hat \bU \bR - \bU^*)  ]^T  \bP_{\hat \bU}^\bot (\hat \bU \bR - \bU^*).
\end{equation*}
Denote $\bA :=  (\bG^T \bW \bH )^{-1} \bG^T \bW \in \R^{r \times m}$. Using \eqref{eq:useful1-0}, we have
\begin{equation}\label{eq:useful1}
(\bU^*)^T \bP_{\hat \bU}^\bot \bU^* = \bA\, \bV^T \bP_{\bU^*}^\bot\bP_{\hat \bU}^\bot \bP_{\bU^*}^\bot \bV\, \bA^T.
\end{equation}
Recall that, in Theorem~\ref{thm:normality}, we have defined $\bA^* := [(\bG^*)^T \bW^* \bG^* ]^{-1} (\bG^*)^T \bW^*$. Thus,
\begin{equation}\label{conv:AP}
\bA \xrightarrow{p} \bA^* ~~ \text{and} ~~  \bP_{\hat \bU}^\bot \xrightarrow{p} \bP_{\bU^*}^\bot.
\end{equation}
Moreover, by the central limit theorem, $\sqrt{n}\, \vect(\bP_{\bU^*}^\bot\bV) \xrightarrow{d} N(\bzero, \bar \bP_{\bU^*}^\bot \bar \bS^* \bar \bP_{\bU^*}^\bot)$ where, recall, $\bar \bS^*$ is the covariance matrix of the concatenated vector $\bar \bff_j$s. Let $\bXi = [\bxi_1,\ldots, \bxi_m] \in \R^{p \times m}$ be a random matrix such that $\vect(\bXi) \sim N(\bzero, \bar \bS^*)$. Then, $\vect(\bP_{\bU^*}^\bot \bXi) \sim N(\bzero, \bar \bP_{\bU^*}^\bot \bar \bS^* \bar \bP_{\bU^*}^\bot)$. By the (multivariate) Slutsky's theorem \citep[Thm.\ 2.7]{VanderVaart1998}, 
\begin{equation*}
n (\bU^*)^T \bP_{\hat \bU}^\bot \bU^*  \xrightarrow{d} \bA^* \bXi^T \bP_{\bU^*}^\bot \bXi ( \bA^*)^T.
\end{equation*}
We claim that $\bSigma^* = \E [\bXi^T \bP_{\bU^*}^\bot \bXi]$. In fact, for any $j,\ell \in [m]$, we have $\E [\bxi_j^T \bP_{\bU^*}^\bot \bxi_\ell] = \Tr\left( \cov[\bP_{\bU^*}^\bot \bxi_\ell, \bP_{\bU^*}^\bot \bxi_j ]  \right) $ and $\Sigma^*_{j\ell} = \E [\bff_{j}^T \bP_{\bU^*}^\bot \bff_{\ell}] = \Tr\left(  \cov[\bP_{\bU^*}^\bot \bff_\ell, \bP_{\bU^*}^\bot \bff_j ] \right)$. By definition, $\vect(\bXi)$ and $\vect(\bF)$ share the same covariance matrix $\bar \bS^*$, and therefore $\Sigma_{j\ell} = \E [\bxi_j^T \bP_{\bU^*}^\bot \bxi_\ell]$.
This leads to
\begin{equation}\label{eq:aePhi}
n \, \aE(\bPsi(\bW)) = n \, \aE((\bU^*)^T \bP_{\hat \bU}^\bot \bU^* ) = \bA^* \bSigma^* (\bA^*)^T.
\end{equation}
In particular, if $\bW$ is fixed at $\bW^*$ for all $n$, then we also have $n \, \aE(\bPsi(\bW^*)) = \bA^* \bSigma^* (\bA^*)^T$. With the choice $\bW^* = (\bSigma^*)^{-1}$, we get $\aE(\bPsi((\bSigma^*)^{-1})) = n^{-1} [(\bG^*)^T (\bSigma^*)^{-1} \bG^* ]^{-1} $. This is the smallest asymptotic expectation up to a $(1+o(1))$ factor, since for any $\bW^* \succ \bzero$,
\begin{equation}\label{ineq:optimal}
[(\bG^*)^T \bW^* \bG^* ]^{-1} (\bG^*)^T \bW^* \bSigma^* \bW^* \bG^* [(\bG^*)^T \bW^* \bG^* ]^{-1} \succeq [(\bG^*)^T (\bSigma^*)^{-1} \bG^* ]^{-1}.
\end{equation}
The above inequality is well known \citep{Hall05}. This readily implies the weighting matrix $\bW^* = (\bSigma^*)^{-1}$ is optimal. Moreover, assuming $\{ n \| (\bU^*)^T \bP_{\hat \bU}^\bot \bU^* \|_F  \}$ is uniformly integrable for $\bW^*$, we obtain
\begin{equation*}
\lim_{n \to \infty} n\, \E \big[(\bU^*)^T \bP_{\hat \bU}^\bot \bU^* \big] = n\, \aE(\bPsi(\bW^*))
\end{equation*}
by standard results (see, for example, Thm.\ 3.2.2 and Thm.\ 5.5.2 in \cite{Durrett2010}). Therefore, assuming uniform integrability, we obtain $ \E\,\bPsi((\bSigma^*)^{-1}) \preceq (1+o(1)) \E\, \bPsi(\bW^*)$. Finally, $\hat \bSigma\xrightarrow{p} \bSigma^*$, and under Assumption~\ref{ass:sigma} we have $(\hat \bSigma)^{-1} \xrightarrow{p} (\bSigma^*)^{-1}$, so the choice of $\bW = (\hat \bSigma)^{-1}$ satisfies Assumption~\ref{ass:W}, and our proof is complete.
\end{proof}

\begin{proof}[\bf Proof of Corollary~\ref{cor:procedure}]
We suppress the dependence on $\bW^*$ whenever there is no confusion. Let the singular value decomposition of $\hat \bU^T \bU^*$ be $\bU_0 \bSigma_0 \bV_0^T$, where $\bSigma_0 =  \diag\{\sigma_1,\ldots,\sigma_r\}$ is a diagonal matrix with singular values on its diagonal, and $\bU_0, \bV_0 \in O(r)$ are orthogonal matrices. Recall the definition of canonical angles $\theta_1,\ldots,\theta_r$ and $\sin \bTheta = \diag\{ \sin \theta_1,\ldots,\sin \theta_r\}$ in Section~\ref{sec:procedure}. Using $\cos \theta_k =\sigma_k$, we deduce
\begin{align*}
\Tr\big( (\bU^*)^T \bP_{\hat \bU}^\bot \bU^* \big) &= \Tr\big( \bI_r - \bV_0 \bSigma_0^T \bSigma_0 \bV_0^T \big) = r - \Tr\big(  \bSigma_0^T \bSigma_0 \big) \\
&= r - \sum_{k=1}^r \sigma_k^2 =  \sum_{k=1}^r \sin^2 \theta_k \\
&= \| \sin \bTheta \|_F^2.
\end{align*}
For any $\bW^* \succ \bzero$, by \eqref{eq:aePhi}, $n \cdot \aE( \bPsi(\bW^*))$ is the distributional limit of $n \, (\bU^*)^T \bP_{\hat \bU}^\bot \bU^*$. Taking the trace and using the above equality, we have $n \, \| \sin \bTheta(\bW^*) \|_F^2 \xrightarrow{d} n \, \Tr(\aE( \bPsi(\bW^*)))$, and thus $\aE(\| \sin \bTheta(\bW^*) \|_F^2) = \Tr(\aE( \bPsi(\bW^*)))$ up to a $1+o(1)$ factor. Since $\aE(\bPsi((\hat \bSigma)^{-1})) \preceq \aE(\bPsi(\bW^*))$ for any $\bW^* \succ \bzero$ by Theorem~\ref{thm:opt2}, we must have
\begin{equation*}
\Tr\big( \aE(\bPsi((\hat \bSigma)^{-1})) \big) \le  \Tr\big( \aE(\bPsi(\bW^*)) \big), \quad \forall \, \bW^* \succ \bzero.
\end{equation*}
Expressing this inequality equivalently in term of canonical angles, we have
\begin{equation*}
\aE(\| \sin \bTheta((\hat \bSigma)^{-1}) \|_F^2) \le \aE(\| \sin \bTheta(\bW^*) \|_F^2), \quad \forall \, \bW^* \succ \bzero.
\end{equation*}
Finally, using the equivalence \eqref{eq:mse}, we obtain the desired inequality.
\end{proof}

\subsection*{Proofs for Section~\ref{sec:dimest}}

\begin{proof}[\bf Proof of Theorem~\ref{thm:dimest}]
We use the asymptotic result in Lemma~\ref{lem:pert} to prove the theorem. Recall $\bP_{\bU^*}^\bot = \bI_p - \bU^* (\bU^*)^T$, which we write $\bP$ for shorthand. Also denote $\bV^* = \E \bV$, $\omega_n = n^{-1/2}$. We can rewrite $\bV \bW \bV^T$ as
\begin{align*}
\bV \bW \bV^T &= \bV^* \bW (\bV^*)^T + \omega_n\left[ \sqrt{n}\, (\bV - \bV^*) \bW (\bV^*)^T + \sqrt{n}\, \bV^* \bW (\bV - \bV^*)^T \right] \\
&+ \omega_n^2 n\, (\bV - \bV^*) \bW  (\bV - \bV^*)^T =: \bT + \omega_n \bT^{(1)} + \omega_n^2 \bT^{(2)}.
\end{align*}
Note each column of $\bV - \bV^*$ is a sample average of $ \bff_\ell(\xx_i, y_i) - \E \bff_\ell(\xx_i, y_i)$, so $\sqrt{n}\, (\bV - \bV^*)$ is of the order $O_P(1)$. Also, since $\bW \xrightarrow{p} \bW^* \succ \bzero$, we deduce that with probability $1 - o(1)$, the rank of $\bV^* \bW (\bV^*)^T$ is $r$ due to Assumption~\ref{ass:span}, and that the projection matrix associated with its null space is exactly $\bP$. Following a similar argument as in \cite{Li91} and \cite{Li92}, by Lemma~\ref{lem:pert}, we deduce\footnote{We think this argument is not rigorous in the cited papers, as Lemma~\ref{lem:pert} is an asymptotic result for fixed matrices, whereas we substitute $\bT^{(1)}, \bT^{(2)}$ by random matrices. This issue, however, can be easily resolved; see the next proof.}
\begin{align}\label{eq:lambdabar}
\bar \lambda := \sum_{j=r+1}^n \lambda_j(\bV \bW \bV^T) = \omega_n \lambda^{(1)} + \omega_n^2 \lambda^{(2)} + o_P(\omega_n^2),
\end{align}
where $\lambda^{(1)} = \Tr\left( \bT^{(1)} \bP\right)$ and $\lambda^{(2)} = \Tr\left( \bT^{(2)} \bP - \bT^{(1)} \bT^+\bT^{(1)} \bP\right) $.  Note that the column vectors of $\bV^*$ (namely $\E \bv_j$s) lie in the subspace $\cS$, so $(\bV^*)^T \bP= \bzero$.  This yields
\begin{align*}
&\Tr\left( (\bV - \bV^*) \bW (\bV^*)^T \bP \right) = 0,  \quad \text{and} \\
&\Tr\left( \bV^* \bW (\bV - \bV^*)^T \bP \right) =\Tr\left( \bP \bV^* \bW (\bV - \bV^*)^T \right) = 0.
\end{align*}
Adding the above two equalities, we have $\Tr\left( \bT^{(1)} \bP\right) = 0$, which implies $\lambda^{(1)} = 0$. Thus, the dominant term in the expansion of $\bar \lambda$ is $\omega_n^2 \lambda^{(2)}$. Now we simplify $\bar \lambda$:
\begin{align*}
\bar \lambda &= \omega_n^2 \Tr\left( \bT^{(2)} \bP \right) - \omega_n^2 \Tr\left( \bT^{(1)} \bT^+\bT^{(1)} \bP \right) \\
&= \omega_n^2 \Tr\left( \bT^{(2)} \bP \right) - \omega_n^2 \Tr\left(\bP \bT^{(1)} \bT^+\bT^{(1)} \bP \right) \\
&= \Tr \left( \bP(\bV - \bV^*)  \bW  (\bV - \bV^*)^T\bP \right) - \Tr \left( \bP(\bV - \bV^*)  \tilde \bW  (\bV - \bV^*)^T\bP \right),
\end{align*}
where $\tilde \bW:= \bW (\bV^*)^T \bT^+ \bV^* \bW$. In the second line, we used the identities $\bP^2 = \bP$ and $\Tr(\bA \bB) = \Tr(\bB \bA)$; and in the third line, we used the equalities
\begin{equation*}
\bT^{(1)} \bP = \sqrt{n}\, \bV^* \bW (\bV - \bV^*)^T \bP, \quad \bP \bT^{(1)} =  \sqrt{n}\, \bP (\bV - \bV^*) \bW (\bV^*)^T.
\end{equation*}


Next we denote $\bZ = (\bV - \bV^*)^T\bP = \bV^T \bP  \in \R^{m \times p}$, and simplify the above expression using Lemma~\ref{lem:Kron} (i$\mspace{1mu}\mathrm{x}$):
\begin{equation*}
\bar \lambda = \Tr\left( \bZ^T (\bW - \tilde \bW) \bZ \right) = (\vect(\bZ))^T \left[ \bI_p \otimes (\bW - \tilde \bW) \right] \vect(\bZ)
\end{equation*}
Note that $\bZ^T = \bP \bV$ and $\vect(\bZ^T) = \bg(\bU^*)$, so by the central limit theorem, $\sqrt{n}\,\vect(\bZ^T) \xrightarrow{d} N(\bzero, \bar \bSigma)$ where $\bar \bSigma = \bar \bP^*\bar \bS^* \bar \bP^*$ (recall $\bar \bP^* = \bI_p \otimes \bP$). Similarly, $\sqrt{n}\, \vect(\bZ) \xrightarrow{d} N(\bzero, \bar \bSigma')$, where $\bar \bSigma'$ is a row-wise and column-wise permuted version of $\bar \bSigma$ (Lemma~\ref{lem:Kron}  (vii)). Also, by Assumption~\ref{ass:W}, we have $\bW \xrightarrow{p} \bW^*$; moreover, we claim
\begin{equation*}
\tilde \bW \xrightarrow{p} \tilde \bW^*, \quad \text{where} \quad \tilde \bW^* =  \bW^* (\bV^*)^T \left[ \bV^* \bW^* (\bV^*)^T \right]^+ \bV^* \bW^*.
\end{equation*}
In fact, although the Moore-Penrose pseudoinverse is not a continuous map in general, we still have $[ \bV^* \bW (\bV^*)^T ]^+ \xrightarrow{p} [ \bV^* \bW^* (\bV^*)^T ]^+$, because with probability $1-o(1)$, the matrix $(\bV^*)^T \bW \bV^*$ has exactly rank $r$.
Thus, by the (multivariate) Slutsky's theorem,
\begin{equation}\label{asym:lambdabar}
n\bar \lambda \xrightarrow{d} \bxi^T (\bar \bSigma')^{1/2} \left[ \bI_p \otimes (\bW^* - \tilde \bW^*) \right] (\bar \bSigma')^{1/2} \bxi,
\end{equation}
where $\bxi \sim N(\bzero, \bI_{pm})$. This implies $\bar \lambda = O_P(1/n)$ and thus proves the first claim of Theorem~\ref{thm:dimest}. Moreover, assuming Assumption~\ref{ass:uniform} in addition, we have
\begin{equation*}
\bar \bSigma = (\bI_p \otimes \bP) (\bS^* \otimes \bI_p) (\bI_p \otimes \bP) =  \bS^* \otimes \bP,
\end{equation*}
due to Lemma~\ref{lem:Kron} (i). We can proceed to simplify the right-hand side (RHS) of \eqref{asym:lambdabar}. $\bar \bSigma' = \bP \otimes \bS^*$ by Lemma~\ref{lem:Kron} (vii) and (viii), and thus $(\bar \bSigma')^{1/2} = \bP \otimes (\bS^*)^{1/2}$ by Lemma~\ref{lem:Kron} (i). Therefore,
\begin{align}
\text{RHS of \eqref{asym:lambdabar}} &= \bxi^T \left[ \bP \otimes  \left( (\bS^*)^{1/2} ( \bW^* - \tilde \bW^*) (\bS^*)^{1/2} \right) \right] \bxi =: \bxi^T \bM\, \bxi. \label{eq:RHS}
\end{align}
By Lemma~\ref{lem:sigmaS}, $\bSigma^* = (p-r) \bS^*$, so with the choice $\bW^* = (\bSigma^*)^{-1}$, we derive $(\bS^*)^{1/2}\bW^*(\bS^*)^{1/2} = \frac{1}{p-r} \bI_m $. We also rewrite $(\bS^*)^{1/2} \tilde \bW^*(\bS^*)^{1/2}$ as
\begin{equation*}
(\bS^*)^{1/2} \tilde \bW^* (\bS^*)^{1/2} = \frac{1}{p-r} \, \bK \left[ \bK^T \bK \right]^+ \bK^T,
\end{equation*}
where $\bK := (\bS^*)^{-1/2} (\bV^*)^T \in \R^{m \times p}$. The matrix $\bK$ has rank $r$, since $(\bS^*)^{-1/2}$ has full rank and $\bV^*$ has  rank $r$. Therefore, by Lemma~\ref{lem:pseudo}, $(p-r)(\bS^*)^{1/2} \tilde \bW^* (\bS^*)^{1/2}$ is a projection matrix with rank $r$, and consequently, $(p-r)\,(\bS^*)^{1/2} ( \bW^* - \tilde \bW^*) (\bS^*)^{1/2} = \bI_m - (\bS^*)^{1/2} \tilde \bW^* (\bS^*)^{1/2}$ is a projection matrix with rank $m - r$.

It follows from Lemma~\ref{lem:Kron} (iv) and (v) that $(p-r)\bM$, where $\bM$ is defined in \eqref{eq:RHS}, is a projection matrix with rank $(p-r)(m-r)$. Finally, by Thm.\ 2.7 of \cite{Seb03}, we conclude that $(p-r)\bxi^T \bM \, \bxi$ follows a chi-squared distribution $\chi^2_{(p-r)(m-r)}$, which finishes the proof.
\end{proof}

\begin{proof}[\bf Proof of \eqref{eq:lambdabar}]
To derive the above asymptotic result rigorously, we use the notations in \cite{Kato66} and invoke (3.2), (3.3), (3.4) and (3.6) in Chap.\ Two of \cite{Kato66} to bound the residual term in the asymptotic expansion. Under Assumption~\ref{ass:span}, there is a constant gap between $\lambda_r(\bV^* \bW^* (\bV^*))$ and $\lambda_{r+1}(\bV^* \bW^* (\bV^*))$. Since $\bW \xrightarrow{p} \bW^*$, we deduce that with probability $1-o(1)$, the gap between $\lambda_r(\bV^* \bW (\bV^*))$ and $\lambda_{r+1}(\bV^* \bW (\bV^*))$ is bounded away from zero by a constant. Thus, choosing any circle with constant radius enclosing $\lambda=0$, we have with probability $1-o(1)$,  $\rho$ in (3.4) and $\max_{\zeta \in \Gamma}\| R(\zeta) \|$ and both bounded by a constant. Since $\| \bT^{(1)} \| = O_P(1)$, $\| \bT^{(2)} \| = O_P(2)$ and $\bT^{(n)}$ ($n\ge3$) are simply zero, (3.2) is satisfied with probability $1-o(1)$ for $|x| < r_n$ where $r_n>0$ is any vanishing sequence. This implies with probability $1-o(1)$, $r_0$ in (3.3) satisfies $r_0\ge r_n$. We fix the `$n$' in (3.6) by $2$ (not to confuse with our sample size $n$). Then, the upper bound in (3.6) is $O(n^{-3/2} / r_0^3) = O(n^{-1})$ with the choice, say, $r_n = n^{-1/10}$. Therefore, we conclude that the residual term in the second-order expansion of $\lambda(\omega_n)$ is $o_P(\omega_n^2)$.
\end{proof}

\begin{proof}[\bf Proof of Corollary~\ref{cor:dimest}]
Let us drop the subscript $n$ as usual and denote $\E \bV$ by $\bV^*$. First, we claim that there exists a constant $c>0$ such that $\lambda_r(\bV^* (\bSigma^*)^{-1} (\bV^* )^T) > c $. In fact, each column of $\bV^*$ lies in $\cS$, so we can rewrite $\bV^*$ as $\bU^* (\bG^*)^T$ where we recall $\bG^* = (\bV^*)^T \bU^* \in \R^{m \times r}$. Under Assumption~\ref{ass:span}, $\bG^*$ has full (column) rank, so $\bG^* \xx \neq \bzero$ for any $\xx \neq \bzero$, which implies invertibility of $(\bG^*)^T  (\bSigma^*)^{-1} \bG^*$. Since
\begin{equation*}
\bV^* (\bSigma^*)^{-1} (\bV^* )^T = \bU^* \left(  (\bG^*)^T  (\bSigma^*)^{-1} \bG^* \right)  (\bU^*)^T,
\end{equation*}
we deduce that top $r$ eigenvalues of $\bV^* (\bSigma^*)^{-1} (\bV^* )^T$ are nonzero, and thus bounded below by some constant $c>0$. Using this claim, together with the fact that $\bV \bW \bV^T \xrightarrow{p} \bV^* (\bSigma^*)^{-1} (\bV^* )^T$ and the condition $\tau_n = o(1)$, we obtain $\P( \hat r_{\tau} \ge r) \to 1$ as $n \to \infty$. On the other hand, Theorem~\ref{thm:dimest} implies that for all $j>r$, the eigenvalue $\lambda_j(\bV \bW \bV^T)$ is $O_P(n^{-1})$, so the condition $n\tau_n \to \infty$ ensures that for all $j$, $\lambda_j(\bV \bW \bV^T) \le \tau_n$ with probability $1-o(1)$. This leads to $\P( \hat r_{\tau} \le r) \to 1$, and thus completing the proof of the first part of the corollary. The second part follows similarly.
\end{proof}

\subsection*{Proofs for Section~\ref{sec:singular}}
The omit the proof of Corollary~\ref{cor:opt3}, which is almost identical to that of Corollary~\ref{cor:procedure}.
\begin{proof}[\bf Proof of Theorem~\ref{thm:opt3}]
Let us break the proof into several steps.\\
\textbf{(1)} We prove a result parallel to \eqref{ineq:optimal}. Fix any $\bW^* \in \cW$. We will prove that $(\bG^*)^T (\bSigma^*)^{+} \bG^*$ is invertible, and that
\begin{equation}\label{ineq:optimal3}
[(\bG^*)^T \bW^* \bG^* ]^{-1} (\bG^*)^T \bW^* \bSigma^* \bW^* \bG^* [(\bG^*)^T \bW^* \bG^* ]^{-1} \succeq [(\bG^*)^T (\bSigma^*)^{+} \bG^* ]^{-1}.
\end{equation}
This essentially replaces $(\bSigma^*)^{-1}$ in \eqref{ineq:optimal} by $(\bSigma^*)^{+}$, and is equivalent to
\begin{equation*}
(\bG^*)^T \bW^* \bSigma^* \bW^* \bG^*  \succeq (\bG^*)^T \bW^* \bG^* [(\bG^*)^T (\bSigma^*)^{+} \bG^* ]^{-1} (\bG^*)^T \bW^* \bG^*.
\end{equation*}
It suffices, therefore, to prove invertibility of $(\bG^*)^T (\bSigma^*)^{+} \bG^*$ and
\begin{equation*}
 \bSigma^*  \succeq  \bG^* [(\bG^*)^T (\bSigma^*)^{+} \bG^* ]^{-1} (\bG^*)^T.
\end{equation*}
The right-hand side above can be simplified under Assumption~\ref{ass:redundancy}. To do so, we write
\begin{equation}\label{eq:bG}
\bG^* = \bB  \left( \begin{array}{c} \bG_1^* \\ \bzero \end{array} \right), \quad \text{where } \bB := \left( \begin{array}{cc} \bI_{m_1} & \bzero \\ \bSigma_{21}^* ( \bSigma_{11}^*)^{-1} & \bI_{m_2}\end{array} \right) \in \R^{m \times m}.
\end{equation}
We also express $\bSigma^*$ using $\bB$:
\begin{equation*}
\bSigma^* = \bB  \left( \begin{array}{cc} \bSigma_{11}^* & \bzero \\ \bzero & \bSigma_{22}^* - \bSigma_{21}^* (\bSigma_{11}^*)^{-1} \bSigma_{12}^* \end{array} \right)  \bB^T.
\end{equation*}
We make two observations: (a) the matrix $\bB$ is invertible; (b) $\bSigma_{22}^* - \bSigma_{21}^* (\bSigma_{11}^*)^{-1} \bSigma_{12}^* \succeq \bzero$ due to $\bSigma^* \succeq \bzero$. This yields
\begin{equation*}
(\bSigma^*)^+ = (\bB^T)^{-1}  \left( \begin{array}{cc} (\bSigma_{11}^*)^{-1} & \bzero \\ \bzero & (\bSigma_{22}^* - \bSigma_{21}^* (\bSigma_{11}^*)^{-1} \bSigma_{12}^*)^+ \end{array} \right)    \bB^{-1}.
\end{equation*}
This equality can be verified against the definition of Moore-Penrose pseudoinverse. Note that under Assumption~\ref{ass:redundancy}, $\bSigma_{11}^*$ is invertible while $\bSigma_{22}^* - \bSigma_{21}^* (\bSigma_{11}^*)^{-1} \bSigma_{12}^*$ may not. Now
\begin{align*}
(\bG^*)^T (\bSigma^*)^{+} \bG^* &= ( (\bG_1^*)^T~ \bzero) \left( \begin{array}{cc} (\bSigma_{11}^*)^{-1} & \bzero \\ \bzero & (\bSigma_{22}^* - \bSigma_{21}^* (\bSigma_{11}^*)^{-1} \bSigma_{12}^*)^+ \end{array} \right) \left( \begin{array}{c} \bG_1^* \\ \bzero \end{array} \right) \\
&= (\bG^*_1)^T (\bSigma_{11}^*)^{-1}\bG^*_1.
\end{align*}
By Assumption~\ref{ass:span}, $\bG^*$ has full column rank, and thus $\bG^*_1$ also has, due to \eqref{eq:bG}. It follows that $(\bG^*_1)^T (\bSigma_{11}^*)^{-1}\bG^*_1$ is positive definite (thus invertible). This proves the invertibility of $(\bG^*)^T (\bSigma^*)^{+} \bG^*$. Moreover, to prove \eqref{eq:bG}, it is equivalent to show
\begin{equation*}
\bB \left( \begin{array}{cc} \bSigma_{11}^* & \bzero \\ \bzero & \bSigma_{22}^* - \bSigma_{21}^* (\bSigma_{11}^*)^{-1} \bSigma_{12}^* \end{array} \right)  \bB^T \succeq \bB  \left( \begin{array}{c} \bG_1^* \\ \bzero \end{array} \right) \left[(\bG^*_1)^T (\bSigma_{11}^*)^{-1}\bG^*_1\right]^{-1} ( (\bG_1^*)^T, \bzero)  \bB^T,
\end{equation*}
which is also equivalent to show
\begin{equation*}
\bSigma_{11}^* \succeq \bG_1^* \left[(\bG^*_1)^T (\bSigma_{11}^*)^{-1}\bG^*_1\right]^{-1} (\bG_1^*)^T.
\end{equation*}
In the case of nonsingular weighting matrices, a similar inequality is known and is the key to the proof of \eqref{ineq:optimal}. Our above derivation reduces the singular case to the nonsingular case (note $\bSigma^*$ may be singular but $\bSigma_{11}^*$ is nonsingular), so the desired inequality follows.

\textbf{(2)} We prove \eqref{ineq:opt3}. First we observe that \eqref{eq:useful1} still holds---see the sentence in parentheses before \eqref{eq:useful1-0}. This leads to the same expression \eqref{eq:useful1} as in the proof of Theorem~\ref{thm:opt2}. Likewise, we still have \eqref{conv:AP} and  \eqref{eq:aePhi}, because $(\bG^*)^T \bW^* \bG^*$ is invertible, and $\hat \bU$ (associated with $\bW^*$) is consistent up to rotation by Theorem~\ref{thm:consist2}.

Thus, for any $\bW^* \in \cW $ with $\bW \xrightarrow{p} \bW^*$, the expression for $\aE(\bPsi(\bW))$ is given by \eqref{eq:aePhi}. In particular, if $\bW \xrightarrow{p} (\bSigma^*)^+$, then we have
\begin{align*}
\aE(\bPsi(\bW)) &= \frac{1}{n} \bA^* \bSigma^* (\bA^*)^T \\
&= \frac{1}{n} [(\bG^*)^T (\bSigma^*)^+ \bG^* ]^{-1} (\bG^*)^T (\bSigma^*)^+ \bSigma^* (\bSigma^*)^+ \bG^* [(\bG^*)^T (\bSigma^*)^+ \bG^* ]^{-1} \\
&=  \frac{1}{n} [(\bG^*)^T (\bSigma^*)^+ \bG^* ]^{-1},
\end{align*}
where we used the identity $(\bSigma^*)^+ \bSigma^* (\bSigma^*)^+ = (\bSigma^*)^+$ by definition of pseudoinverse. This proves the two equalities in \eqref{ineq:opt3}. The inequality in \eqref{ineq:opt3} is proved in part (1).

\textbf{(3)} Now we prove the final claim of the theorem. Let the eigen-decomposition of $\bSigma^*$ be $\bSigma^* = \bU_0 \bLambda_0 \bU_0^T$, where $\bU_0$ has orthonormal columns, and $\bLambda_0$ is a diagonal matrix consisting of all nonzero eigenvalues (the size of $\bLambda_0$ is possibly smaller than that of $\bSigma^*$). A basic property of the Moore-Penrose pseudoinverse is its representation via the eigen-decomposition: $(\bSigma^*)^+ = \bU_0 \bLambda_0^{-1} \bU_0^T$ \citep{Gol13}.

By the central limit theorem, we have $\| \hat \bSigma - \bSigma^* \|_2 = O_P(n^{-1/2})$. Recall the eigen-decomposition of $\hat \bSigma$ and \eqref{def:generalinv}. Weyl's inequality implies $| \bar \lambda_j - \lambda_j^*| = O_P(n^{-1/2})$, where $ \lambda_j^* $ is the $j$th largest eigenvalue of $\bSigma^*$. If $ \lambda_j^* > 0 $, then $\P(\bar \lambda_j > \delta_n) = 1 - o(1)$ due to $\delta_n = o(1)$, and thus
\begin{equation}\label{ineq:barlambda}
| \psi(\bar \lambda_j) - (\lambda_j^*)^{-1} | = O_P(n^{-1/2}).
\end{equation}
If $ \lambda_j^* = 0 $, then $\P(\bar \lambda_j \le \delta_n) = 1 - o(1)$ due to $\sqrt{n}\, \delta_n \to \infty$, and therefore $\psi(\bar \lambda_j) = 0$ with probability $1-o(1)$. This proves consistency of $\psi(\bar \lambda_1), \ldots, \psi(\bar \lambda_m)$ in \eqref{def:generalinv}.

Next, let $\bU_0'$ be a submatrix of $\bU_0$ such that the columns of $\bU_0'$ correspond to the same eigenvalue of $\bSigma^*$ (taking into account eigenvalue multiplicity). And denote by $\bar \bU'$ the counterpart of $\bar \bU$. by Davis-Kahan's theorem \citep{DavKah70} and a $\sin\Theta$ formula \citep{Ste90}
\begin{equation}\label{ineq:barU}
\| \bar \bU' (\bar \bU')^T - \bU_0' (\bU_0')^T \|_2 = O(\| \hat \bSigma - \bSigma^* \|_2) = O_P(n^{-1/2}).
\end{equation}
We can write $(\bSigma^*)^+$ as a sum of the form $\lambda^{-1} \bU_0' (\bU_0')^T$, where $\lambda$ is any positive eigenvalue of $\bSigma^*$ and $\bU_0'$ corresponds to $\lambda$. We can do the same for $\bW$, and combine the bounds in \eqref{ineq:barlambda} and \eqref{ineq:barU} to deduce $\bW \xrightarrow{p} (\bSigma^*)^+$. This completes the proof.
\end{proof}

\subsection*{Proofs for Section~\ref{sec:augment}}

\begin{proof}[{\bf Proof of Theorem~\ref{thm:augment}}]
We shall denote by $\bM_n^r$ is the best rank-$r$ approximation of $n^{-1} \bX^T \bX - \sigma^2 \bI_p$ (i.e., in the eigen-decomposition, only keeping top $r$ terms with largest absolute eigenvalues). We also define
\begin{equation*}
\tilde \bU_n^0 := \tilde \bU_n^0(\kappa,\bW_n') = \eigen_r\left(\kappa \bM^r_n +  \bV_n \bW_n' \bV_n^T\right).
\end{equation*}
We will prove 
\begin{align}
\aE\left( \left\| \hat \bU_n^{\GMM} [\hat \bU_n^{\GMM}]^T - \bU^* (\bU^*)^T  \right\|_F^2  \right)& \le \aE\left( \left\| \tilde \bU^0_n [\tilde \bU^0_n]^T - \bU^* (\bU^*)^T  \right\|_F^2  \right) \notag \\
& = \aE\left( \left\| \tilde \bU_n \tilde \bU_n^T - \bU^* (\bU^*)^T  \right\|_F^2  \right). \label{ineq:augment}
\end{align}

Without loss of generality, we assume that $\bM$ is formed by the first $p$ columns of $\bV$. By an elementary property of pseudoinverse, we have $\bM^r = \bM (\bM^r)^+ \bM^T$. This identity can be proved by, for example, using the eigen-decomposition of $\bM$ and rewriting $\bM^r$ and $(\bM^r)^+$ \citep[Thm. 2.4.8, Chap. 5.5.2]{Gol13}.

{\bf (1)} First we prove the inequality in \eqref{ineq:augment}. Using the above identity of $\bM^r$, we can absorb the term $\kappa \bM^r$ into $\bV \bW' \bV^T$ as follows.
\begin{equation*}
\kappa \bM^r +  \bV \bW' \bV^T = \bV \left[ \left( 
\begin{array}{cc} \kappa(\bM^r)^+ & \bzero \\ \bzero & \bzero 
\end{array} \right)  + \bW' \right] \bV^T =: \bV \bW^{\mathrm{new}} \bV^T 
\end{equation*}
We claim that $(\bM^r)^+ \xrightarrow{p} (\E \bM)^+ =  \bU^* \bLambda^* (\bU^*)^T$, where $\bLambda^* \in \R^{r \times r}$ is a diagonal matrix with positive diagonal entries. This can be proved in a way similar to part (3) of the proof of Theorem~\ref{thm:opt3}. In fact, since $\| \bM - \E \bM \|_2 = O_P(n^{-1/2})$, by Weyl's inequality, the eigenvalues have convergence $\lambda_j(\bM) \xrightarrow{p} \lambda_j(\E \bM) $ for all $j \in [p]$, where $\lambda_j(\cdot)$ denotes the $j$th largest eigenvalue of a matrix. For $j \le r$, the assumption implies $\lambda_j(\E \bM) > 0$, so we also have convergence $[\lambda_j(\bM)]^{-1} \xrightarrow{p} [\lambda_j(\E \bM)]^{-1}$. Moreover, by Davis-Kahan's theorem \citep{DavKah70}, the corresponding eigenspace of $\lambda_j(\bM)$ also converges in probability to that of $\lambda_j(\E \bM)$, where $j \le r$. Therefore, $(\bM^r)^+ \xrightarrow{p} (\E \bM)^+$, and due to $\rank(\E \bM) = r$ and $\spann(\E \bM) = \spann(\bU^*)$, we can express $(\E \bM)^+$ as $\bU^* \bLambda^* (\bU^*)^T$.

Using this claim, we deduce that $\bW^{\mathrm{new}}$ converges in probability to a semidefinite matrix. Since $\bW' \xrightarrow{p} \bW^* \in \cW$, the limit of $\bW^{\mathrm{new}}$ must be also in $\cW$. Thus, we can treat $\bW^{\mathrm{new}}$ as the new weighting matrix $\bW'$ in Corollary~\ref{cor:opt3}, and the inequality in \eqref{ineq:augment} follows from the conclusion of Corollary~\ref{cor:opt3}.

{\bf (2)} Denote $\bY = \kappa\bM^r + \bV \bW' \bV^T$ and $\bDelta = \bM - \bM^r$. Note that $\bDelta = \sum_{j>r} \lambda_j(\bM) \bu_j(\bM) [\bu_j(\bM)]^T$, where $\bu_j(\bM)$ is the eigenvector of $\bM$ corresponding to $\lambda_j(\bM)$. By the assumption $\rank(\E \bM) = r$ and the analysis in part (1), we have $|\lambda_j(\bM)| = O_P(n^{-1/2})$ and $(\bU^*)^T \bu_j(\bM) \xrightarrow{p} \bzero$ for $j > r$. Applying the consistency result (Theorem~\ref{thm:consist1}) to $\tilde \bU^0$, we also obtain $(\tilde \bU^0)^T \bu_j(\bM) \xrightarrow{p} \bzero$.

Now, we view $\bDelta$ as a perturbation matrix added to $\bY$, and view $\tilde \bU$ as a resulting perturbed version of $\tilde \bU^0$. We use (the original version of) Davis-Kahan's theorem \citep{DavKah70} to obtain
\begin{equation*}
\| \tilde \bU (\tilde \bU)^T - \tilde \bU^0 (\tilde \bU^0)^T \|_F = O_P\left( \|  \bDelta  \tilde \bU^0 \|_F  \right). 
\end{equation*}
This form of Davis-Kahan's theorem has appeared in recent works \citep{Yu14, ZhoBou18}. Using $|\lambda_j(\bM)| = O_P(n^{-1/2})$ and $(\tilde \bU^0)^T \bu_j(\bM) \xrightarrow{p} \bzero$ we established, we obtain
\begin{equation}\label{ineq:smallerorder}
\| \tilde \bU (\tilde \bU)^T - \tilde \bU^0 (\tilde \bU^0)^T \|_F = o_P(n^{-1/2}).
\end{equation}
Note that Theorem~\ref{thm:opt3} implies that $\sqrt{n}\, \| \tilde \bU_n^0  [\tilde \bU_n^0]^T - \bU^* (\bU^*)^T  \|_F$ converges in distribution to a non-vanishing limit. Thus, from \eqref{ineq:smallerorder}, we deduce that $\sqrt{n}\, \| \tilde \bU_n \tilde \bU_n^T - \bU^* (\bU^*)^T  \|_F$ must also converge in distribution to the same limit, and therefore, the equality in \eqref{ineq:augment} is true.
\end{proof}

\footnotesize{
\bibliographystyle{ims}
\bibliography{ref2}

\begin{thebibliography}{56}
\expandafter\ifx\csname natexlab\endcsname\relax\def\natexlab#1{#1}\fi
\expandafter\ifx\csname url\endcsname\relax
  \def\url#1{\texttt{#1}}\fi
\expandafter\ifx\csname urlprefix\endcsname\relax\def\urlprefix{URL }\fi

\bibitem[{Ahn and Horenstein(2013)}]{AhnHor13}
\textsc{Ahn, S.~C.} and \textsc{Horenstein, A.~R.} (2013).
\newblock Eigenvalue ratio test for the number of factors.
\newblock \textit{Econometrica} \textbf{81} 1203--1227.

\bibitem[{Anandkumar et~al.(2014)Anandkumar, Ge, Hsu, Kakade and
  Telgarsky}]{AGHKT14}
\textsc{Anandkumar, A.}, \textsc{Ge, R.}, \textsc{Hsu, D.}, \textsc{Kakade,
  S.~M.} and \textsc{Telgarsky, M.} (2014).
\newblock Tensor decompositions for learning latent variable models.
\newblock \textit{Journal of Machine Learning Research} \textbf{15} 2773--2832.

\bibitem[{Anderson(1963)}]{Anderson63}
\textsc{Anderson, T.~W.} (1963).
\newblock Asymptotic theory for principal component analysis.
\newblock \textit{The Annals of Mathematical Statistics} \textbf{34} 122--148.

\bibitem[{Bai(2003)}]{bai2003inferential}
\textsc{Bai, J.} (2003).
\newblock Inferential theory for factor models of large dimensions.
\newblock \textit{Econometrica} \textbf{71} 135--171.

\bibitem[{Bai and Ng(2002)}]{BaiNg02}
\textsc{Bai, J.} and \textsc{Ng, S.} (2002).
\newblock Determining the number of factors in approximate factor models.
\newblock \textit{Econometrica} \textbf{70} 191--221.

\bibitem[{Bai and Ng(2013)}]{bai2013principal}
\textsc{Bai, J.} and \textsc{Ng, S.} (2013).
\newblock Principal components estimation and identification of static factors.
\newblock \textit{Journal of Econometrics} \textbf{176} 18--29.

\bibitem[{Balakrishnan et~al.(2017)Balakrishnan, Wainwright and
  Yu}]{BalWaiYu17}
\textsc{Balakrishnan, S.}, \textsc{Wainwright, M.~J.} and \textsc{Yu, B.}
  (2017).
\newblock Statistical guarantees for the em algorithm: From population to
  sample-based analysis.
\newblock \textit{The Annals of Statistics} \textbf{45} 77--120.

\bibitem[{Ben-Israel and Greville(2003)}]{Ben03}
\textsc{Ben-Israel, A.} and \textsc{Greville, T.~N.} (2003).
\newblock \textit{Generalized inverses: theory and applications}, vol.~15.
\newblock Springer Science \& Business Media.

\bibitem[{Breiman and Friedman(1985)}]{BreFri85}
\textsc{Breiman, L.} and \textsc{Friedman, J.~H.} (1985).
\newblock Estimating optimal transformations for multiple regression and
  correlation.
\newblock \textit{Journal of the American statistical Association} \textbf{80}
  580--598.

\bibitem[{Breusch et~al.(1999)Breusch, Qian, Schmidt and Wyhowski}]{Bre99}
\textsc{Breusch, T.}, \textsc{Qian, H.}, \textsc{Schmidt, P.} and
  \textsc{Wyhowski, D.} (1999).
\newblock Redundancy of moment conditions.
\newblock \textit{Journal of econometrics} \textbf{91} 89--111.

\bibitem[{Cand{\`e}s and Recht(2009)}]{CanRec09}
\textsc{Cand{\`e}s, E.~J.} and \textsc{Recht, B.} (2009).
\newblock Exact matrix completion via convex optimization.
\newblock \textit{Foundations of Computational mathematics} \textbf{9} 717.

\bibitem[{Cattell(1966)}]{Cat66}
\textsc{Cattell, R.~B.} (1966).
\newblock The scree test for the number of factors.
\newblock \textit{Multivariate behavioral research} \textbf{1} 245--276.

\bibitem[{Cohen and Shashua(2016)}]{CohSha16}
\textsc{Cohen, N.} and \textsc{Shashua, A.} (2016).
\newblock Convolutional rectifier networks as generalized tensor
  decompositions.
\newblock In \textit{International Conference on Machine Learning}.

\bibitem[{Connor and Korajczyk(1993)}]{ConKor93}
\textsc{Connor, G.} and \textsc{Korajczyk, R.~A.} (1993).
\newblock A test for the number of factors in an approximate factor model.
\newblock \textit{the Journal of Finance} \textbf{48} 1263--1291.

\bibitem[{Cook(1998)}]{Cook98}
\textsc{Cook, R.~D.} (1998).
\newblock Principal hessian directions revisited.
\newblock \textit{Journal of the American Statistical Association} \textbf{93}
  84--94.

\bibitem[{Davis and Kahan(1970)}]{DavKah70}
\textsc{Davis, C.} and \textsc{Kahan, W.~M.} (1970).
\newblock The rotation of eigenvectors by a perturbation. iii.
\newblock \textit{SIAM Journal on Numerical Analysis} \textbf{7} 1--46.

\bibitem[{Dempster et~al.(1977)Dempster, Laird and Rubin}]{DemLaiRub77}
\textsc{Dempster, A.~P.}, \textsc{Laird, N.~M.} and \textsc{Rubin, D.~B.}
  (1977).
\newblock Maximum likelihood from incomplete data via the {EM} algorithm.
\newblock \textit{Journal of the royal statistical society. Series B
  (methodological)}  1--38.

\bibitem[{Dheeru and Karra~Taniskidou(2017)}]{Dua:2017}
\textsc{Dheeru, D.} and \textsc{Karra~Taniskidou, E.} (2017).
\newblock {UCI} machine learning repository.
\newline\urlprefix\url{http://archive.ics.uci.edu/ml}

\bibitem[{Donald et~al.(2003)Donald, Imbens and Newey}]{Donald03}
\textsc{Donald, S.~G.}, \textsc{Imbens, G.~W.} and \textsc{Newey, W.~K.}
  (2003).
\newblock Empirical likelihood estimation and consistent tests with conditional
  moment restrictions.
\newblock \textit{Journal of Econometrics} \textbf{117} 55--93.

\bibitem[{Durrett(2010)}]{Durrett2010}
\textsc{Durrett, R.} (2010).
\newblock \textit{Probability: theory and examples}.
\newblock Cambridge university press.

\bibitem[{Edelman et~al.(1998)Edelman, Arias and Smith}]{EdeAriSmi98}
\textsc{Edelman, A.}, \textsc{Arias, T.~A.} and \textsc{Smith, S.~T.} (1998).
\newblock The geometry of algorithms with orthogonality constraints.
\newblock \textit{SIAM journal on Matrix Analysis and Applications} \textbf{20}
  303--353.

\bibitem[{Fan et~al.(2013)Fan, Liao and Mincheva}]{fan2013large}
\textsc{Fan, J.}, \textsc{Liao, Y.} and \textsc{Mincheva, M.} (2013).
\newblock Large covariance estimation by thresholding principal orthogonal
  complements.
\newblock \textit{Journal of the Royal Statistical Society: Series B
  (Statistical Methodology)} \textbf{75} 603--680.

\bibitem[{Fan et~al.(2016)Fan, Liao and Wang}]{fan2016projected}
\textsc{Fan, J.}, \textsc{Liao, Y.} and \textsc{Wang, W.} (2016).
\newblock Projected principal component analysis in factor models.
\newblock \textit{Annals of statistics} \textbf{44} 219.

\bibitem[{Fan et~al.(2017)Fan, Wang, Wang and Zhu}]{Fan17}
\textsc{Fan, J.}, \textsc{Wang, D.}, \textsc{Wang, K.} and \textsc{Zhu, Z.}
  (2017).
\newblock Distributed estimation of principal eigenspaces.
\newblock \textit{arXiv preprint arXiv:1702.06488} .

\bibitem[{Forni et~al.(2000)Forni, Hallin, Lippi and
  Reichlin}]{forni2000generalized}
\textsc{Forni, M.}, \textsc{Hallin, M.}, \textsc{Lippi, M.} and
  \textsc{Reichlin, L.} (2000).
\newblock The generalized dynamic-factor model: Identification and estimation.
\newblock \textit{Review of Economics and statistics} \textbf{82} 540--554.

\bibitem[{Forni et~al.(2017)Forni, Hallin, Lippi and
  Zaffaroni}]{forni2017dynamic}
\textsc{Forni, M.}, \textsc{Hallin, M.}, \textsc{Lippi, M.} and
  \textsc{Zaffaroni, P.} (2017).
\newblock Dynamic factor models with infinite-dimensional factor space:
  Asymptotic analysis.
\newblock \textit{Journal of Econometrics} \textbf{199} 74--92.

\bibitem[{Golub and Van~Loan(2013)}]{Gol13}
\textsc{Golub, G.~H.} and \textsc{Van~Loan, C.~F.} (2013).
\newblock \textit{Matrix Computations}, vol.~3.
\newblock JHU Press.

\bibitem[{Hall(2005)}]{Hall05}
\textsc{Hall, A.~R.} (2005).
\newblock \textit{Generalized method of moments}.
\newblock Oxford University Press.

\bibitem[{Hansen(1982)}]{Hansen82}
\textsc{Hansen, L.~P.} (1982).
\newblock Large sample properties of generalized method of moments estimators.
\newblock \textit{Econometrica: Journal of the Econometric Society}
  1029--1054.

\bibitem[{Horn and Johnson(1991)}]{Hor91}
\textsc{Horn, R.~A.} and \textsc{Johnson, C.~R.} (1991).
\newblock Topics in matrix analysis, 1991.
\newblock \textit{Cambridge University Presss, Cambridge} \textbf{37} 39.

\bibitem[{Hotelling(1933)}]{Hotelling33}
\textsc{Hotelling, H.} (1933).
\newblock Analysis of a complex of statistical variables into principal
  components.
\newblock \textit{Journal of educational psychology} \textbf{24} 417.

\bibitem[{Hurley(2012)}]{Hurley}
\textsc{Hurley, C.} (2012).
\newblock gclus: Clustering graphics.

\bibitem[{Jin et~al.(2016)Jin, Zhang, Balakrishnan, Wainwright and
  Jordan}]{Jin16}
\textsc{Jin, C.}, \textsc{Zhang, Y.}, \textsc{Balakrishnan, S.},
  \textsc{Wainwright, M.~J.} and \textsc{Jordan, M.~I.} (2016).
\newblock Local maxima in the likelihood of gaussian mixture models: Structural
  results and algorithmic consequences.
\newblock In \textit{Advances in Neural Information Processing Systems}.

\bibitem[{Kato(1966)}]{Kato66}
\textsc{Kato, T.} (1966).
\newblock \textit{Perturbation theory for linear operators}.
\newblock New York.

\bibitem[{Koenker and Machado(1999)}]{KoeMac99}
\textsc{Koenker, R.} and \textsc{Machado, J.~A.} (1999).
\newblock Gmm inference when the number of moment conditions is large.
\newblock \textit{Journal of Econometrics} \textbf{93} 327--344.

\bibitem[{Lam et~al.(2012)Lam, Yao et~al.}]{LamYao12}
\textsc{Lam, C.}, \textsc{Yao, Q.} \textsc{et~al.} (2012).
\newblock Factor modeling for high-dimensional time series: inference for the
  number of factors.
\newblock \textit{The Annals of Statistics} \textbf{40} 694--726.

\bibitem[{Lettau and Pelger(2017)}]{LetPel17}
\textsc{Lettau, M.} and \textsc{Pelger, M.} (2017).
\newblock Estimating latent asset-pricing factors.
\newblock Tech. rep.

\bibitem[{Li(2018)}]{Li18}
\textsc{Li, B.} (2018).
\newblock \textit{Sufficient Dimension Reduction: Methods and Applications with
  R}.
\newblock CRC Press.

\bibitem[{Li(1991)}]{Li91}
\textsc{Li, K.-C.} (1991).
\newblock Sliced inverse regression for dimension reduction.
\newblock \textit{Journal of the American Statistical Association} \textbf{86}
  316--327.

\bibitem[{Li(1992)}]{Li92}
\textsc{Li, K.-C.} (1992).
\newblock On principal hessian directions for data visualization and dimension
  reduction: Another application of stein's lemma.
\newblock \textit{Journal of the American Statistical Association} \textbf{87}
  1025--1039.

\bibitem[{Onatski(2010)}]{Ona10}
\textsc{Onatski, A.} (2010).
\newblock Determining the number of factors from empirical distribution of
  eigenvalues.
\newblock \textit{The Review of Economics and Statistics} \textbf{92}
  1004--1016.

\bibitem[{Pearson(1901)}]{Pearson01}
\textsc{Pearson, K.} (1901).
\newblock On lines and planes of closest fit to systems of point in space.
\newblock \textit{Philosophical Magazine} \textbf{2} 559--572.

\bibitem[{Seber and Lee(2003)}]{Seb03}
\textsc{Seber, G.~A.} and \textsc{Lee, A.~J.} (2003).
\newblock \textit{Linear regression analysis}, vol. 936.
\newblock John Wiley \& Sons.

\bibitem[{Sedghi et~al.(2016)Sedghi, Janzamin and Anandkumar}]{SedJanAna16}
\textsc{Sedghi, H.}, \textsc{Janzamin, M.} and \textsc{Anandkumar, A.} (2016).
\newblock Provable tensor methods for learning mixtures of generalized linear
  models.
\newblock In \textit{Artificial Intelligence and Statistics}.

\bibitem[{Shao(2003)}]{Shao03}
\textsc{Shao, J.} (2003).
\newblock \textit{Mathematical Statistics}.
\newblock Springer-Verlag New York.

\bibitem[{Stewart(1990)}]{Ste90}
\textsc{Stewart, G.~W.} (1990).
\newblock Matrix perturbation theory.

\bibitem[{Stewart(1998)}]{Ste98}
\textsc{Stewart, G.~W.} (1998).
\newblock Perturbation theory for the singular value decomposition.
\newblock Tech. rep.

\bibitem[{Sun et~al.(2014)Sun, Ioannidis and Montanari}]{SunIoaMon14}
\textsc{Sun, Y.}, \textsc{Ioannidis, S.} and \textsc{Montanari, A.} (2014).
\newblock Learning mixtures of linear classifiers.
\newblock In \textit{ICML}.

\bibitem[{Van~der Vaart(1998)}]{VanderVaart1998}
\textsc{Van~der Vaart, A.~W.} (1998).
\newblock \textit{Asymptotic statistics}, vol.~3.
\newblock Cambridge university press.

\bibitem[{Wang et~al.(2018)Wang, Luo, Li, Zhu, Shi and Osher}]{Wan18}
\textsc{Wang, B.}, \textsc{Luo, X.}, \textsc{Li, Z.}, \textsc{Zhu, W.},
  \textsc{Shi, Z.} and \textsc{Osher, S.~J.} (2018).
\newblock Deep learning with data dependent implicit activation function.
\newblock \textit{arXiv preprint arXiv:1802.00168} .

\bibitem[{Wu(1983)}]{Wu83}
\textsc{Wu, C.~J.} (1983).
\newblock On the convergence properties of the em algorithm.
\newblock \textit{The Annals of statistics}  95--103.

\bibitem[{Xu and Jordan(1996)}]{XuJor96}
\textsc{Xu, L.} and \textsc{Jordan, M.~I.} (1996).
\newblock On convergence properties of the em algorithm for gaussian mixtures.
\newblock \textit{Neural computation} \textbf{8} 129--151.

\bibitem[{Yi et~al.(2016)Yi, Caramanis and Sanghavi}]{YiCarSan16}
\textsc{Yi, X.}, \textsc{Caramanis, C.} and \textsc{Sanghavi, S.} (2016).
\newblock Solving a mixture of many random linear equations by tensor
  decomposition and alternating minimization.
\newblock \textit{arXiv preprint arXiv:1608.05749} .

\bibitem[{Yu et~al.(2014)Yu, Wang and Samworth}]{Yu14}
\textsc{Yu, Y.}, \textsc{Wang, T.} and \textsc{Samworth, R.~J.} (2014).
\newblock A useful variant of the {Davis--Kahan} theorem for statisticians.
\newblock \textit{Biometrika} \textbf{102} 315--323.

\bibitem[{Zhong and Boumal(2018)}]{ZhoBou18}
\textsc{Zhong, Y.} and \textsc{Boumal, N.} (2018).
\newblock Near-optimal bounds for phase synchronization.
\newblock \textit{SIAM Journal on Optimization} \textbf{28} 989--1016.

\bibitem[{Zou et~al.(2006)Zou, Hastie and Tibshirani}]{ZouHasTib06}
\textsc{Zou, H.}, \textsc{Hastie, T.} and \textsc{Tibshirani, R.} (2006).
\newblock Sparse principal component analysis.
\newblock \textit{Journal of computational and graphical statistics}
  \textbf{15} 265--286.

\end{thebibliography}
}

\end{document}